\newcommand\myshade{85}
\colorlet{mylinkcolor}{YellowOrange}
\colorlet{mycitecolor}{Aquamarine}
\colorlet{myurlcolor}{violet}
\renewcommand{\hat}{\widehat}
\renewcommand{\tilde}{\widetilde}
\newcommand{\bfm}[1]{\ensuremath{\boldsymbol{#1}}} 
\def\D{\mathrm{d}}
\def\bbone{\mathbbm{1}} 
     \def\EE{\mathbb{E}}
     \def\NN{\mathbb{N}}
     \def\PP{\mathbb{P}}
     \def\RR{\mathbb{R}}
\def\bx{\bfm x}     
\def\bz{\bfm z}
\def\calH{{\cal  H}}
\def\calN{{\cal  N}} 
 \def\cO{{\cal  O}}
\def\calS{{\cal  S}} 
\def\calT{{\cal  T}} \def\cT{{\cal  T}}
\def\calX{{\cal  X}}
\newcommand{\bfsym}[1]{\ensuremath{\boldsymbol{#1}}}
 \def\balpha{\bfsym \alpha}
 \def\bbeta{\bfsym \beta}
\providecommand{\abs}[1]{\left\lvert#1\right\rvert}
\providecommand{\paren}[1]{\left( #1 \right)}
\providecommand{\brackets}[1]{\left[ #1 \right]}
\providecommand{\braces}[1]{\left\{ #1 \right\}}
\providecommand{\floors}[1]{\lfloor #1 \rfloor}
\providecommand{\defeq}{:=}
\DeclarePairedDelimiterX{\infdivx}[2]{(}{)}{%
  #1 \; \delimsize\| \; #2%
}
\DeclareMathOperator{\argmax}{argmax}
\DeclareMathOperator{\Reg}{Reg}
\DeclareMathOperator{\reg}{reg}
\DeclareMathOperator{\rev}{rev}
\DeclareMathOperator{\Unif}{Unif}
\DeclareMathOperator{\DNN}{DNN}
\DeclareMathOperator{\TDNN}{TDNN}
\newcommand{\ignore}[1]{}{}
\newtheorem{definition}{Definition}
\newtheorem{assumption}[definition]{Assumption}
\newtheorem{lemma}[definition]{Lemma}
\newtheorem{theorem}[definition]{Theorem}
\theoremstyle{definition}
\newtheorem{remark}{Remark}
\definecolor{royalpurple}{rgb}{0.47, 0.32, 0.66}
\definecolor{greenfresh}{HTML}{00897B}
\definecolor{bluefresh}{HTML}{1E88E5}
\definecolor{redfresh}{HTML}{E53935}
\definecolor{royalpurple}{rgb}{0.47, 0.32, 0.66}
\def\beq{\begin{equation}}
\def\eeq{\end{equation}}
\def\bet{\begin{theorem}}
\def\eet{\end{theorem}}
\def\bel{\begin{lemma}}
\def\eel{\end{lemma}}
\def\eps{\varepsilon}
\def\cond{\;|\;}
\newcommand{\mybibsty}{chicago}
\newcommand{\mybib}{bib/pricing,bib/bandits,bib/nonpara}
\def\TTL{Dynamic Contextual Pricing with \\ Doubly Non-Parametric Random Utility Models}
\begin{document}
%
%

\newcommand{\blind}{1}

\if1\blind
{
\title{\bf \TTL}
\author{
Elynn Chen$^\flat$ \hspace{2ex}
Xi Chen$^\natural$\thanks{Corresponding author.} \hspace{2ex}
Lan Gao$^\sharp$ \hspace{2ex}
Jiayu Li$^\dag$ \\ \normalsize
\smallskip
$^{\flat,\natural,\dag}$Leonard N. Stern School of Business, New York University. \\ \normalsize
\smallskip
$^{\sharp}$ Haslam College of Business, University of Tennessee Knoxville.
}
\date{June 8, 2024}
\maketitle
} \fi

\if0\blind
{
\bigskip
\bigskip
\bigskip
\begin{center}
{\LARGE\bf \TTL}
\end{center}
\date{}
\medskip
} \fi

%
%

\setstretch{1.28}
\begin{abstract}
In the evolving landscape of digital commerce, adaptive dynamic pricing strategies are essential for gaining a competitive edge. This paper introduces novel {\em doubly nonparametric random utility models} that eschew traditional parametric assumptions used in estimating consumer demand's mean utility function and noise distribution. Existing nonparametric methods like multi-scale {\em Distributional Nearest Neighbors (DNN and TDNN)}, initially designed for offline regression, face challenges in dynamic online pricing due to design limitations, such as the indirect observability of utility-related variables and the absence of uniform convergence guarantees. We address these challenges with innovative population equations that facilitate nonparametric estimation within decision-making frameworks and establish new analytical results on the uniform convergence rates of DNN and TDNN, enhancing their applicability in dynamic environments.

Our theoretical analysis confirms that the statistical learning rates for the mean utility function and noise distribution are minimax optimal. We also derive a regret bound that illustrates the critical interaction between model dimensionality and noise distribution smoothness, deepening our understanding of dynamic pricing under varied market conditions. These contributions offer substantial theoretical insights and practical tools for implementing effective, data-driven pricing strategies, advancing the theoretical framework of pricing models and providing robust methodologies for navigating the complexities of modern markets.

\medskip
\noindent
{\it Keywords: }
Dynamic Contextual Pricing; Doubly Nonparametric Estimation; Distributional Nearest Neighbours; Random Utility Models; Regret Bound;
\end{abstract}

\setstretch{1.9}

\section{Introduction}  \label{sec:intro}

\noindent The key in dynamic pricing is to model and estimate consumer demand function accurately. In practice, consumer demand is influenced not only by the price of a product but also by specific customer and product information, such as browsing history, zip codes, and product features. In digital sales environments, the economic model of random utility augmented with a wealth of sales data can offer crucial insights into consumer behavior, particularly consumers' preferences and their reactions to different prices under varying contextual conditions. 
By implementing contextual pricing with random utility models, firms can not only gain deeper insights into the market, but also set dynamic prices tailored to individual customer profiles and product characteristics, potentially maximizing long-term revenue and gaining a competitive edge in the market.

Literature in dynamic contextual pricing often assumes a parametric form for either the mean utility (see, e.g., \cite{NEURIPS2019,fan2022policy,luo2024distribution}), which represents the average preference of consumers, or the market noise (see, e.g., \cite{javanmard2019dynamic,cohen2020feature,xu2021logarithmic}), which captures the unobservable influences on demand. However, relying on parametric estimation poses significant disadvantages, primarily the risk of model misspecification. Parametric models make strong assumptions about the underlying data structure, and if these assumptions do not hold, the model's predictions can be inaccurate, leading to suboptimal pricing strategies.

To address these challenges, our paper explores {\em contextual pricing with doubly nonparametric random utility models}. These models estimate both the mean utility function and the market noise's distribution in a nonparametric manner, avoiding the pitfalls of model misspecification. The learning and decision process in these models is particularly challenging as the quantities of interest, including consumers' true utility and the underlying market noise, are not directly observable. Sophisticated non-parametric estimation techniques designed for offline learning are not directly applicable, and they often lack the necessary theoretical guarantees for dynamic environments. We propose an innovative online approach to estimate and optimize under these models, overcoming the inherent challenges and offering a more robust framework for dynamic pricing.

\subsection{Doubly Nonparametric Random Utility Models} \label{sec:model}

\noindent The proposed {\em doubly nonparametric random utility model} are characterized by \eqref{eqn:nonpar-utility} and \eqref{eqn:y-logistic} without any parametric characterizations on $\mathring{\mu}(\bx_t)$ or $F(\cdot)$. It operates within a broad economic framework of random utility models.
In this paper, we consider the setting in which products are sold individually, with each sale resulting in a binary outcome indicating success or failure. At each decision point $t\in[T]$, a $d$-dimensional contextual covariate $\bx_t$ from $\RR^d$ is independently and identically drawn from a fixed, but a priori unknown, distribution. 
The expected market value of the product at time \( t \) is determined by the observed contextual covariate \( \bx_t \).
A {\em general random utility model} for the market value $v_t$ of the product is given by
\begin{equation} \label{eqn:nonpar-utility}
v_t = \mathring{\mu}(\bx_t) + \eps_t,
\end{equation}
where $\mathring{\mu}(\cdot)$ is the {\em unknown function} of the mean utility, and $\eps_t$ are i.i.d.~noises following an {\em unknown distribution} $F(\cdot)$ with $\EE [\eps_t] = 0$.
We assume that the market value $v_t$ is within an interval $[B_v, B - B_v]$ with constants $0 < B_v < B$. 
Given a posted price of $p_t$ for the product at time $t$, we observe $y_t\defeq\bbone(v_t\geq p_t)$ that indicates whether the a sale occurs ($y_1=1$) or not ($y_1=0$). 
The model is equivalent to the probabilistic model:
\begin{equation} \label{eqn:y-logistic}
    y_t = \begin{cases}
        0, &\;\text{with probability}\quad F(p_t-\mathring{\mu}(\bx_t)), \\
        1, &\;\text{with probability}\quad 1- F(p_t-\mathring{\mu}(\bx_t)).
    \end{cases}
\end{equation} 
Given a posted price $p_t$, the expected revenue at time $t$ conditioned on $\bx_t$ is
\begin{equation} \label{eqn:rev_t}
    \rev_t(p_t)\defeq p_t\cdot\paren{1 - F(p_t-\mathring{\mu}(\bx_t))}.
\end{equation}
The {\em doubly nonparametric contextual pricing} operates within a general economic framework of random utility models, which diverge significantly from approaches that directly model the revenue function.
For example, \cite{chen2021nonparametric} directly models the unknown expected revenue as $\rev_t(p_t)=p_t\cdot d(\bx_t, p_t)$ with $d(\bx_t, p_t)\in[0,1]$ representing the personalized demand function. 
\cite{bu2022context} further assumes that $d(\bx_t, p_t)$ is separable in terms of $p_t$ and $\bx_t$. 
In contrast, we utilize the random utility model where the unknown expected revenue is given by \eqref{eqn:rev_t} and the demand function is assumed to be $d(\bx_t, p_t) = 1 - F(p_t - \mathring{\mu}(\bx_t))$.  The finer granularity in demand decomposition allows for separate analyses of the mean utility function and noise distribution, thus providing more precise control over the estimation errors of $F$ and $\mathring{\mu}$ and improving their statistical guarantees.

In addition, the structure of the random utility model facilitates an advanced analysis of how regret is influenced quadratically by the difference between optimal and estimated prices, which are closely tied to the estimation accuracy of $F$ and $\mathring{\mu}$. 
This connection leads to improved regret bounds under various scenarios. For further details on these relationships and their implications, refer to the discussions in Remark \ref{rmk:why lower regret} in Section \ref{sec:regret}. 

The optimal price $p_t^*$ is the price that maximizes $\rev_t(p_t)$. 
For a policy $\pi$ that sets price $p_t$ for $t$ , its regret over the time horizon of $T$ is defined as
\begin{equation} 
\begin{aligned}
\Reg(T;\pi) &= \sum_{t = 1}^T \EE\brackets{ p_t^* \mathbbm{1}(v_t \geq p_t^*)-p_t\mathbbm{1}(v_t\geq p_t)} \\
&\equiv \sum_{t = 1}^T \brackets{\rev_t(p_t^*) - \rev_t(p_t)}.
\end{aligned}
\end{equation}
The goal of this paper is to design a pricing policy that minimizes $\Reg(T;\pi)$, or equivalently, maximizes the collected expected revenue $\sum_{t = 1}^T\rev_t(p_t)$.

\subsection{Main Contributions} \label{sec:novelty}

\noindent This paper distincts from existing literature by adopting a {\em fully nonparametric} approach to model both the mean utility function, $\mathring{\mu}(\cdot)$, and the noise distribution, $F(\cdot)$, in the context of dynamic pricing with random utility. This approach eliminates the risk of model misspecification and enhances the flexibility and accuracy of our pricing strategy, which is particularly beneficial in complex and evolving market environments. 
We employ state-of-the-art nonparametric methods, notably the Distributional Nearest Neighbors (DNN) and the Two-scale Distributed Nearest Neighbor (TDNN) techniques, which are known for their simplicity in implementation and minimal assumptions. These estimators not only provide the flexibility needed in dynamic pricing but also achieve optimal convergence rates, improving as we scale up the model's complexity and data fidelity.

However, existing DNN and TDNN methods, traditionally used for offline regression, are not directly applicable to the dynamic and online nature of pricing decisions due to their design limitations. Key challenges include the indirect observability of utility-related variables and the absence of robust convergence guarantees in such a dynamic context. To overcome these limitations, we employed two new population equations, \eqref{eqn:general} and \eqref{noise-dist}, which enable the nonparametric estimation of both $\mathring{\mu}(\cdot)$ and $F(\cdot)$ using only binary response data $y_t$. This innovative approach bypasses the need for direct measurements of variables like $v_t$ or $\epsilon_t$, enhancing the robustness and accuracy of our estimations, and thereby substantially reducing the regret associated with pricing decisions.

Our theoretical developments include a new uniform convergence rate in Theorem \ref{le-DNN-TDNN} for the DNN and TDNN estimators. 
Employing the elegant U-Statistic representation \citep{lee2019u} of DNN and TDNN, Theorem \ref{le-DNN-TDNN} establishes that the DNN and TDNN estimators achieve the {\em minimax optimal rate} of uniform convergence for nonparametric regression under the second-order smoothness condition and the fourth-order smoothness condition, respectively \citep{Stone1982}.

Moreover, our theoretical analyses reveal new insights into the interdependencies between the estimation accuracy for $\mathring{\mu}(\cdot)$ and $F(\cdot)$, and their impact on the overall performance of the pricing strategy within the {\em doubly nonparametric setting}. Firstly, our analysis shows that the regret bounds for our {\em doubly nonparametric} dynamic pricing model intricately depend on the trade-offs between the dimensionality of contextual variables ($d$) and the smoothness properties of the noise distribution ($m$), as detailed in the last row of Table \ref{tab:compare-non-par}. Secondly, the regret bounds for our methods outperform those of existing nonparametric models \citep{chen2021nonparametric,bu2022context}, offering more precise control over varying smoothness levels of $F$ and $\mathring{\mu}$, and achieving enhanced performance in comparable scenarios.

Specifically, when the revenue function exhibits $2$nd-order smoothness, (i.e. $k_a=k_b=2$, $m=2$), our approach using the DNN estimator achieves a regret bound of $\tilde{\cO}(T^{\frac{5}{7} \lor \frac{d + 4}{d + 8}})$, , which is tighter than those achieved by other methods. Similarly, when the revenue function exhibits $4$th-order smoothness,  (i.e. $k_a=k_b=4$, $m=4$), our method utilizing the TDNN estimator achieves the regret bound of $\tilde{\cO}(T^{\frac{3}{5} \lor \frac{d + 8}{d + 16}})$, again tighter than those offered by other models.

\begin{table}[htpb!]
\centering
\resizebox{1\textwidth}{!}{%
\begin{tabular}{c|c|c|l}
\hline
Paper & Revenue & Regret & Key Assumptions \\ \hline
\cite{slivkins2011contextual} & $f(p, \bx)$ & $\widetilde{\cO} (T^{\frac{d + 2}{d + 3}}) $ & $f$: Lipschitz continuous \\ \hline
\cite{chen2021nonparametric} & $f(p, \bx)$ & $\widetilde{\cO} (T^{\frac{d + 2}{d + 4}}) $ & \begin{tabular}[c]{@{}l@{}}$f$: Lipschitz continuous;\\ 
 $f_B(p)$: smooth and concave\end{tabular}   
\\ \hline
\cite{bu2022context} & $p(a(\bx) + b(p))$ & $\widetilde{\cO}(T^{\frac{3}{4}\lor  \frac{d + k_a}{d + 2 k_a}} )$  & \begin{tabular}[c]{@{}l@{}}$a$: $k_a$th-order smooth;\\ 
 $b$: $k_b$th-order smooth \end{tabular} \\ \hline
 \multirow{2}{*}{This work} & \multirow{2}{*}{$p(1 - F(p - \mathring{\mu}(\bx)) )$} & DNN: $\tilde{\cO}(T^{\frac{2m+1}{4m-1} \lor \frac{d + 4}{d + 8}})$ & \begin{tabular}[c]{@{}l@{}}$F$: $m$th-order smooth;\\ 
 $\mathring{\mu}$: 2nd-order smooth \end{tabular}   \\ \cline{3-4} 
 &  & TDNN: $\tilde{\cO}(T^{\frac{2m+1}{4m-1} \lor \frac{d + 8}{d + 16} \lor \frac{7}{13}})$ & \begin{tabular}[c]{@{}l@{}}$F$: $m$th-order smooth;\\ 
 $\mathring{\mu}$: $4$th-order smooth \end{tabular}   \\ \hline
\end{tabular}%
}
\caption{Comparison of the main results in this work and existing non-parametric dynamic pricing}
\label{tab:compare-non-par}
\end{table}

\subsection{Literature and Organization} \label{sec:literature}

\noindent We explore the extensive literature on contextual dynamic pricing, focusing on the underlying economic models and nonparametric techniques. Generally, this literature splits into two categories: contextual dynamic pricing under random choice models and direct modeling of demand or revenue functions. The latter is often treated as a special case of bandit problems where the reward function corresponds to the revenue function. Additionally, we review related work in dynamic pricing, including recent developments that incorporate inventory control, fairness, and privacy considerations. These areas could potentially integrate with our {\em doubly nonparametric contextual pricing} approach in future research directions.

\noindent
\textbf{Contextual dynamic pricing under random choice models.}  
Recently, there has been a growing body of literature on context-based dynamic pricing, where product and customer features are taken into account when modeling the demand curve or market evaluation.
We refer the reader to recent surveys \citep{den2015dynamic,ban2021personalized,chen2022primaldual,chen2022elements,chen2023data} for a comprehensive review on this stream of literature. The most relevant literature to the present paper is those that model the demand by {\it a binary random choice model} characterized by \eqref{eqn:nonpar-utility} and \eqref{eqn:y-logistic}. 
This group of literature can be categorized by the assumptions on the two unknown functions, namely the mean utility $\mathring{\mu}(\bx_t)$ and the CDF of the noise $F(\cdot)$. 
Assuming the CDF $F(\cdot)$ is a known priori, \cite{javanmard2019dynamic,cohen2020feature,xu2021logarithmic} all impose a linear parametric model on the expected market value of the product at time $t$, i.e. $\mathring{\mu}(\bx_t) = \balpha^{\top}\bx_t$ in \eqref{eqn:nonpar-utility}, where $\balpha $ is some unknown parameter, and aim to estimate the mean utility while optimizing the revenue. 
Alternatively, 
\cite{NEURIPS2019,fan2022policy,luo2024distribution} allow the noise distribution, i.e. the CDF $F(\cdot)$ in \eqref{eqn:y-logistic}, to be unknown. \citep{fan2022policy,luo2024distribution} assume an underlying linear mean utility $\mathring{\mu}(\bx_t)$ and estimate $F$ by kernel estimation and by the discretization method \citep{kleinberg2003value}, respectively. While \cite{NEURIPS2019} assume the market value \(v_t = \exp(\balpha^T \bx_t + \eps_t)\), where \( \eps_t \) has unknown distribution and propose the DEEP-C algorithm to optimize the regret. However, they all work under the settings where the mean utility is parametrically organized.

In contrast, we consider {\it doubly nonparametric} random utility models where neither $\mathring{\mu}(\cdot)$ or $F(\cdot)$ is imposed with parametric forms which is more flexible yet more challenging than considering only one non-parametric component. 
 
\medskip
\noindent
\textbf{Nonparametric contextual bandits and dynamic pricing as a special case.} The literature on bandit problems is vast. 
We refer the reader to \cite{lattimore2020bandit,slivkins2019introduction} for comprehensive reviews. 
The most relevant literature in bandits to the present paper is nonparametric contextual bandit problems. 
The general contextual bandit problems consider estimating and optimizing over a random reward $r_t$ whose expectation is $\EE[r_t \cond \bx_t, p] = f(\bx_t, p)$ where $\bx_t$ is the contextual information and $p$ is the action taken. 
The $K$-Multi-Arm contextual bandit considers discrete action $p \in [K]$ and employs model $f(\bx_t, p) = f^{(p)}(\bx_t)$ for finite action space.
While most bandit literature studies this problem with additional parametric models on $f^{(p)}(\bx_t)$, \cite{perchet2013multi,hu2022smooth} investigate the problem with nonparametric reward feedback under a general H\"{o}lder continuous assumption.

A stream of literature in contextual dynamic pricing models the demand function $d(\bx, p)$ or the revenue function $p\cdot d(\bx, p)$ directly \citep{qiang2016dynamic,ban2021personalized,chen2021nonparametric,Wang2021Uncertainty,bu2022context}. This stream of work can be categorized into contextual bandits with a specialized reward function $f(\bx_t, p) = p \cdot d(\bx, p)$ and continuous scalar action $p$. Under the general bandit setting, \cite{slivkins2011contextual} studies nonparametric bandits assuming that $f(\bx_t, p)$ is Lipschitz and establishes the optimal regret being $\tilde{\cO}(T^{\frac{d+2}{d+3}})$ when $p$ is a scalar\footnote{\cite{slivkins2011contextual} works with a general setting where an action $p$ can be a vector.}. \cite{NEURIPS2018} exploit a special structure where the demand function is modeled as a strictly binary feedback $\mathbf{1}\{p \leq \mathring{\mu}(\bx_t)\}
$, where $\mathring{\mu}(\bx_t)$ is Lipschitz, and develops the regret bound of  $\tilde{\cO}(T^{\frac{d}{d+1}})$. 
\cite{guan2018nonparametric} apply $K$-Nearest Neighbour to nonparametric stochastic contextual bandits attaining a regret of { $\tilde{\cO}(T^{\frac{d+1}{d+2}})$}.
Under the dynamic pricing setting, \cite{chen2021nonparametric} estimate $f(\bx, p) = p \cdot d(\bx, p)$ directly.
However, they do not impose any functional structure that is specific to the dynamic pricing problem. Assuming that the revenue function is Lipschitz continuous and locally concave, the authors prove that the optimal regret is $\tilde{\cO}(T^{\frac{d+2}{d+4}})$ where $d$ is the dimension of $\bx$. \cite{nambiar2019dynamic} study the demand function \(bp + g(x)\) motivated by model misspecification and propose a random price shock (\textit{RPS}) algorithm to estimate price elasticity. 
They establish the optimal regret of order \(\tilde{\cO}(\sqrt{T})\) compared to a clairvoyant who knows the best linear approximation to \(g(x)\).
\cite{bu2022context} further explore the separable structure of the demand function in the sense that $d(\bx, p) = a(\bx) + b(p)$ where both function $a(\cdot)$ and $b(\cdot)$ are unknown. 
This paper adopts the random utility model for demand characterization and is very different from these works in terms of aggregate demand modeling, algorithm development, and regret analysis.

\medskip
\noindent
\textbf{Contextual dynamic pricing with related topics.} There has been a wide range of literature studying contextual dynamic pricing with additional interesting topics including inventory constraints, fairness, privacy protection and more. We refer the readers to some recent works that could potentially incorporate our method in future research. In terms of joint pricing and inventory control problem, \cite{chen2021jointpricing} develop learning algorithm with censored demand and approximates $f(d_t, p_t, y_t)$ nonparametrically, where $y_t$ represents inventory levels. \cite{feng2023principro} further explore the impact of price protection guarantee, where the customers who purchased a product are endowed the right to receive a refund if the seller lowers the price during the price protection period. \cite{li2023dynamic} take the inventory constraints into consideration where the inventory level B is given at the beginning and cannot be replenished. 
All those studies develop learning algorithms to estimate the underlying unknown demand functions, where our method could be seamlessly integrated. In terms of fairness, \cite{Maestre2018Reinforcement}
consider the fairness in price, they introduce \textit{Jain's index} to measures the fairness in terms of the average price across groups, while \cite{Cohen2021Fairness} estimate unknown demand function under both price fairness and time fairness. 

Additional constraints or addition conditions of related interest in dynamic pricing are explored in the following works. \cite{Chen2021Privacy} adopt the fundamental framework of differential privacy when implementing demand learning, and develop privacy-preserving dynamic pricing, which tries to maximize the retailer revenue while avoiding leakage of individual customer’s information.
\cite{keskin2017chasing} adopt a linear underlying demand model but assumes the unknown parameters can change over time. \cite{vandenBoer2022dynamic} take reference-price effects into account and learn demand uncertainty by adding the reference-price updating mechanism with unknown parameters. \cite{bastani2021meta} leverage a potentially shared structure of the demand function across multiple products and conduct a meta dynamic pricing algorithm to make pricing more efficiently.

\medskip
\noindent
\textit{Organization.}
Section \ref{sec:estimation} presents the estimation and decision framework. 
Section \ref{sec:theory} establishes theoretical guarantees on statistical convergence of learning as well as regret upper bound on decision. 
Section \ref{sec:simul} and \ref{sec:real} conduct empirical studies on synthetic and real data sets.  

\section{Doubly Non-Parametric Learning and Decision} \label{sec:estimation}

\noindent We propose an Algorithm \ref{algo:non-par-pricing} which outputs the empirical optimal policy for contextual dynamic pricing and balances the exploration-exploitation trade-off. 
The horizon of $T$ periods is divided into episodes with increasing length. Specifically, the $k$th episode has length $n_k = 2^{k-1} n_0$, where $k \geq 1$ and $n_0$ is the minimal episode length. 
For each $k \geq 1$, the $k$th episode is further divided into exploration and commit (or exploitation) phases, which are denoted by $\cT_{k,exp}$ and $\cT_{k,com}$, respectively. 
In the exploration phase $\cT_{k,exp}$, we collect sample $(x_t, y_t)$ through randomly setting price $p_t \stackrel{d}{\sim} \Unif(0, B)$ and then estimate both $\mathring{\mu}(\cdot)$ or $F(\cdot)$ nonparametrically using the collected sample $\{ (x_t, y_t)\}_{t \in \cT_{k,exp}}$. 
In the exploitation phase $\cT_{k,com}$, we proceed to
formulate a data-driven optimal policy leveraging the estimated mean utility and noise distribution learned in the exploration phase.
The sizes of the exploration phase of uniform random pricing and the exploitation phase of greedy pricing based on estimation can be determined in theoretical analysis to minimize the total regret. 
The subsequent subsections introduce our proposed approach for the doubly non-parametric estimation of $\mathring{\mu}(\cdot)$ and $F(\cdot)$. 

\begin{algorithm}[!ht]
\caption{Doubly Non-Parametric Contextual Pricing} \label{algo:non-par-pricing}
\KwInput{Upper bound $B> 0$ for the market values, and minimal episode length $n_0$.}

Calculate the length of the $k$-th episode $n_k = 2^{k-1} n_0$, and the length of $k$-th exploration episode $n_{k, exp} < n_k$.  

Let $\calT_{k, exp} := \{n_k - n_0 +1, \cdots, n_k - n_0 + n_{k, exp} \}$ and $\calT_{k, com} :=\{ n_k - n_0 + n_{k, exp} + 1, \cdots, n_{k+1}- n_0 \}$ be the index sets of the exploration phase and the exploitation phase of the $k$-th episode, respectively. 

\tcc{******* Episodic Learning and Decision *******}

\For{\rm{each episode $k = 1, 2, \ldots$}}{

  \tcc{******* Exploration on $\calT_{k, exp}$ *******}
  \For{$t \in \calT_{k, exp}$}{
    Offer price $p_t \sim U(0, B)$ and collect data $\{(\bx_t, y_t)\}_{t \in \calT_{k, exp}}$;
  }
  
  \tcc{******* Estimation *******}

  Using samples $\{(\bx_t, y_t)\}_{t \in \calT_{k, exp}}$ to fit $\hat{\mu}_k(\bx_t)$ for $t \in \calT_{k, exp}$ according to DNN \eqref{eqn:dnn-L-stat} or TDNN \eqref{eqn:tdnn}. 

  Update estimates of the error distribution $\hat{F}_k(u)$ by the Nadaraya-Watson kernel estimator \eqref{eqn:F-est}.

  \tcc{******* Commitment on $\calT_{k, com}$ *******}

  Update first derivative by $\hat{F}_k^{(1)}(u)$ according to \eqref{eqn:F(1)-est}. 
  
  Update estimate of $\hat{h}_k (v)$ according to \eqref{eqn:func-h-hat}. 

  \For{$t \in \calT_{k, com}$}{
    Predict $\hat{\mu}_k(\bx_t)$ according to DNN \eqref{eqn:dnn-L-stat} or TDNN \eqref{eqn:tdnn}. 
   
    Set offer price  
      \begin{equation*}
        \hat{p}_t = \min \{ \max \{\hat{h}_k (\hat{\mu}_k(\bx_t)), 0 \},  B\}, 
     \end{equation*}
   where $\hat{h}_k(\cdot)$ is obtained by \eqref{eqn:func-h-hat}. 
  }   
}

\KwOutput{Offered price $p_t$, $t \geq 1$}
\end{algorithm}

\subsection{Mean Utility Function Estimation}
\label{sec:mean-utility}
\noindent To address the challenge of not directly observing random utility $v_t$, we observe the following property for both single-product dynamic pricing. 
Specifically, if a uniform random policy $p_t \sim \Unif(0, B)$ with $B$ being the upper bound for the market value is adopted, we can recover the mean utility function in an average sense. Denote by $g_t := B y_t$. 
It can be observed that given the contextual covariate $\bx_t$, $g_t$ satisfies
\begin{equation} 
\begin{aligned}
\label{eqn:general}
\EE[g_t \,| \,\bx_t] & = \EE[B y_t \,| \,\bx_t] = \EE_{\eps_t}\brackets{ \EE_{p_t}[ B y_t \,| \,\bx_t, \eps_t ] \cond \bx_t } = B \EE_{\eps_t} \Big[\frac{ \mathring{\mu}(\bx_t) + \eps_t } {B} \, \Big| \,\bx_t \Big] = \mathring{\mu}(\bx_t).
\end{aligned}
\end{equation} 
Therefore, if the seller randomly sets i.i.d. price $p_t  {\sim} \Unif(0, B)$ for each period $t$ in the exploration phase $\cT_{k, exp}$, then the mean utility function $\mathring{\mu}(\bx_t)$ can be estimated by regressing $g_t$ on $\bx_t$ based on observations $\bz_t \defeq \{(\bx_t, g_t) \}_{t \in \cT_{k, exp}}$.

\paragraph{Distributional Nearest Neighbors.} 
To non-parametrically estimate the mean utility function $\mathring{\mu}(\cdot)$, we apply the method of Distributional Nearest Neighbors (DNN, \cite{steele2009exact,biau2010rate}) which is flexible and applicable even with a limited amount of data. 
DNN is a variation of $K$-Nearest Neighbours ($K$-NN) and automatically assigns monotonic weights to the nearest neighbors in a distributional fashion on the entire sample. 
It incorporates the idea of bagging by averaging all 1-NN estimators, each constructed from randomly subsampling  $s$ observations without replacement. Here $s$ is the subsampling scale diverging with the total sample size.

Suppose a sample $ \{(\bx_t, y_t) \}_{t \in \cT_{k, exp}}$ of size $n_{k,exp}$ is collected in the exploration phase of the $k$th episode. Since $g_t := B y_t$ satisfies \eqref{eqn:general}, we build our estimator directly with $\{ (\bx_t, g_t) \}_{t \in \cT_{k, exp}}$. 
Considering a new contextual covariate observation $\bx$, our goal is to fit a non-parametric mean function from the sample to estimate the underlying truth $\mathring{\mu}(\bx)$. We reorder the observations $\{ (\bx_t, g_t) \}_{t \in \cT_{k, exp}}$ according to $\bx_t$'s Euclidean distance to $\bx$ and denote the reordered sample as $\{(\bx_{(1)}, g_{(1)}), \ldots, (\bx_{(n_{k, exp})}, g_{(n_{k, exp})}) \}$ such that
\begin{equation} \label{new-distance-overall}
   \|\bx_{(1)}-\bx\| \le \|\bx_{(2)}-\bx\| \le \cdots \le \| \bx_{(n_{k, exp})}-\bx\|,
\end{equation}
where $\|\cdot\|$ denotes the Euclidean norm, and the ties are broken by assigning the smallest rank to the observation with the smallest natural index.
$g_{(1)}$ is defined as the $1$-NN estimator of $\mathring{\mu}(\bx)$ based on the sample $\{(\bx_t, g_t) \}_{t \in \cT_{k, exp}}$. 
For simpler notation, denote $\bz_t := (\bx_t, g_t)$ for $1 \leq t\leq n_{k, \exp}$.
For $1 \leq s \leq n_{k, exp}$, the DNN estimator $\hat\mu_k^{\DNN}(\bx; s)$ with subsampling scale $s$ for estimating $\mathring\mu(\bx)$ is formally defined as the average over all $1$-NN estimators based on the $s$-scale subsamples, that is, 
\begin{equation} \label{eqn:dnn-u-stat}
\hat\mu_k^{\DNN}(\bx; s) = {n_{k, exp} \choose s}^{-1}\underset{1\le i_1\le\cdots \\
\le i_s\le n_{k, exp}}{\sum} g_{(1)}(\bz_{i_1}, \cdots, \bz_{i_s}),
\end{equation}
where $g_{(1)}(\bz_{i_1}, \cdots, \bz_{i_s})$ is the $1$-NN estimator of $\mathring{\mu}(\bx)$ based on the subsample $(\bz_{i_1}, \cdots, \bz_{i_s})$. 
Specifically, with a abuse of notation, $g_{(1)}(\bz_{i_1}, \cdots, \bz_{i_s}) := g_{(1)}$ from the reordered sample $\{(\bx_{(1)}, g_{(1)}), \cdots, (\bx_{(s)}, g_{(s)})\}$ of the subsample $\{(\bx_{i_1}, g_{i_1}), \ldots,  (\bx_{i_s}, g_{i_s})\}$ such that
\begin{equation}
\| \bx_{(1)}-\bx\| \le \|\bx_{(2)}-\bx\| \le \cdots \le \|\bx_{(s)}-\bx\|. 
\end{equation}
The subsampling scale $s$ will be chosen to balance the trade-off between bias and variance. 
In addition, it is shown in \cite{steele2009exact} that the DNN estimator also admits an equivalent representation of a weighted nearest neighbor that
\begin{equation}\label{eqn:dnn-L-stat}
\hat\mu_k^{\DNN}(\bx; s) = {n_{k, exp} \choose s }^{-1} \sum_{i = 1}^{ n_{k, exp} - s - 1}  {n_{k, exp} - i  \choose s - 1} \, g_{(i)}, 
\end{equation}
where $\{(\bx_{(1)}, g_{(1)}), \ldots, (\bx_{(n_{k, exp})}, g_{(n_{k, exp})}) \}$ is the overall ordered sample according to the ordered distance in \eqref{new-distance-overall}.

A nice feature of the above DNN estimator is that the weights on the nearest neighbors are characterized exclusively by the total sample size $n_{k, exp}$ and the subsampling scale $s$. 
Therefore, the equivalent representation in \eqref{eqn:dnn-L-stat} can be exploited for easy implementation, while the representation in \eqref{eqn:dnn-u-stat} is crucial in theoretical analysis for the uniform convergence rate of the estimator. 

\paragraph{Two-scale Distributional Nearest Neighbour.} 
When the mean utility function and density function of covariates $\bx_t$ have a higher degree of smoothness, two-scale DNN (TDNN, \cite{demirkaya2022optimal}) can be applied to eliminate the first-order bias of the DNN estimator by aggregating two DNN estimators with distinct subsampling scales $s_1$ and $s_2$. Specifically, a TDNN estimator of the mean utility function $\mathring{\mu} (\bx)$ is given by 
\begin{equation} \label{eqn:tdnn}
\hat{\mu}_k^{\TDNN} (\bx; s_1, s_2)= \alpha_1 \hat{\mu}_k^{\DNN} (\bx;s_1) + \alpha_2 \hat{\mu}_k^{\DNN} (\bx; s_2),
\end{equation}
where $\alpha_1$ and $\alpha_2$ satisfy $\alpha_1 + \alpha_2 = 1$ and $\alpha_1 s_1^{-2/d} + \alpha_2 s_2^{-2/d} = 0$ (this equation serves to remove the first-order bias). From these two equations, we derive the coefficients $\alpha_1 = 1/(1 - (s_1/s_2)^{-2/d})$ and $\alpha_2 = - (s_1 / s_2)^{-2/d} / (1 - (s_1/s_2)^{-2/d})$. 

In general, if the mean utility function $\mathring{\mu}(\bx)$ and density function of covariates $\bx_t$ are ($2l$)-times differentiable for an integer $l \geq 2$, we can construct an $l$-scale DNN estimator to eliminate the bias up to the $(l-1)$-th order by constructing a linear combination of $l$ DNN estimators with distinct subsampling scales, utilizing a similar technique in \eqref{eqn:tdnn}. 

\subsection{Utility Noise Distribution Estimation}

\noindent To non-parametrically estimate the cumulative distribution function $ F(\cdot) $ of the random noise $\eps_t$, the key observation is that it can be expresses through a conditional expectation:
\begin{equation} \label{noise-dist}
\begin{aligned}
F(z) &= \mathbb{P} ( \eps_t \leq z ) = \EE \big[ \mathbb{P} \big(  \eps_t \leq p_t - \mathring{\mu}(\bx_t) \,|\, p_t - \mathring{\mu}(\bx_t) = z \big) \big] \\
& = \EE \big[ \mathbb{P} \big(  v_t  \leq p_t   \,|\, p_t - \mathring{\mu}(\bx_t) = z \big) \big]  \\
& = 1 - \EE [y_t | p_t - \mathring{\mu} (\bx_t) = z] .
\end{aligned}
\end{equation}
Thus, the Nadaraya-Watson kernel regression estimator can be applied to estimate $F(\cdot)$. 
Given sample $\{\bx_t, y_t, p_t\}_{t\in \calT_{k, exp}}$ collected in the exploration phase and any estimated utility function $\mu(\cdot)$,  we define
\begin{equation}\label{eqn:aux-def-a-xi}
\begin{aligned}
\hat a_k(z; \mu) & \defeq \abs{n_{k, exp} b_k}^{-1}\sum_{t \in \calT_{k, exp}} K \paren{\frac{p_t - \mu(\bx_t) - z}{b_k}} y_t, \\
\hat \xi_k(z; \mu) & \defeq \abs{n_{k, exp} b_k}^{-1}\sum_{t \in \calT_{k, exp}} K\paren{\frac{p_t - \mu(\bx_t) - z }{b_k}},
\end{aligned}
\end{equation}
where $K(\cdot)$ is a chosen $m$-th order kernel function and $b_k$ is the bandwidth that will be chosen appropriately later. 
Based on the sample $\{\bx_t, y_t, p_t\}_{t\in \calT_{k, exp}}$ and any estimated utility function $\mu(\cdot)$,  we apply the Nadaraya-Watson kernel estimator of $F(z)$ defined by 
\begin{equation} \label{eqn:F-est-any-mu}
\hat{F}_k(z; \mu) = 1 - \frac{\hat a_k(z; \mu)}{\hat \xi_k(z; \mu)}. 
\end{equation}

In Section \ref{sec:mean-utility}, we have introduced the DNN and TDNN estimators of the mean utility function. For instance, if the DNN estimator $\hat{\mu}_k^{\DNN} (\bx_t; s)$ is applied for each period $t \in \calT_{k, exp}$ based on the collected sample $\{\bx_t, y_t\}_{t\in \calT_{k, exp}}$, the estimator of the distribution $F(z)$ in \eqref{eqn:F-est-any-mu} can be written as 
\begin{equation} \label{eqn:F-est}
\hat{F}_k\paren{z; \hat{\mu}_k^{\DNN}} =  1 - \frac{\hat a_k(z; \hat{\mu}_k^{\DNN})}{\hat \xi_k(z; \hat{\mu}_k^{DNN})}. 
\end{equation}
Analogously, the Nadaraya-Watson estimator of $F(z)$ based on the TDNN estimator of the mean utility function is given by 
\begin{equation} \label{eqn:F-est-TDNN}
\hat{F}_k\paren{z; \hat{\mu}_k^{\TDNN}} = 1 - \frac{\hat a_k(z; \hat{\mu}_k^{\TDNN})}{\hat \xi_k(z; \hat{\mu}_k^{\TDNN})}. 
\end{equation}

\subsection{Optimal Decision}
\noindent In the $k$-th episode, the estimators $\hat{\mu}_k(\cdot)$ for the mean utility function $\mathring{\mu} (\cdot)$ and $\hat{F}_k\paren{z; \hat{\mu}_k}$ for the noise distribution $F(z)$, derived in the exploration phase, 
will be employed to inform data-driven optimal decision for dynamic pricing throughout the exploitation phase. 

\paragraph{Oracle Optimal Decision.}
Under the oracle setting where the true $\mathring{\mu}(\cdot)$ and $F(\cdot)$ are known, the optimal offered price $p_t^*$ can be determined according to 
\begin{equation} \label{eqn:opt-p}
p_t^* = \underset{p_t \ge 0}{\argmax}~ p_t [1 - F(p_t - \mathring{\mu}(\bx_t)].
\end{equation}
Under Assumption \ref{assump:regu-phi}, the oracle optimal price is given by 
\begin{equation} 
\label{eqn:func-h}
\begin{aligned}
&p_t^* = h\circ \mathring{\mu}(\bx_t), 
\text{where}\quad h(v) &:= v + \phi^{-1}(-v) \quad \mbox{and}\quad \phi(z) := z - \frac{ 1 - F(z) } {F^{(1)}(z)}.
\end{aligned}
\end{equation}
Here $h\circ\mathring\mu(\cdot)$ denotes the composite function $h(\mathring\mu(\cdot))$ and $F^{(1)}$ is the first-order derivative of $F$. 

\paragraph{Feasible Optimal Decision.}
In reality, we can only estimate $\mathring{\mu}(\cdot)$ and $F(\cdot)$ from collected data. 
In the $k$-th episode of Algorithm \ref{algo:non-par-pricing}, we obtain the estimated mean utility function $\hat{\mu}_k$ using either the DNN or TDNN estimator and calculate the kernel estimator $\hat{F}_k(z)$ based on \eqref{eqn:F-est} or \eqref{eqn:F-est-TDNN}.
Accordingly, the data-driven optimal offered price $\hat p_t$ for $t\in\cT_{k,com}$ in the exploitation phase is obtained by optimizing the estimated reward. That is, $\hat p_t = \underset{p_t > 0}{\argmax}~ p_t [1 - \hat{F}_k(p_t - \hat{\mu}_k(\bx_t)]$. 
As a result, it can be obtained that for any $t\in\cT_{k,com}$, 
\begin{equation}\label{eqn:opt-price-est}
\hat p_t = \hat{h}_k \circ \hat{\mu}_k(\bx_t),   
\end{equation}
where 
\begin{equation} \label{eqn:func-h-hat}
\hat{h}_k(v) = v + \hat{\phi}_k^{-1}(-v) 
\quad\text{and} \quad
\hat{\phi}_k(z) = z - \frac{ 1 - \hat{F}_k(z) } {\hat{F}_k^{(1)}(z)}.
\end{equation}
Here $\hat{F}_k^{(1)}(z)$ represents the first-order derivative of $\hat{F}_k (z)$ and is defined by
\begin{equation}\label{eqn:F(1)-est}
\hat{F}_k^{(1)}(z) = \hat{F}_k^{(1)}(z; \hat{\mu}_k) = - \frac{\hat a_k^{(1)}(z;\hat{\mu}_k) \hat \xi_k(z; \hat{\mu}_k) - \hat a_k(z; \hat{\mu}_k) \hat \xi_k^{(1)}(z; \hat{\mu}_k)}{\hat \xi_k(z; \hat{\mu}_k)^2}, 
\end{equation}
where $\hat a_k(z; \hat\mu_k)$ and $\hat\xi_k(z;  \hat\mu_k)$ are defined in \eqref{eqn:aux-def-a-xi} and their corresponding first derivatives are given by 
\begin{align*}
\hat a_k^{(1)}(z;  \hat\mu_k) & \defeq \frac{-1}{n_{k, exp} b_k^2} \sum_{t \in \calT_{k, exp}} K^{(1)}\paren{\frac{p_t -  \hat\mu_k(\bx_t) - z}{b_k}} y_t, \\
\hat \xi_k^{(1)}(z;  \hat\mu_k) & \defeq \frac{-1}{n_{k, exp} b_k^2}\sum_{t \in \calT_{k, exp}} K^{(1)}\paren{\frac{p_t -  \hat\mu_k(\bx_t) - z }{b_k}}. 
\end{align*}

\section{Theoretical Guarantees}
\label{sec:theory}

In this section, we establish the regret analysis of the proposed Algorithm \ref{algo:non-par-pricing} for doubly robust learning and decision in contextual dynamic pricing. For the theoretical analysis, we first introduce some notations and present necessary assumptions. 
For any candidate mean utility function $\mu: \mathcal{X} \mapsto \mathbb{R}$, let $\xi(z;\mu)$ be the density function of $Z = p_t-\mu(\bx_t)$ evaluated at $Z = z$
and define $F(z; {\mu}) := 1 - \EE[y_t | p_t - {\mu} (\bx_t) = z]$. 
Therefore, the underlying true noise distribution $F(z) = F(z; \mathring{\mu})$ according to \eqref{noise-dist}. 
We next introduce the assumptions about the market noise $\eps_t$, the distribution function $F(u; \mu)$, density function $\xi(u;\mu)$, and the kernel function $K(x)$. 


\begin{assumption}[Smoothness of $F(z;\mu)$ and $F^{(1)}(z;\mu)$]\label{assump:F}
The true distribution of the market noise $F(z)$ is $\ell_F$-Lipschitz. 
Let $\calS_{\eps} \defeq [-B_{\eps}, B_{\eps}]$ be the support of the market noise, it is assumed that, for any $\delta > 0$, $\sup_{\| \mu - \mathring{\mu} \|_{\infty} \leq \delta, z \in \calS_{\eps}} | F^{(1)}(z; \mu) - F^{(1)}(z) | \leq \ell_F \delta$.
Moreover, it holds that $\min_{z \in [\phi^{-1} (B_\eps - B),  \phi^{-1} (- B_{\eps}) ]} F^{(1)} (z) \geq c_F$ for some constant $c_{F} > 0$. 
\end{assumption}

\begin{assumption}[Smoothness of $\xi (\cdot; \mu)$] \label{assump:xi}
There exists an integer $m \geq 2$ and a constant $\ell_{\xi}$ such that for all $\mu \in \calN_0 := \{\mu: \|\mu  - \mathring{\mu} \|_{\infty} \leq  (  T^{-\frac{ 2m +1} {(4m-1)(d +4)}}\land T^{- \frac{1} {d+8}} ) \sqrt{d \log T} \} $, the density function $ \xi(\cdot; \mu)   \in \mathbb{C}^{(m)} $ and $\xi^{(m)} (\cdot; \mu)$ is $\ell_{\xi}$-Lipschitz continuous on $\calS_{\eps}$. 
Moreover, there exist constants $B_1 >0$ and $B_2 >0$ such that for any $\mu \in \mathcal{N}_0$,  $\max(\| \xi (\cdot; {\mu} )\|_{\infty}, \| \xi_(\cdot; \mu))^{(1)} \|_{\infty}) \leq B_1 $ and $ \min_{z \in \calS_{\eps}} \xi  (z; \mu) \geq B_2 $.
In addition, define $ a(z; \mu) := \xi  (z; \mu) F(z; \mu) \in \mathbb{C}^m$ and we assume $a^{(m)}(z; \mu)$ is $\ell_{\xi}$-Lipschitz on $\calS_{\eps}$ for all $\mu \in \mathcal{N}_0$.
\end{assumption}

\ignore{
\begin{remark}
    The neighborhood size for $\mu$ in the above assumption can be relaxed as $\mathcal{N}_0 := \{\mu: \abs{\mu  - \mathring{\mu} }_{\infty} \leq (  T^{-\frac{ 2m +1} {(4m-1)(d +4)}}\land T^{- \frac{1} {d+8}} ) \sqrt{d \log T} \}$ if TDNN estimator is applied in the algorithm. 
\end{remark}
}

\begin{assumption}[Condition for $K(x)$] \label{assump:K}
The kernel function $K(x)$ and $K^{(1)}(x)$ are both $\ell_{K}$-Lipschitz continuous with bounded support. In addition, $K(x)$ is an order-m kernel, that is, $ \int K(x) \, \D x = 1 $, $\int x^j K(x) \, \D x = 0 $ for $j \in [m-1]$, and $\int | x^m K(x)| \, \D x  < \infty$. 
\end{assumption}

\begin{assumption}[Regularity condition] \label{assump:regu-phi}
There exists positive constants $L_{\phi}$ and $B_u$ such that $\phi^{(1)} (z) \ge L_{\phi}$ for all $  z \in \calS_{\eps} $ with $\phi(z)$ defined in \eqref{eqn:func-h}.
\end{assumption}

Assumptions \ref{assump:F}, \ref{assump:xi}, and \ref{assump:K} are standard assumptions for the smoothness of the relevant functions in nonparametric estimation and inference. The rate of convergence $(  T^{-\frac{ 2m +1} {(4m-1)(d +2)}}\land T^{- \frac{1} {2(d+4)}} ) \sqrt{d \log T}$ in Assumption \ref{assump:xi} can be satisfied for the DNN or TDNN estimators (with a further improved neighborhood size) with appropriately chosen parameters. 
Assumption \ref{assump:regu-phi} guarantees that $\phi(x)$ is strictly increasing and, thus, the uniqueness of the optimal solution of \eqref{eqn:opt-p}.
It is weaker than that in \cite{javanmard2019dynamic} since $1 - F(u)$ is log-concave under the further restriction of $L_{\phi'}\ge 1$.

The regret analysis unfolds in two steps. Firstly, we will develop the uniform convergence rate of both the DNN and TDNN estimator in estimating the mean utility function $\mathring{\mu}$, as well as the convergence rate of the Nadaraya-Watson estimator for estimating the noise distribution $F(\cdot)$. Secondly, building upon the uniform convergence rate of the estimators, we will obtain an upper bound for the regret of Algorithm \ref{algo:non-par-pricing}. 

\subsection{Statistical Convergence Rate of Learning}

We will establish statistical uniform convergence of non-parametric estimators for both the mean utility function and the market noise distribution. 
The finite-sample uniform bound for the DNN and TDNN estimators of $\mathring{\mu}(\bx)$ are presented below.

\begin{theorem} \label{le-DNN-TDNN}
Suppose density function $f (\cdot)$ of $\bx_t$ and the mean utility function $\mathring{\mu}(\cdot)$  are two times continuously differentiable with bounded second-order derivatives on the compact support $\mathcal{X}$ of $\bx_t$. The density $f (\bx)$ is bounded away from 0 and $\infty$ for $\bx \in \mathcal{X}$. Then we have for the DNN estimator that with probability $1 - 2\delta$, 
\begin{equation} \label{re-DNN}
\begin{aligned}
 \sup_{x \in \mathcal{X}} | \hat\mu_k^{\DNN}(\bx; s) - \mathring{\mu}(\bx) |  \leq  C  s^{-2/d} + B \sqrt{ \frac{2 s [\log \delta^{-1}  + \log d + d \log n ] } {n}} .
\end{aligned}
\end{equation} 
Furthermore, assume density function $f (\cdot)$ of $\bx_t$ and the mean utility function $\mathring{\mu}(\cdot)$  are four times continuously differentiable with bounded second, third, and fourth-order derivatives on the compact support $\mathcal{X}$ of $\bx_t$. Then we have for TDNN estimator  $D_n(s_1, s_2) (\bx) $ with $c_1   s_2 \leq s_1 \leq c_2 s_2$ that with probability $1 - \delta$, 
\begin{equation} \label{re-TDNN}
\begin{aligned}
 &\sup_{\bx \in \mathcal{X}} | \hat\mu_k^{\TDNN}(\bx; s_1, s_2) - \mathring{\mu}(\bx) | \\
 &\leq C s_1^{- \min\{3,\, 4/d\} }
 + C B \sqrt{ \frac{2 s_1 [\log \delta^{-1}  + \log d + d \log n ] } {n}} .
 \end{aligned}
\end{equation}
Here $C$ is a constant depending on the underlying density function $f(\cdot)$ of $\bx_t$, the mean utility function $\mathring{\mu} (\cdot)$, the ratio of $s_1/s_2$,  and the dimensionality d, which may take difference values from line to line. 
\end{theorem}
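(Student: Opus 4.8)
The argument is the classical bias--variance split for a nearest-neighbour estimator, carried out so that every estimate is uniform over the compact support $\mathcal{X}$, and exploiting that the pseudo-response $g_t = B y_t$ is bounded in $[0,B]$. For the DNN estimator, the first step is a \emph{uniform bias bound}. Using the U-statistic representation \eqref{eqn:dnn-u-stat}, write $\bar\mu_k(\bx) := \EE[\hat\mu_k^{\DNN}(\bx;s)]$ as the expectation of the $1$-NN value $g_{(1)}$ over an $s$-subsample. Because $\EE[g_t\mid\bx_t]=\mathring\mu(\bx_t)$ by \eqref{eqn:general}, a second-order Taylor expansion of $\mathring\mu$ at $\bx$ gives $|\bar\mu_k(\bx)-\mathring\mu(\bx)| \lesssim \|\nabla^2\mathring\mu\|_\infty \,\EE\|\bX_{(1)}^{(s)}-\bx\|^2$, where $\bX_{(1)}^{(s)}$ is the nearest of $s$ i.i.d.\ draws from $f$; since $f$ is bounded away from $0$ and $\infty$ on $\mathcal{X}$, a standard order-statistics computation bounds this expectation by $C s^{-2/d}$ uniformly in $\bx$. (The first-order term survives for DNN but is itself $O(s^{-2/d})$, which is enough here; its cancellation is the point of TDNN, see below.)

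The second step is \emph{pointwise concentration}. For fixed $\bx$, $\hat\mu_k^{\DNN}(\bx;s)$ is a U-statistic of order $s$ whose kernel $g_{(1)}(\bz_{i_1},\dots,\bz_{i_s}) = B y_{(1)}$ lies in $[0,B]$. Applying Hoeffding's first reduction --- a degree-$s$ U-statistic over $n$ points equals the average over permutations of an average of $\lfloor n/s\rfloor$ i.i.d.\ block terms --- together with Jensen and Hoeffding's inequality for the block average, yields $\PP\bigl(|\hat\mu_k^{\DNN}(\bx;s)-\bar\mu_k(\bx)|>t\bigr)\le 2\exp(-2\lfloor n/s\rfloor t^2/B^2)$, hence a deviation of order $B\sqrt{2s\log\delta^{-1}/n}$. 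The boundedness of $g_t=B y_t$ is exactly what makes this clean and is the payoff of the binary-response population identity.

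The third step \emph{upgrades to a supremum over $\bx$}. For a fixed sample, $\bx\mapsto\hat\mu_k^{\DNN}(\bx;s)$ depends on $\bx$ only through the ordering of the $n$ sample points by distance to $\bx$; this ordering is constant on each cell of the arrangement of the $\binom n2$ perpendicular-bisector hyperplanes, so $\hat\mu_k^{\DNN}(\cdot;s)$ takes at most $\mathrm{poly}(n)$ distinct values, while $\bar\mu_k$ is Lipschitz on $\mathcal{X}$ by the expansion above. A union bound over one representative per cell together with a grid of $\mathcal{X}$ at resolution $\sim 1/n$ then replaces $\log\delta^{-1}$ by $\log\delta^{-1}+\log d + d\log n$, giving \eqref{re-DNN}. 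For the TDNN estimator in \eqref{eqn:tdnn}, the variance analysis is inherited: under $c_1 s_2\le s_1\le c_2 s_2$ the weights $\alpha_1,\alpha_2$ are bounded constants, so $\hat\mu_k^{\TDNN}$ concentrates at rate $B\sqrt{s_1[\log\delta^{-1}+\log d+d\log n]/n}$. The gain is entirely in the bias: under fourth-order smoothness of $\mathring\mu$ and $f$ the DNN bias admits a uniform expansion $\bar\mu_k(\bx)-\mathring\mu(\bx)=b_1(\bx)\,s^{-2/d}+O(s^{-\min\{3,4/d\}})$ with $b_1$ independent of $s$ (as in \cite{demirkaya2022optimal}), and the defining relations $\alpha_1+\alpha_2=1$, $\alpha_1 s_1^{-2/d}+\alpha_2 s_2^{-2/d}=0$ annihilate the $s^{-2/d}$ term, leaving bias $O(s_1^{-\min\{3,4/d\}})$; combining gives \eqref{re-TDNN}.

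\textbf{Main obstacle.} The bias and the pointwise variance steps are routine; the real difficulty is the uniformisation. The DNN/TDNN estimators are rank-based, hence only piecewise constant in $\bx$, so standard chaining over $\mathcal{X}$ is not directly available and one must instead exploit the combinatorial cell decomposition of the nearest-neighbour ordering to keep the discretisation cost at $d\log n$. For TDNN there is an additional layer: the cancellation argument requires the bias expansion of \cite{demirkaya2022optimal} to be established with a remainder controlled \emph{uniformly} in $\bx$, not merely at a single query point as in the offline regression setting, which means re-examining the Taylor and combinatorial remainders in that expansion with $\mathcal{X}$-uniform bounds.
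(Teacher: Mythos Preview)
Your proposal is correct and follows essentially the same route as the paper: bias--variance decomposition, uniform bias bounds taken from \cite{demirkaya2022optimal} (Theorem~1 for DNN, Corollary~1 for TDNN), pointwise Hoeffding concentration for the order-$s$ U-statistic with kernel bounded in $[0,B]$, and uniformisation via the finitely many distinct nearest-neighbour orderings plus a union bound. The only execution difference is in the counting step: where you invoke the arrangement of $\binom{n}{2}$ perpendicular-bisector hyperplanes to get $\mathrm{poly}(n)$ cells, the paper cites Lemma~3 of \cite{jiang2019non} (at most $d n^d$ distinct $K$-NN sets, hence at most $d n^d$ distinct values of the L-statistic \eqref{eqn:dnn-L-stat}) and union-bounds directly over those, which is the same combinatorial mechanism and gives the same $\log d + d\log n$ cost.
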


The proof of Theorem \ref{le-DNN-TDNN} can be found in Section \ref{append:proof-dnn-tdnn} of the supplementary material. The uniform convergence rate results in theorem \ref{le-DNN-TDNN} and the proofs are new to the literature and hold independent interest. Regarding the the uniform convergence of DNN estimator, the first error term $s^{-2/d}$ provides an upper bound for the bias, while the second error term corresponds to bound for the variance. In a similar manner, in the uniform bound for TDNN estimator, the first error term $s_2^{-4/d}$ is related to the bias and the second error term is associated with the variance. We can observe from Theorem \ref{le-DNN-TDNN} that the TDNN estimator preserves the same order of variance as the DNN estimator but reduces the bias. This improvement is achieved by TDNN's formulation of a linear combination of two DNN estimators, targeted at eliminating the first-order bias. For any general integer $l \geq 2$, one can employ an $l$-scale DNN estimator, which is defined as a linear combination of $l$ DNN estimators, each with a distinct subsampling scale to remove the bias up to the $(l-1)$-th order. This construction requires the mean utility function and density function of $\bx_t$ to have bounded derivatives up to the $2l$-th order. 

Building upon the uniform convergence rate for DNN estimators, we obtain the following result on the accuracy of the kernel estimator $\hat{F}_k\paren{z; \hat{\mu}_k^{\DNN}} $ defined in \eqref{eqn:F-est} for estimating the noise distribution $F(z)$. 

\begin{theorem} \label{lemma-F-DNN}
Assume DNN estimator $\hat{\mu}_k^{\DNN}$ is applied with subsampling size $s = O(n_{k, \exp}^{\frac{d} {d + 4}})$.
Under condition of Theorem \ref{le-DNN-TDNN} for DNN estimator and Assumptions \ref{assump:F}--\ref{assump:regu-phi}, there exist constants $C_1> 0$ and $C_2 > 0$ such that for any episode $k$ satisfying $n_{k, exp}^{\frac{2m} {2m + 1}} \geq \max(C_1 d ( \frac{\log n_{k, exp}} {2m + 1} + 1 ), 3 \log n_{k, exp})$ with probability at least $1 - 2\delta$ with $\delta \in (4 \exp \{- C_2 n_{k, exp}^{\frac{m} {2m + 1}} / \log n_{k, exp} \}, 1/2]$, it holds that 
\begin{equation} \label{eqn:kernel-accuracy}
\begin{aligned}
\sup_{z \in \calS_{\eps}, {\mu}\in \mathcal{N}_{k}} \abs{ \hat{F}_k\paren{z; \hat{\mu}_k^{\DNN}}  - F(z) } 
&\leq  B_F n_{k, exp}^{- \frac{m} {2m + 1}} \sqrt{\log n_{k, exp} } ( \sqrt d + \sqrt{\log \delta^{-1} } ) \\
&+ B_F {n_{k, exp}^{-\frac{2}{ d + 4}} \sqrt {d \log n_{k, exp}}}.
\end{aligned}
\end{equation}
\end{theorem}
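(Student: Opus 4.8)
The plan is to peel the error of $\hat F_k(z;\hat\mu_k^{\DNN})$ into the intrinsic error of a Nadaraya--Watson estimate of $F$, which yields the $n_{k,exp}^{-m/(2m+1)}$ term, and the cost of substituting $\hat\mu_k^{\DNN}$ for $\mathring\mu$, which yields the $n_{k,exp}^{-2/(d+4)}$ term. First I would invoke Theorem~\ref{le-DNN-TDNN} with the prescribed scale $s=O(n_{k,exp}^{d/(d+4)})$, for which the DNN bias $s^{-2/d}$ and variance $\sqrt{s\log(\cdot)/n_{k,exp}}$ coincide at order $n_{k,exp}^{-2/(d+4)}$ up to logarithms; on an event $\cE_1$ of probability at least $1-\delta$ this places $\hat\mu_k^{\DNN}$ in the $\ell_\infty$-ball $\cN_k$ of radius $r_k:=C\,n_{k,exp}^{-2/(d+4)}\sqrt{d\log n_{k,exp}}$ about $\mathring\mu$, so on $\cE_1$ it suffices to bound $\sup_{z\in\cS_\eps,\ \mu\in\cN_k}\lvert\hat F_k(z;\mu)-F(z)\rvert$.

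Second, I would split $\hat F_k(z;\mu)-F(z)=\bigl[\hat F_k(z;\mu)-F(z;\mu)\bigr]+\bigl[F(z;\mu)-F(z)\bigr]$ with $F(z;\mu)=1-\EE[y_t\mid p_t-\mu(\bx_t)=z]$. Unwinding \eqref{eqn:y-logistic} under a uniform price gives $F(z;\mu)=\EE\bigl[F\bigl(z+\mu(\bx_t)-\mathring\mu(\bx_t)\bigr)\mid p_t-\mu(\bx_t)=z\bigr]$, so the $\ell_F$-Lipschitz property of $F$ (Assumption~\ref{assump:F}) bounds the second, deterministic, term by $\ell_F\lVert\mu-\mathring\mu\rVert_\infty\le\ell_F r_k$ -- exactly the second summand in \eqref{eqn:kernel-accuracy}.

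Third, for the stochastic term I would use the ratio identity
\[
\hat F_k(z;\mu)-F(z;\mu)=\frac{\xi(z;\mu)\bigl[\bar a_k(z;\mu)-\hat a_k(z;\mu)\bigr]+\bar a_k(z;\mu)\bigl[\hat\xi_k(z;\mu)-\xi(z;\mu)\bigr]}{\xi(z;\mu)\,\hat\xi_k(z;\mu)},
\]
with $\bar a_k(z;\mu):=\EE[\hat a_k(z;\mu)]=\xi(z;\mu)(1-F(z;\mu))$ the population analogue of \eqref{eqn:aux-def-a-xi}. Since $\xi(z;\mu)\ge B_2$ (Assumption~\ref{assump:xi}) and $\hat\xi_k\ge B_2/2$ once $\sup\lvert\hat\xi_k-\xi\rvert\le B_2/2$ is shown, the denominator is harmless and everything reduces to uniform-in-$(z,\mu)$ bounds on $\hat a_k(z;\mu)-\bar a_k(z;\mu)$ and $\hat\xi_k(z;\mu)-\xi(z;\mu)$, each of which splits into a bias and a fluctuation. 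The bias is $O(b_k^m)$: by Assumption~\ref{assump:K} the kernel is order $m$, and by Assumption~\ref{assump:xi} the maps $\xi(\cdot;\mu)$ and $a(\cdot;\mu)=\xi F(\cdot;\mu)$ -- hence $\bar a_k=\xi-a$ -- are $m$-fold differentiable with $\ell_\xi$-Lipschitz $m$-th derivative uniformly over $\mu\in\cN_k$, so a Taylor expansion kills the first $m-1$ kernel moments. At a fixed $(z,\mu)$ the fluctuation concentrates, by Bernstein's inequality, at rate $O\bigl(\sqrt{\log(\delta'^{-1})/(n_{k,exp}b_k)}+\log(\delta'^{-1})/(n_{k,exp}b_k)\bigr)$; a grid of $O(b_k^{-2})$ points in $\cS_\eps$ together with the Lipschitz continuity of $K$ and $K^{(1)}$ (Assumption~\ref{assump:K}) makes this uniform in $z$ at the cost of an extra $\log n_{k,exp}$ inside $\log(\delta'^{-1})$. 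Taking $b_k\asymp n_{k,exp}^{-1/(2m+1)}$ balances $b_k^m$ against the fluctuation and produces $B_F\,n_{k,exp}^{-m/(2m+1)}\sqrt{\log n_{k,exp}}(\sqrt d+\sqrt{\log\delta^{-1}})$, the first summand of \eqref{eqn:kernel-accuracy}; the hypothesis $n_{k,exp}^{2m/(2m+1)}\gtrsim d\log n_{k,exp}$ and the admissible window for $\delta$ are exactly what keep $n_{k,exp}b_k\gtrsim d\log n_{k,exp}$ and the Bernstein tail sub-Gaussian, and the kernel part then holds with probability at least $1-\delta$, so intersecting with $\cE_1$ gives probability at least $1-2\delta$.

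The principal obstacle is the last reduction: making the fluctuation bound uniform over $\mu\in\cN_k$ when $\hat\mu_k^{\DNN}$ is built from the \emph{same} exploration data $\{(\bx_t,y_t)\}_{t\in\cT_{k,exp}}$ that drive $\hat a_k$ and $\hat\xi_k$, so that $\mu=\hat\mu_k^{\DNN}$ cannot be frozen and treated as independent of the $y_t$'s. I expect to settle this using: (i) $\hat a_k(z;\mu)$ and $\hat\xi_k(z;\mu)$ depend on $\mu$ only through the finite vector $(\mu(\bx_t))_{t\in\cT_{k,exp}}$, which on $\cE_1$ lies in a box of half-width $r_k$ about $(\mathring\mu(\bx_t))_t$ and in which the kernel sums are separable coordinatewise; (ii) since $p_t\sim\Unif(0,B)$ is independent of $\bx_t$, the conditional mean $\EE[K((p_t-\mu(\bx_t)-z)/b_k)\mid\bx_t]$ is free of $\mu(\bx_t)$ on the relevant range of $z$, so passing from $\mathring\mu$ to an arbitrary $\mu\in\cN_k$ perturbs the estimators only by a centered, concentrated fluctuation plus a deterministic $O(r_k)$ shift; and (iii) a net-to-ball step via the Lipschitz continuity of $K$, whose union-bound cost scales with the covariate dimension $d$, accounting for the $\sqrt d$ that multiplies the first term of \eqref{eqn:kernel-accuracy}. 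Closing the bookkeeping in (i)--(iii) at the claimed rate, rather than any single inequality, is the delicate point; the parallel decomposition of $\hat F_k^{(1)}$ in \eqref{eqn:F(1)-est} is not needed here but recycles the same machinery.
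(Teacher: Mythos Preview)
Your proposal is correct and follows the same two–step skeleton as the paper: split $\hat F_k(z;\mu)-F(z)$ into the kernel error $\hat F_k(z;\mu)-F(z;\mu)$ and the shift $F(z;\mu)-F(z)$, bound the latter by $\ell_F\|\mu-\mathring\mu\|_\infty$ via the Lipschitz property of $F$, bound the former uniformly over $(z,\mu)\in\cS_\eps\times\cN_k$ by the standard bias--variance--covering analysis of a Nadaraya--Watson estimator with an order-$m$ kernel, and then plug in Theorem~\ref{le-DNN-TDNN} with $s\asymp n_{k,exp}^{d/(d+4)}$ to obtain $r_k$. The paper is terser---it packages the kernel-error bound as Lemma~\ref{lemma-F} and outsources its proof to Lemma~4.2 of \cite{fan2022policy}---but your sketched argument is precisely what that lemma establishes.

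One clarification on what you flag as the ``principal obstacle'': once you have committed, as you already do in your first paragraph, to bounding $\sup_{\mu\in\cN_k}|\hat F_k(z;\mu)-F(z)|$ over the \emph{deterministic} ball $\cN_k$, the data-reuse concern evaporates. A bound that holds simultaneously for every fixed $\mu\in\cN_k$ automatically holds at any random $\hat\mu_k^{\DNN}$ that lands in $\cN_k$, irrespective of how $\hat\mu_k^{\DNN}$ depends on $\{(\bx_t,y_t)\}$. The genuine work is achieving uniformity over the infinite set $\cN_k$, and your steps (i) and (iii)---reducing the dependence on $\mu$ to the finite-dimensional box $\prod_t[\mathring\mu(\bx_t)-r_k,\mathring\mu(\bx_t)+r_k]$ and covering it---are the right mechanism; this is how the argument in \cite{fan2022policy} proceeds and is the source of the $\sqrt d$ factor. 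Your step~(ii) is a pleasant simplification specific to uniform exploration but is not essential.
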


Theorem \ref{lemma-F-DNN} follows directly from applying Lemma \ref{lemma-F} in Appendix \ref{append:proof-regret} and Theorem \ref{le-DNN-TDNN}. 
The first term in the upper bound comes from kernel approximation of the noise distribution, while the second term comes from the upstream estimation error of DNN estimator for mean utility $\mathring{\mu}(\bx)$. 
For existing work that assumes a parametric model of $\mathring{\mu}(\bx)$, the second error term is dominated by the first term (\cite{fan2022policy}). 
However, Theorem \ref{lemma-F-DNN} shows that the upstream estimation error of $\mathring{\mu}(\bx)$ becomes non-negligible as the dimensionality $d$ grows under the more general non-parametric setting. 

When TDNN estimator is applied to estimate the nonparametric mean regression function $\mathring{\mu}(\cdot)$ and then the same Nadaraya-Watson kernel estimator for the error distribution $F(\cdot)$, we can obtain the following parallel result in Theorem \ref{lemma-F-TDNN} below for the kernel estimator $\hat{F}_k\paren{z; \hat{\mu}_k^{\TDNN}} $ when TDNN estimator is applied to estimate the mean utility function. 

\begin{theorem} \label{lemma-F-TDNN}
Assume TDNN estimator $\hat{\mu}_k^{\TDNN}$ is applied with subsampling scales $s_1 = O(n_{k, \exp}^{\frac{d} {d + 8} \lor \frac 1 7 })$ and $s_2 \asymp s_1$. Under   condition of Theorem \ref{le-DNN-TDNN} for TDNN estimator and Assumptions \ref{assump:F}--\ref{assump:regu-phi}, there exist constants $C_1> 0$ and $C_2 > 0$ such that for any episode $k$ satisfying $n_{k, exp}^{\frac{2m} {2m + 1}} \geq \max(C_1 d ( \frac{\log n_{k, exp}} {2m + 1} + 1 ), 3 \log n_{k, exp})$ and $\delta \in (4 \exp \{- C_2 n_{k, exp}^{\frac{m} {2m + 1}} / \log n_{k, exp} \}, \frac{1}{2}]$, it holds with probability at least $1 - 2\delta$ that 
\begin{equation} \label{eqn:kernel-accuracy}
\begin{aligned}
\sup_{z \in \calS_{\eps}, {\mu}\in \mathcal{N}_{k}} \abs{ \hat{F}_k\paren{z; \hat{\mu}_k^{\TDNN}}  - F(z) } 
&\leq  B_F n_{k, exp}^{- \frac{m} {2m + 1}} \sqrt{\log n_{k, exp} } ( \sqrt d + \sqrt{\log \delta^{-1} } ) \\
&+ C B_F {n_{k, exp}^{- (\frac{4} {d + 8} \land \frac{3}{7}) } \sqrt {d \log n_{k, exp}}}.
\end{aligned}
\end{equation}
\end{theorem}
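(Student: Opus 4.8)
The plan is to follow exactly the route used for Theorem \ref{lemma-F-DNN}, since the statement is the TDNN-analogue obtained by swapping the upstream mean-utility estimator. First I would invoke the generic decomposition provided by Lemma \ref{lemma-F} (cited in Appendix \ref{append:proof-regret}), which bounds $\sup_{z,\mu}|\hat F_k(z;\hat\mu_k)-F(z)|$ by two contributions: (i) a ``kernel term'' controlling the Nadaraya--Watson approximation of $F(z;\mathring\mu)=1-\EE[y_t\mid p_t-\mathring\mu(\bx_t)=z]$ under the $m$-th order kernel and bandwidth $b_k$, and (ii) an ``upstream term'' that is Lipschitz in $\|\hat\mu_k-\mathring\mu\|_\infty$ via Assumptions \ref{assump:F}--\ref{assump:xi} (the stability of $F^{(1)}(z;\mu)$, the lower bounds $\xi(z;\mu)\ge B_2$ and $F^{(1)}(z)\ge c_F$, and the smoothness of $a(z;\mu)=\xi(z;\mu)F(z;\mu)$). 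The kernel term is bounded exactly as in Theorem \ref{lemma-F-DNN}: under Assumption \ref{assump:K} and the stated constraint $n_{k,exp}^{2m/(2m+1)}\ge\max(C_1 d(\tfrac{\log n_{k,exp}}{2m+1}+1),3\log n_{k,exp})$, choosing $b_k\asymp n_{k,exp}^{-1/(2m+1)}$ balances the $O(b_k^m)$ bias against the $O((n_{k,exp}b_k)^{-1/2}\sqrt{\log n_{k,exp}}(\sqrt d+\sqrt{\log\delta^{-1}}))$ stochastic fluctuation (a uniform-in-$z$ empirical-process bound over the compact $\calS_\eps$, valid on the event of probability $1-\delta$), yielding the first summand $B_F n_{k,exp}^{-m/(2m+1)}\sqrt{\log n_{k,exp}}(\sqrt d+\sqrt{\log\delta^{-1}})$. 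This part is identical to the DNN case and requires no new work.

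The only genuine change is in the upstream term. I would plug the TDNN uniform bound \eqref{re-TDNN} from Theorem \ref{le-DNN-TDNN} into the Lipschitz estimate from Lemma \ref{lemma-F}. With subsampling scales $s_1\asymp s_2$ chosen as $s_1=O(n_{k,exp}^{\frac{d}{d+8}\vee\frac17})$, the bias term $s_1^{-\min\{3,4/d\}}$ and the variance term $\sqrt{s_1[\log\delta^{-1}+\log d+d\log n_{k,exp}]/n_{k,exp}}$ in \eqref{re-TDNN} balance (up to logs) at rate $n_{k,exp}^{-(\frac{4}{d+8}\wedge\frac37)}$. Specifically, when $4/d\le 3$ (i.e.\ $d\ge 2$, which is the regime where the $4/d$ exponent is active), setting $s_1\asymp n_{k,exp}^{d/(d+8)}$ makes both terms $\asymp n_{k,exp}^{-4/(d+8)}\sqrt{d\log n_{k,exp}}$; when $4/d>3$ (small $d$) the bias is $s_1^{-3}$ and the trade-off with $\sqrt{s_1/n_{k,exp}}$ gives $s_1\asymp n_{k,exp}^{1/7}$, rate $n_{k,exp}^{-3/7}$. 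Taking the worse of the two yields the second summand $CB_F n_{k,exp}^{-(\frac{4}{d+8}\wedge\frac37)}\sqrt{d\log n_{k,exp}}$. A union bound over the two probability-$\delta$ events (the kernel-concentration event and the TDNN-concentration event) gives the claimed total probability $1-2\delta$, and the admissible range of $\delta$ is inherited verbatim from Theorem \ref{lemma-F-DNN} (it is governed only by the kernel term, which is unchanged).

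I do not anticipate a serious obstacle, since the architecture is already in place: Lemma \ref{lemma-F} isolates the two error sources in a black-box fashion, and Theorem \ref{le-DNN-TDNN} supplies the new ingredient. The one point deserving care is verifying that the TDNN estimator indeed lands in the shrinking neighborhood $\calN_0$ of Assumption \ref{assump:xi} (and in the per-episode $\calN_k$ appearing in the supremum) with the stated high probability --- i.e.\ that $n_{k,exp}^{-(\frac{4}{d+8}\wedge\frac37)}\sqrt{d\log n_{k,exp}}$ is eventually below the threshold $(T^{-\frac{2m+1}{(4m-1)(d+4)}}\wedge T^{-\frac1{d+8}})\sqrt{d\log T}$; this holds for $k$ in the stated range because $n_{k,exp}$ is a polynomial fraction of $T$ and the TDNN exponent $\frac{4}{d+8}$ dominates $\frac1{d+8}$. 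The second care-point is checking that the smoothness hypotheses of Theorem \ref{le-DNN-TDNN}'s TDNN half (four-times differentiability of $f$ and $\mathring\mu$ with bounded derivatives through order four) are compatible with the running assumptions --- they are, being strictly stronger than the second-order smoothness used in the DNN case --- so no additional hypothesis need be stated.
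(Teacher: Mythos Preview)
Your proposal is correct and follows essentially the same route as the paper: the result is obtained by combining the generic bound of Lemma \ref{lemma-F} with the TDNN uniform rate from Theorem \ref{le-DNN-TDNN}, exactly as Theorem \ref{lemma-F-DNN} does for the DNN estimator. Your balancing of $s_1^{-\min\{3,4/d\}}$ against $\sqrt{s_1/n_{k,exp}}$ to arrive at $n_{k,exp}^{-(\frac{4}{d+8}\wedge\frac{3}{7})}$ matches the paper's computation (see the opening of Section \ref{pf-thm-TDNN}), and the care-points you flag about $\calN_0$ and the smoothness hypotheses are handled implicitly in the paper.
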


\subsection{Regret Analysis of Decision Making} \label{sec:regret}
\noindent The following result establishes an upper bound for the cumulative regret of Algorithm \ref{algo:non-par-pricing} over a time horizon $T$ when DNN estimator is applied in the mean utility function estimation. The proofs of Theorem \ref{thm-1} and \ref{thm-TDNN} are provided in Appendix \ref{append:proof-regret}, respectively.  

\begin{theorem} \label{thm-1}
Assume the DNN estimator is applied with subsampling size $s = O(n_{k, exp}^{\frac{d} {d + 4}})$ to estimate the mean utility function.
Suppose condition of Theorem \ref{le-DNN-TDNN} for DNN estimator and Assumptions \ref{assump:F}--\ref{assump:regu-phi} hold. Let the exploration phase size in the $k$th episode be $n_{k, exp} = n_k^{\frac {2m +1} {4m-1} \lor \frac{d + 4}{d + 8} }$ in Algorithm \ref{algo:non-par-pricing}, then for any $T  \geq [\max(C_1 d (1 + \frac{\log T} {2m+1}),  3 \log T)]^{ \frac{2m+ 1}{m} ( \frac{4m-1}{2m +1} \land \frac{d + 8}{d + 4} ) }$, the total regret satisfies 
\begin{equation} \label{result-thm-DNN-regret}
\Reg(\pi, T)   \leq  C T^{ \frac{2m+1}{4m-1} \lor \frac{d + 4}{d+8} }    [(\log T)^2  + d \log T] .
\end{equation}
\end{theorem}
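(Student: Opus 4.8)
The plan is to bound $\Reg(\pi,T)=\sum_{k\ge1}\sum_{t\in\cT_{k, exp}\cup\cT_{k, com}}\EE[\rev_t(p_t^{*})-\rev_t(\hat p_t)]$ episode by episode, decomposing each episode's contribution into an exploration part and an exploitation part. Write $\gamma:=\tfrac{2m+1}{4m-1}\vee\tfrac{d+4}{d+8}\in(0,1)$, so that $n_{k, exp}=n_k^{\gamma}$ with $n_k=2^{k-1}n_0$. On the exploration periods the offered price is uniform and each period's regret is at most $B$, so the total exploration regret is at most $B\sum_k n_{k, exp}=B\sum_k n_k^{\gamma}$; since $\gamma<1$ and $n_k$ doubles, this geometric series is dominated by its last term and is $\cO(BT^{\gamma})$. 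It therefore remains to control, for each $k$, the exploitation regret $\sum_{t\in\cT_{k, com}}\EE[\rev_t(p_t^{*})-\rev_t(\hat p_t)]$.

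The first ingredient is a deterministic \emph{quadratic regret} bound. Under Assumptions \ref{assump:F} and \ref{assump:regu-phi}, for every $\bx$ the oracle revenue $p\mapsto p\,(1-F(p-\mathring{\mu}(\bx)))$ is twice continuously differentiable (since $m\ge2$), has the interior maximizer $p^{*}=h\circ\mathring{\mu}(\bx)$ (unique because $\phi^{(1)}\ge L_\phi$) with vanishing first derivative there, and has second derivative bounded in absolute value over $[0,B]$; a second-order Taylor expansion then gives $\rev_t(p_t^{*})-\rev_t(\hat p_t)\le C\,|\hat p_t-p_t^{*}|^{2}$ as soon as $\hat p_t$ lies in a fixed neighbourhood of $p_t^{*}$, and one otherwise falls back on the crude bound $\rev_t(p_t^{*})-\rev_t(\hat p_t)\le B$. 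Making this rigorous --- in particular checking that, on the high-probability ``good'' event of episode $k$ (determined by the exploration data and hence independent of the exploitation covariates), the clipped price $\hat p_t=\min\{\max\{\hat h_k(\hat\mu_k(\bx_t)),0\},B\}$ actually lands in the region where the Taylor remainder is controlled, uniformly over $\bx_t$ --- is the step I expect to be the main obstacle, and it is where the random-utility structure pays off (cf.\ Remark \ref{rmk:why lower regret}), turning a first-order price error into a second-order regret.

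The second ingredient propagates the estimation errors into $|\hat p_t-p_t^{*}|$. From
\[
\hat p_t-p_t^{*}=\bigl[\hat h_k(\hat\mu_k(\bx_t))-h(\hat\mu_k(\bx_t))\bigr]+\bigl[h(\hat\mu_k(\bx_t))-h(\mathring{\mu}(\bx_t))\bigr],
\]
the second bracket is $\le\|h^{(1)}\|_\infty\|\hat\mu_k-\mathring{\mu}\|_\infty$, while for the first bracket one uses $h(v)=v+\phi^{-1}(-v)$, $\hat h_k(v)=v+\hat\phi_k^{-1}(-v)$ together with $\phi^{(1)}\ge L_\phi$ (to invert) and $F^{(1)}\ge c_F$, $\hat F_k^{(1)}\gtrsim c_F$ on the good event (to control the quotients defining $\phi$ and $\hat\phi_k$), obtaining $\|\hat h_k-h\|_\infty\lesssim\|\hat\phi_k-\phi\|_\infty\lesssim\|\hat F_k-F\|_\infty+\|\hat F_k^{(1)}-F^{(1)}\|_\infty$. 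Hence, uniformly over $\bx_t$,
\[
|\hat p_t-p_t^{*}|^{2}\;\lesssim\;\|\hat\mu_k-\mathring{\mu}\|_\infty^{2}+\|\hat F_k-F\|_\infty^{2}+\|\hat F_k^{(1)}-F^{(1)}\|_\infty^{2}.
\]
I then substitute the uniform rates at sample size $n_{k, exp}$: Theorem \ref{le-DNN-TDNN} with $s\asymp n_{k, exp}^{d/(d+4)}$ gives $\|\hat\mu_k-\mathring{\mu}\|_\infty\lesssim n_{k, exp}^{-2/(d+4)}\sqrt{d\log n_{k, exp}}$; Theorem \ref{lemma-F-DNN} gives $\|\hat F_k-F\|_\infty\lesssim n_{k, exp}^{-m/(2m+1)}\sqrt{\log n_{k, exp}}\,(\sqrt d+\sqrt{\log\delta^{-1}})+n_{k, exp}^{-2/(d+4)}\sqrt{d\log n_{k, exp}}$; and, because the bandwidth $b_k\asymp n_{k, exp}^{-1/(2m+1)}$ is tuned to be optimal for $\hat F_k$, the induced derivative estimator $\hat F_k^{(1)}$ inherits the larger stochastic fluctuation $(n_{k, exp}b_k^{3})^{-1/2}\asymp n_{k, exp}^{-(m-1)/(2m+1)}$, so that by Lemma \ref{lemma-F} it obeys the same bound with $n_{k, exp}^{-(m-1)/(2m+1)}$ in place of $n_{k, exp}^{-m/(2m+1)}$. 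Thus, on the good event, the per-period exploitation regret of episode $k$ is $\lesssim\bigl(n_{k, exp}^{-2(m-1)/(2m+1)}+n_{k, exp}^{-4/(d+4)}\bigr)\log n_{k, exp}\,(d+\log\delta^{-1})$.

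Finally, the exploitation regret of episode $k$ is at most $n_{k, com}\le n_k$ times this per-period bound, plus a bad-event term $\le 2\delta B n_{k, com}$, i.e.\ $\lesssim\bigl(n_k^{\,1-2\gamma(m-1)/(2m+1)}+n_k^{\,1-4\gamma/(d+4)}\bigr)\log n_k\,(d+\log\delta^{-1})+2\delta B n_k$. The choice $n_{k, exp}=n_k^{\gamma}$ with $\gamma=\tfrac{2m+1}{4m-1}\vee\tfrac{d+4}{d+8}$ is exactly what makes $1-2\gamma(m-1)/(2m+1)\le\gamma$ and $1-4\gamma/(d+4)\le\gamma$; summing the geometric series over the $\cO(\log T)$ episodes, the dominant term is $\cO\bigl(T^{\gamma}\log T\,(d+\log\delta^{-1})\bigr)$. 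Choosing $\delta$ polynomially small in $T$ (still within the range permitted by Theorems \ref{le-DNN-TDNN} and \ref{lemma-F-DNN}) turns $\log\delta^{-1}$ into $\cO(\log T)$ and makes the aggregate bad-event contribution $\sum_k 2\delta B n_{k, com}=\cO(1)$; adding the exploration bound $\cO(BT^{\gamma})$ yields $\Reg(\pi,T)\le C\,T^{\gamma}[(\log T)^{2}+d\log T]$. The condition imposed on $T$ is exactly what guarantees that the sample-size hypothesis $n_{k, exp}^{2m/(2m+1)}\ge\max(C_1 d(\tfrac{\log n_{k, exp}}{2m+1}+1),\,3\log n_{k, exp})$ of the cited theorems holds for the episodes carrying the bulk of the regret, the remaining (at most polylogarithmically many) initial episodes contributing only a $\cO(T^{\gamma})$ term in total.
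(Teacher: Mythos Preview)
Your proposal is correct and follows essentially the same route as the paper's proof: bound exploration regret trivially by $Bn_{k,exp}$, control exploitation regret via the second-order Taylor expansion $\rev_t(p_t^*)-\rev_t(\hat p_t)\le C|\hat p_t-p_t^*|^2$, decompose $\hat p_t-p_t^*=[\hat h_k(\hat\mu_k)-h(\hat\mu_k)]+[h(\hat\mu_k)-h(\mathring\mu)]$, feed in the uniform rates from Theorem \ref{le-DNN-TDNN} and the kernel lemmas (the paper packages your $\|\hat F_k-F\|_\infty+\|\hat F_k^{(1)}-F^{(1)}\|_\infty$ step into a single Lemma on $\sup|\hat h_k-h|$), and balance by setting $n_{k,exp}=n_k^\gamma$. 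One small note: the step you flag as the ``main obstacle'' is actually free, since $|\rev_t^{(2)}(p)|=|{-2F^{(1)}(p-\mathring\mu(\bx_t))-pF^{(2)}(p-\mathring\mu(\bx_t))}|\le(2+B)\max\{\|F^{(1)}\|_\infty,\|F^{(2)}\|_\infty\}$ holds for \emph{all} $p\in[0,B]$, so the quadratic bound is global and no neighbourhood argument is needed.
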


\begin{remark}
Theorem \ref{thm-1} requires that the mean utility function $\mathring{\mu}(\cdot)$ is twice differentiable and the error distribution $F(\cdot)$ has $m$-th order derivatives. 
The regret upper bound results from a synergistic learning of $\mathring{\mu}(\cdot)$ and $F(\cdot)$. 
Specifically, learning a twice-differentiable $\mathring{\mu}(\cdot)$ incurs a regret bound of $\tilde{\cO}(T^{\frac{2m+1}{4 m -1}})$, while learning an $m$-th differentiable $F(\cdot)$ incurs a regret bound of $\tilde\cO(T^{\frac{d + 4}{d + 8}})$. 
Overall, the regret bound is dominated by $\tilde{\cO}(T^{\frac{2m+1}{4 m -1}})$ as $1 \leq d \leq \frac{6}{m-1}$ and is instead dominated by $\tilde{\cO}(T^{\frac{d + 4}{d + 8}})$ as $d > \frac{6}{m-1}$.

If the mean utility function $\mathring{\mu}(\cdot)$ is further fourth differentiable, the more sophisticated TDNN estimator can be utilized to estimate $\mathring{\mu}(\cdot)$ in Algorithm \ref{algo:non-par-pricing}, achieving a lower regret bound due to its debiasing capability. The corresponding regret bound of incorporating the TDNN estimator are detailed in Theorem \ref{thm-TDNN} below.

\end{remark}

\begin{theorem} \label{thm-TDNN}
Assume the TDNN estimator is applied with subsampling sizes $s_1 = O(n_{k, exp}^{\frac{d} {d + 8} \lor \frac{1}{7}})$ and $s_2 \asymp s_1 $ for mean utility function estimation. Suppose condition of Theorem \ref{le-DNN-TDNN} for TDNN estimator and Assumptions \ref{assump:F}--\ref{assump:regu-phi} hold. Let the exploration phase size in the $k$th episode be $n_{k, exp} = n_k^{\frac {2m +1} {4m-1} \lor \frac{d + 8}{d + 16} \lor \frac{7}{13} }$ in Algorithm \ref{algo:non-par-pricing}, then for any $T  \geq [\max(C_1 d (1 + \frac{\log T} {2m+1}),  3 \log T)]^{ \frac{2m+ 1}{m} ( \frac{4m-1}{2m +1} \land \frac{d + 16}{d + 8} \land \frac{13}{7} ) }$, the total regret satisfies 
\begin{equation} \label{result-thm-TDNN-regret}
\Reg(\pi, T)   \leq  C T^{\frac{2m+1}{4m-1} \lor \frac{d + 8}{d + 16} \lor \frac{7}{13}}    [(\log T)^2  + d \log T].
\end{equation}


\end{theorem}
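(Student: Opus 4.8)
The plan is to mirror the architecture of the proof of Theorem \ref{thm-1}, replacing the DNN uniform rate with the sharper TDNN rate from \eqref{re-TDNN} and tracking how the improved bias order $s_1^{-\min\{3,\,4/d\}}$ propagates through the estimation of $F$ and, ultimately, into the regret. First I would decompose the cumulative regret over episodes: $\Reg(\pi,T) \le \sum_k \big[ n_{k,exp}\cdot B + \sum_{t\in\cT_{k,com}} \EE(\rev_t(p_t^*) - \rev_t(\hat p_t)) \big]$, bounding the exploration cost trivially by $B$ per round and the exploitation cost via the regret-smoothness link. Under Assumption \ref{assump:regu-phi} the map $p \mapsto \rev_t(p)$ is locally strongly concave near $p_t^*$, so $\rev_t(p_t^*) - \rev_t(\hat p_t) \lesssim |\hat p_t - p_t^*|^2$; since $p_t^* = h\circ\mathring\mu(\bx_t)$ and $\hat p_t = \hat h_k\circ\hat\mu_k(\bx_t)$ with $h,\hat h_k$ Lipschitz (from $\phi^{(1)}\ge L_\phi$ and its empirical analogue), this is controlled by $\|\hat\mu_k - \mathring\mu\|_\infty^2$ plus the error from $\hat h_k$ vs.\ $h$, and the latter is controlled by $\sup_{z}|\hat F_k(z;\hat\mu_k^{\TDNN}) - F(z)|^2$ and $\sup_z |\hat F_k^{(1)}(z) - F^{(1)}(z)|^2$.

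Next I would insert the quantitative rates. From \eqref{re-TDNN} with $s_1 \asymp s_2 \asymp n_{k,exp}^{d/(d+8)\,\lor\,1/7}$, the TDNN error is $\|\hat\mu_k^{\TDNN}-\mathring\mu\|_\infty = \tilde\cO\big(n_{k,exp}^{-(\frac{4}{d+8}\land\frac{3}{7})}\big)$ — the exponent choice equalizes bias $s_1^{-\min\{3,4/d\}}$ against variance $\sqrt{s_1/n_{k,exp}}$ (the $\frac17$ branch and the $\min\{3,4/d\}$ truncation are what give the $\frac{7}{13}$ term in the final bound). Theorem \ref{lemma-F-TDNN} then gives $\sup_z|\hat F_k(z;\hat\mu_k^{\TDNN})-F(z)| = \tilde\cO\big(n_{k,exp}^{-m/(2m+1)} + n_{k,exp}^{-(\frac{4}{d+8}\land\frac37)}\big)$; an entirely analogous argument (which I expect is packaged in Lemma \ref{lemma-F}) gives the same order for $\hat F_k^{(1)}$, so the exploitation regret per round in episode $k$ is $\tilde\cO\big(n_{k,exp}^{-\frac{2m}{2m+1}} + n_{k,exp}^{-(\frac{8}{d+8}\land\frac67)}\big)$. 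Summing over $t\in\cT_{k,com}$ (at most $n_k$ rounds) and adding the exploration cost $n_{k,exp}B$ yields a per-episode budget of order $n_{k,exp} + n_k\big(n_{k,exp}^{-\frac{2m}{2m+1}} + n_{k,exp}^{-(\frac{8}{d+8}\land\frac67)}\big)$.

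The last step is the bias–variance balancing in the episode-length budget. Choosing $n_{k,exp} = n_k^{\beta}$, the exploration term is $n_k^\beta$ and the exploitation term is $n_k^{1-\beta\gamma}$ where $\gamma = \frac{2m}{2m+1}$ or $\frac{8}{d+8}\land\frac67$ depending on the regime; equating exponents gives $\beta = \frac{1}{1+\gamma}$, i.e.\ $\beta = \frac{2m+1}{4m-1}$, $\frac{d+8}{d+16}$, or $\frac{7}{13}$ respectively, and the dominant one is the maximum, matching the stated $n_{k,exp} = n_k^{\frac{2m+1}{4m-1}\lor\frac{d+8}{d+16}\lor\frac{7}{13}}$. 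Since $n_k = 2^{k-1}n_0$ grows geometrically, $\sum_k n_k^\rho \asymp T^\rho$ for any $\rho \in (0,1)$, which collapses the episodic sum to $\tilde\cO(T^{\frac{2m+1}{4m-1}\lor\frac{d+8}{d+16}\lor\frac{7}{13}})$ after absorbing the logarithmic factors $(\log T)^2 + d\log T$ coming from the high-probability bounds (with $\delta$ chosen polynomially small in $T$, and the low-probability complement contributing only $\tilde\cO(1)$ to the expected regret). The main obstacle is the careful bookkeeping in propagating the $\hat\mu_k^{\TDNN}$ error through $\hat\phi_k^{-1}$ and $\hat h_k$ — one must verify that, on the high-probability event, $\hat\mu_k$ stays inside the neighborhood $\cN_0$ required by Assumption \ref{assump:xi} so that $\hat\phi_k$ is genuinely invertible with a uniformly lower-bounded derivative, and that the inversion does not inflate the rate; the truncation $\min\{3,4/d\}$ in the TDNN bias also needs to be handled separately in the small-$d$ versus large-$d$ regimes, which is precisely what produces the extra $\frac{7}{13}$ term absent from the DNN result.
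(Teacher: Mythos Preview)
Your overall architecture matches the paper's: episodic decomposition, trivial exploration cost, quadratic regret via Taylor expansion at $p_t^*$, then balance $n_{k,exp}$ against $n_k$ and sum the geometric series. That is fine.

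There is, however, a genuine gap in the rate you assign to $\hat F_k^{(1)}$. You write that ``an entirely analogous argument \ldots\ gives the same order for $\hat F_k^{(1)}$'' and then take the kernel contribution to the per-round exploitation regret to be $\tilde\cO(n_{k,exp}^{-2m/(2m+1)})$. This is not correct: estimating the \emph{derivative} of a regression function with an $m$th-order kernel costs one power of the bandwidth, so the rate for $\hat F_k^{(1)}$ is $n_{k,exp}^{-(m-1)/(2m+1)}$, not $n_{k,exp}^{-m/(2m+1)}$ (this is precisely Lemma~\ref{lemma-F(1)} in the paper, and it is the derivative term that dominates in Lemma~\ref{lemma-sup-h}, because $\hat h_k$ depends on $\hat\phi_k$ which contains $\hat F_k^{(1)}$ in the denominator). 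Consequently the squared kernel term in the exploitation regret is $n_{k,exp}^{-2(m-1)/(2m+1)}$, i.e.\ your $\gamma$ should be $\tfrac{2(m-1)}{2m+1}$, not $\tfrac{2m}{2m+1}$.

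You then state $\beta = \tfrac{1}{1+\gamma} = \tfrac{2m+1}{4m-1}$, but with your $\gamma = \tfrac{2m}{2m+1}$ one actually gets $\tfrac{2m+1}{4m+1}$; the stated $\tfrac{2m+1}{4m-1}$ only follows from the \emph{correct} $\gamma = \tfrac{2(m-1)}{2m+1}$. So your final exponent is right only because of an arithmetic slip that cancels the earlier error. To make the argument valid, replace the ``same order for $\hat F_k^{(1)}$'' claim by the derivative rate $n_{k,exp}^{-(m-1)/(2m+1)}$ (or invoke Lemma~\ref{lemma-sup-h} directly for $\|\hat h_k-h\|_\infty$) and redo the balancing with $\gamma=\tfrac{2(m-1)}{2m+1}$; everything else in your plan then goes through as in the paper.
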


\begin{remark}
The overall regret of doubly-nonparametric learning and pricing is a trade-off between two components: the term of order $\tilde{\cO} ( T^{\frac{d + 8}{d + 16} \lor \frac{7}{13}} )$ corresponds to learning a fourth differentiable mean utility function $\mathring{\mu}(\bx)$, while the term of order $\tilde{\cO} ( T^{\frac{2m + 1}{4 m - 1}})$ corresponds to learning an $m$-th differentiable distribution function $F(\cdot)$. 

When $F(\cdot)$ is comparatively more difficult to learn, i.e.~less smooth in the sense that $m  <  1 +  \min(\frac{12} {d}, 9)$, the overall regret is dominated by $\tilde{\cO} ( T^{\frac{2m + 1}{4 m - 1}})$. 
When $\mathring{\mu}(\bx)$ is comparatively more difficult to learn, i.e.~higher-dimensional such that $m  \geq 1 +  \min(\frac{12} {d}, 9)$, the the overall regret is dominated by $\tilde{\cO} ( T^{\frac{d + 8}{d + 16} \lor \frac{7}{13}})$ where the term $\frac{7}{13}$ corresponds to the setting of $d=1$. 

\end{remark}

\begin{remark}\label{rmk:why lower regret}

The structure of the random utility model leads to \eqref{eqn:opt-p} and \eqref{eqn:func-h}, facilitating an advanced analysis of regret, which demonstrates a high-order dependence on the square of the difference between the optimal and estimated prices. 
To elaborate, consider the revenue function $\rev_t(p)$ defined in \eqref{eqn:rev_t} along with its first and second derivatives, $\rev_t^{(1)}(\cdot)$ and $\rev_t^{(2)}(\cdot)$ respectively. 
As $p_t^*$ is the price that maximizes revenue, it holds that $\rev_t^{(1)}(p_t^*) = 0$. A Taylor expansion around $p_t^*$ yields
\begin{equation*} 
\rev_t(p_t) = \rev_t(p_t^*) + \frac{1}{2}\rev_t^{(2)}(p)(p_t-p_t^*)^2, 
\end{equation*}
for some $p$ between $p_t$ and $p_t^*$. 
Additionally, according to \eqref{eqn:func-h}, the $(p_t-p_t^*)^2$ is directly linked to the square of the estimation errors in $F$ and $\mathring{\mu}$.

This nuanced relationship results in improved regret bounds under various scenarios, surpassing those reported in \cite{chen2021nonparametric} and \cite{bu2022context}. Unlike the linear dependence on the difference between optimal and estimated prices observed in the discretization-based methods used in these studies, our model's regret depends quadratically on these differences, leading to potentially tighter regret bounds under comparable conditions.
\end{remark}

\begin{remark}
Theorem \ref{thm-1} and \ref{thm-TDNN} require only second- and fourth-order smoothness, respectively, of the true utility function $\mathring{\mu}(\cdot)$. When the smoothness of $\mathring{\mu}(\cdot)$ reaches a general level $\ell$ for an even integer $\ell$, it is possible to develop a $(\ell/2)$-scale DNN estimator analogous to the TDNN framework that achieves better convergence rates. 
\end{remark}

\begin{remark}
When we consider the policy set $ \Pi := \{\pi: \pi(p_t) =  \hat{\mu}_k (\bx_t) + \hat{\phi}_k^{-1} (- \hat{\mu}_k (\bx_t) ) \} $ with $\hat{\phi}_k(z) = z - \frac{1 - \hat{F}_k(z)}{ \hat{F}_k^{(1)} (z)}$, where $\hat{\mu}_k$, $\hat{F}_k$, and $\hat{F}^{(1)}_k$ are estimators for the mean utility function, noise distribution, and derivative of the noise distribution in the $k$-th episode, respectively, then the optimality of the regret upper bound reduces to the optimality of these estimators. Based on \cite{Stone1982}, our nonparametric estimators DNN or TDNN $\hat\mu_k$ , Nadaraya-Watson kernel estimator $\hat{F}$ and its derivative $\hat{F}^{(1)}$, all achieve optimal rate of convergence. Therefore, the proposed algorithm using those estimators is minimax optimal when we only focus on the policy set $\Pi$. 
\end{remark}

\begin{remark}
 Our analysis can be generally extended to a broader range of nonparametric estimators for the mean utility function, such as kernel estimators, random forest, and neural network methods, provided that these estimators achieve a similar uniform convergence rate as presented in Theorem \ref{le-DNN-TDNN} for the DNN and TDNN estimators. 
\end{remark}


\section{Empirical Analysis} 
\subsection{Synthetic Data}
\label{sec:simul}
\noindent In this subsection, we conduct large-scale simulations to illustrate the efficiency of our policy. Recall that the pricing model is characterized by \eqref{eqn:nonpar-utility} and \eqref{eqn:y-logistic}. To elaborate, the mean utility function is set to be $\mathring{\mu}(\bx_t) = 2(x_{t1}-1)^2 + 2(x_{t2}-1)^2 + 2(x_{t3}-1)^2 $, where the contextual covariates $\{\bx_t\}_{t \geq 1}$ are i.i.d. bounded random vectors of dimension $d=3$, and each vector consists of independent entries uniformly distributed according to $\Unif(0, 2)$.
The noise $\{\eps_t\}$ are i.i.d. with probability density $f(z) \propto (1/4-z^2) \cdot\bbone(\abs{z}\le1/2),$ where the smoothness order $m=2$.

When implementing algorithm \ref{algo:non-par-pricing}, we divide the time horizon $T$ into consecutive episodes by setting the length of the $k$-th episode as $n_k=2^{k-1}n_0$ with $k\in\NN^{+}$ and $n_0=200$. We further separate every episode into an exploration phase with length $|T_{k,exp}|=\lfloor(dn_k)^{\frac{2m + 1}{4m - 1}}\rfloor$ induced from Theorem \ref{thm-1}, then the exploitation phase contains the rest of the time in that episode. In the exploration phase, we sample $p_t$ from $\textrm{Unif}(0,B=6.5)$, since $B=6.5$ is a valid upper bound of $v_t$. In the exploitation phase, we set the kernels used in \ref{eqn:aux-def-a-xi} as follows:  $K(z)=(1-11/3z^2)(1 - z^2)^3 \cdot \bbone(\abs{z}\le1/2)$. In episode $k$, we set the bandwidth $b_k$ as $6\cdot |T_{k,exp}|^{-\frac{1}{2m+1}}$ with $m=2$ according to the assumptions in the theoretical analysis. In reality, one can also tune the bandwidth and the order of kernel $K(\cdot)$ by using cross-validation at the end of every exploration phase. Moreover, when calculating $p_t=\hat{h}_k \circ \hat{\mu}_k(\bx_t) =\hat{\mu}_k(\bx_t)+\hat\phi_k^{-1}(-\hat{\mu}_k(\bx_t))$, we find $\hat\phi_k^{-1}(-\hat{\mu}_k(\bx_t))$ as follows: First, we look for $x\in[-1,1]$ such that $\hat\phi_k(x)=-\hat{\mu}_k(\bx_t)$ (We choose a large interval $[-1,1]$ that contains the true support of $\phi(x)$ [-0.5, 0.5], since in reality, we might only know a wider range of the true support). Then, we make a transformation of variable $x$ to $x(y)=-2\cdot \exp(y)/(1+\exp(y))+1$ and solve $y$ as the root of $\hat\phi_k(x(y))+\hat{\mu}_k(\bx_t)=0$ by using Newton's method starting at $y=0$. Finally, we set $x=-2\cdot \exp(y)/(1+\exp(y))+1$ as $\hat\phi_k^{-1}(-\hat{\mu}_k(\bx_t))$ and offer $p_t$ according to the algorithm.

In the simulation, we also compare the efficiency of our methods with two closely related approaches: the 'RMLP-2' proposed by \citep{javanmard2019dynamic}, and the linear model based policy introduced by \citep{fan2022policy}. They both address the same problem but assume the underlying mean utility function to be the linear model and, hence, fit a linear regression during the exploration phase. Besides, the 'RMLP-2' additionally assumes the noise distribution belongs to a parametric function class, while given the algorithm's lack of access to the true noise distribution, we adopt a Gaussian distribution assumption for the noise. 
We conduct repeated trials of Algorithm \ref{algo:non-par-pricing}, the linear model based policy and the 'RMLP-2', respectively, from $n_1=200$ to $n_8=12800$, to document the cumulative regret $\Reg(\pi,T)=\sum_{t = 1}^T\Reg(\pi,t)$ with $T=1$ to  $\sum_{k = 1}^8 2^{k-1}*200$. Consequently, the averaged regret is stored as $\text{AveReg}(T)= \Reg(\pi,T)/T.$ And the illustration of the distinctions in performance is detailed in several figures. Specifically, we plot $\Reg(\pi,T)$ against $T$ in Figure \ref{fig:sim cumregret}, and the averaged regret $\text{AveReg}(T)$ against $T$ in Figure \ref{fig:sim averaged regret}. 
Recall from Theorem \ref{thm-1} that the cummulative regret $\Reg(\pi, T)\lesssim [(\log T)^2  + d \log T ]T^{\frac{2m + 1}{4m -1}}$ when $d \le 6/(m-1)$. Hence, we plot $\tilde{\textrm{reg}}(T)$ against $\log(T)-\log(1500)$ in Figure \ref{fig:sim logregret}, where $\tilde{\textrm{reg}}(T):=\log(\Reg(T)) - (2\log\log T+\log3\log T) - (\log(\Reg(T=1500)) - 2\log\log 1500 - \log3\log 1500)$. 
In addition, we plot the averaged regret of the exploitation phase against $T_{com}$ in Figure \ref{fig:sim relative regret}, where the y-axis Averaged Regret (Exploitation Phase) is defined as: $\text{AveReg}(\calT_{com}) =\sum\limits_{t \in \calT_{com}} Reg(t)/T_{com}$, and $T_{com}=|\calT_{com}|.$

\begin{figure}[H]
\centering
\subfloat[Cumulative Regret against T.]{
  \includegraphics[width=0.45\textwidth]{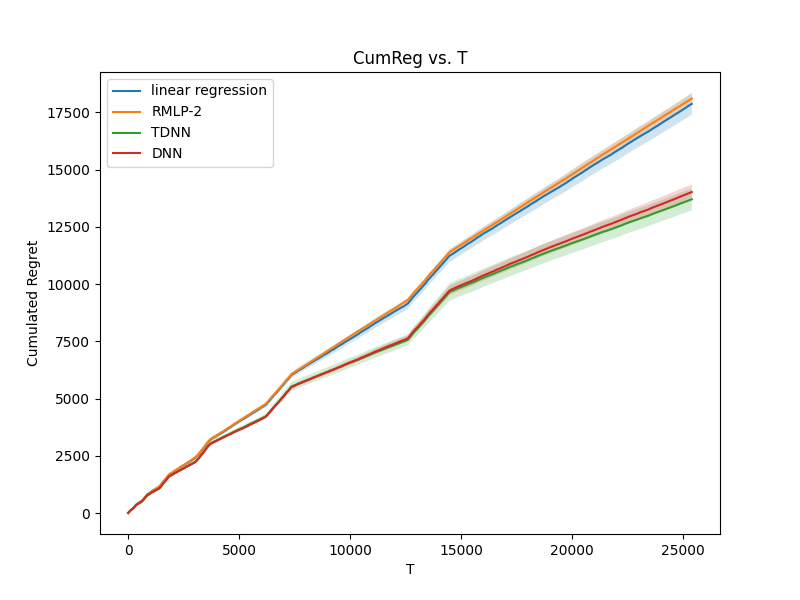}
  \label{fig:sim cumregret}
}
\subfloat[Regret log-log plot.]{
  \includegraphics[width=0.45\textwidth]{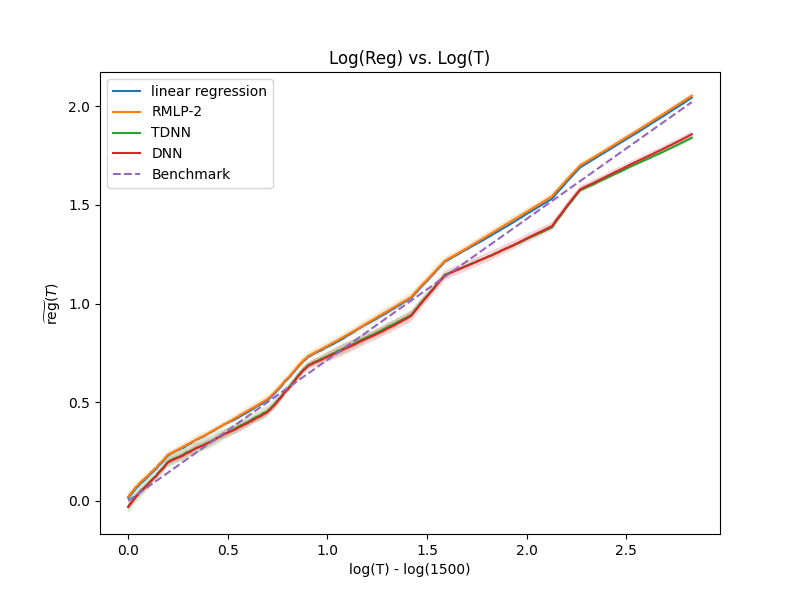}
  \label{fig:sim logregret}
}

\medskip

\caption{(a): The x-axis is T while the y-axis is $\Reg(\pi, T)$. (b): The x-axis is 
\( \log(T) - \log(1500) \), while the y-axis 
is \( \Tilde{\text{reg}}(T)\) defined above. For (a) and (b), the solid lines represent the mean $\Reg(\pi, T)$ and $\Tilde{\text{reg}}(T)$, respectively, over 20 independent runs. The light color areas 
around those solid lines depict the standard error of the estimations. The blue, green, and red lines represent linear regression, TDNN, and DNN estimations in the exploration phase in Algorithm 1, respectively. And the orange line represents applying 'RMLP-2'. 
The dashed lines represent the benchmark whose slopes are equal to \( \frac{2m+1}{4m-1} \) with \( m=2 \).}
\label{fig:1}
\end{figure}

\begin{figure}[H]
\centering
\subfloat[$\text{AveReg}(T)$ against T.]{
  \includegraphics[width=0.45\textwidth]{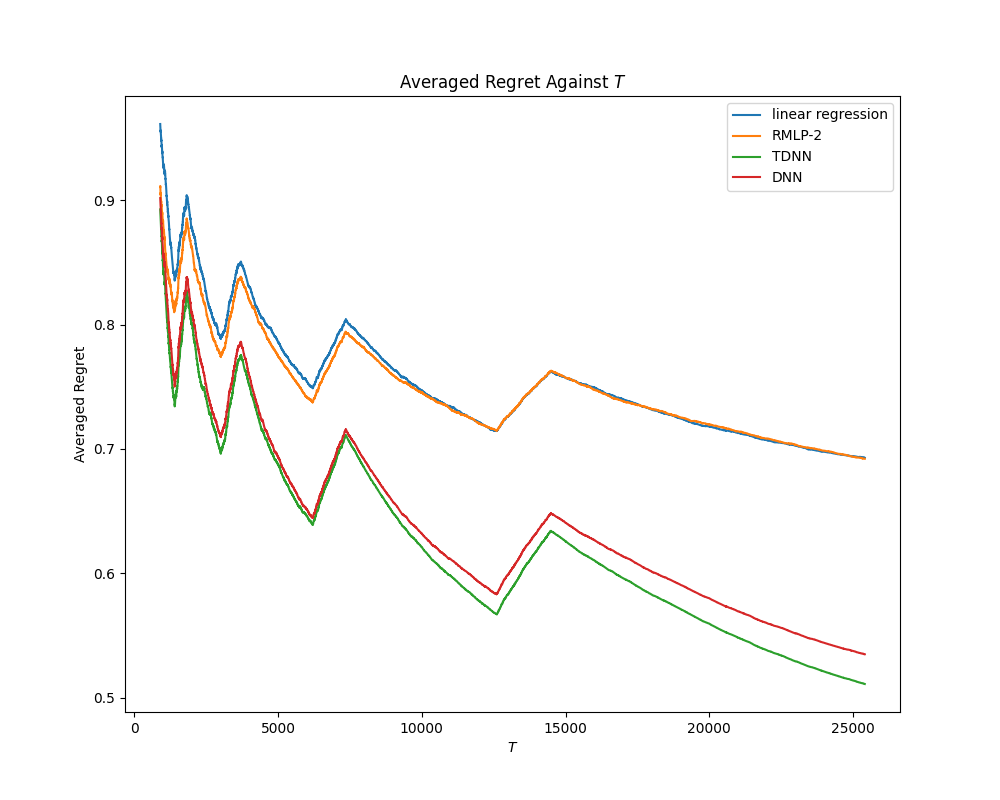}
  \label{fig:sim averaged regret}
}
\subfloat[$\text{AveReg}(T_{com})$ against $T_{com}$.]{
  \includegraphics[width=0.45\textwidth]{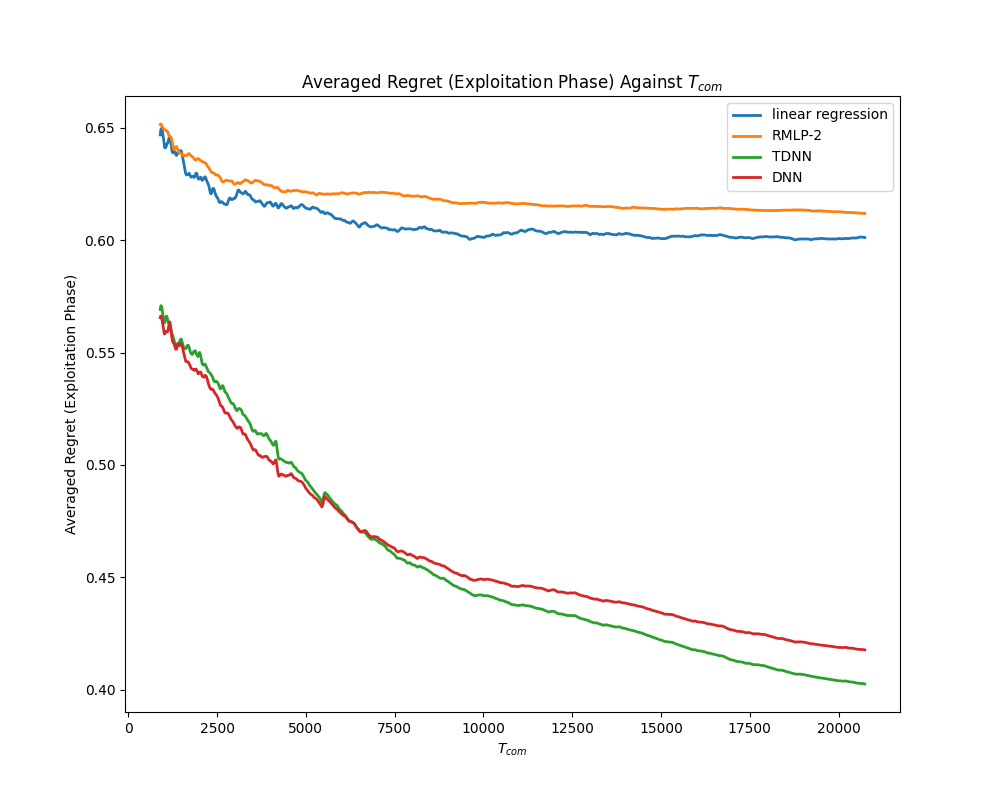}
  \label{fig:sim relative regret}
}

\medskip

\caption{For (a) and (b), the solid lines represent the mean $\text{AveReg}(T)$ and $\text{AveReg}(\calT_{com})$ respectively, over 20 independent runs. The remaining caption is the same as Figure \ref{fig:1}.}
\label{fig:2}
\end{figure}

From Figure \ref{fig:1}, we conclude that under the specified setting, the rates of the empirical cumulative regrets produced by Algorithm \ref{algo:non-par-pricing} (as shown by the solid green and red lines) do not exceed their theoretical counterparts given in Theorems \ref{thm-1}(as shown by the dashed lines). In many cases, the growth rates of the empirical regrets are very close to those of the theoretical line. This demonstrates the tightness of our theoretical results. We also see that the regrets we achieved are much smaller than those two approaches. As for the comparison with linear regression, TDNN and DNN estimations are robust to the mis-specification of the mean utility function, and in comparison with 'RMLP-2', our method is robust to the mis-specification of the noise distribution since our algorithm can adapt to all noise distributions in the non-parametric class. From Figure \ref{fig:2}, we conclude that under the same setting, Algorithm \ref{algo:non-par-pricing} incorporating TDNN estimation surpasses all other methods in performance, since as the episode length increases, the TDNN method accelerates the convergence of regret during the exploitation phase with more accurate estimation of the underline mean utility function and consequently, more accurately approaching towards noise distribution.  In contrast, the decay in the averaged regret of exploitation phase for the two benchmark approaches nearly goes to zero. This occurs because the increasing size of sample points during the exploration phase does not enhance the estimation accuracy of the mean utility function due to mis-specification.

\subsection{Real Data Analysis}
\label{sec:real}

\noindent We utilize the real-life auto loan dataset from Columbia University's Center for Pricing and Revenue Management, a dataset that has also been employed in several prior studies \citep{Phillips2015Effectiveness,ban2021personalized,luo2024distribution,Wang2023OnlineRegularization}. This dataset encompasses 208,085 auto loan applications spanning from July 2002 to November 2004 and includes various features, such as loan amount and borrower information. For consistency, we perform feature selection in line with existing work \citep{ban2021personalized,luo2024distribution,Wang2023OnlineRegularization}, focusing on the following four features: approved loan amount, FICO score, prime rate, and competitor's rate. Regarding the price variable, we also computed it in the same way as the aforementioned literature, where $p_t=\textrm{Monthly Payment}\cdot \sum_{t=1}^{\textrm{Term}}(1+\textrm{Rate})^{-t}-\textrm{Loan Amount}$. The rate is set as $0.12\% $, which is an approximate average of the monthly London interbank rate for the studied time period.
Moreover, this dataset also records the purchasing decisions of the borrowers, given the price set by the lender. 

Note that one is not able to obtain online responses to any algorithms. Thus, we follow the calibration idea proposed in \citep{ban2021personalized,luo2024distribution,Wang2023OnlineRegularization} to first learn a binary choice model using the entire dataset and then leverage it as the ground truth to conduct numerical experiments. To make calibration model more stable, we process all covariates by standardization and take log of the price variable. One can also remove the extreme values that are outside the 0.01 and 0.99 percentiles to mitigate the influence of outliers. For learning the decision model, we typically fit a generalized logistic model with a quadratic form as the logits; meanwhile, instead of using a conventional logistic function, we adopt a specified sigmoid function with bounded support. Specifically, denote the contextual covariate as $\bx_t \in \RR^4, \; t=1,2,\dots$ and the quadratic form $\mathring{\mu}(\bx_t) = \bbeta^\top \bx_t^2, \; \bbeta \in \RR^4$ for the mean utility function. Then the noise $\{\eps_t\}$ are set to independently follow the distribution with probability density function $f(z) \propto (1-(\frac{z}{3})^2)^4 \cdot\bbone(\abs{z}\le3).$ Furthermore, with the belief that the coefficients of the mean utility function can be varied across the data while the noise shares the same distribution. We employ a two-stage approach learning calibration model. Initially, the covariates $\{\bx_t\}_{t\ge 1}$ are divided into 20 clusters based on similarities, utilizing the K-means clustering technique. Here we denote $\bx_t^{(k)}$ if the covariate $\bx_t$ belongs to the $k$-th cluster. Then, we apply generalized logistic regression to each data cluster, fitting the coefficients of the quadratic form for each group while maintaining the same noise distribution. Figure
\ref{fig: noise distribution} demonstrates the alignment between observed data and the theoretical model over the noise $\eps_t$, which means the calibration model is well specified.

\begin{figure}[H]
\centering
\subfloat[Empirical vs. Theoretical(calibration model) probability distributions.]{
  \includegraphics[width=0.45\textwidth]{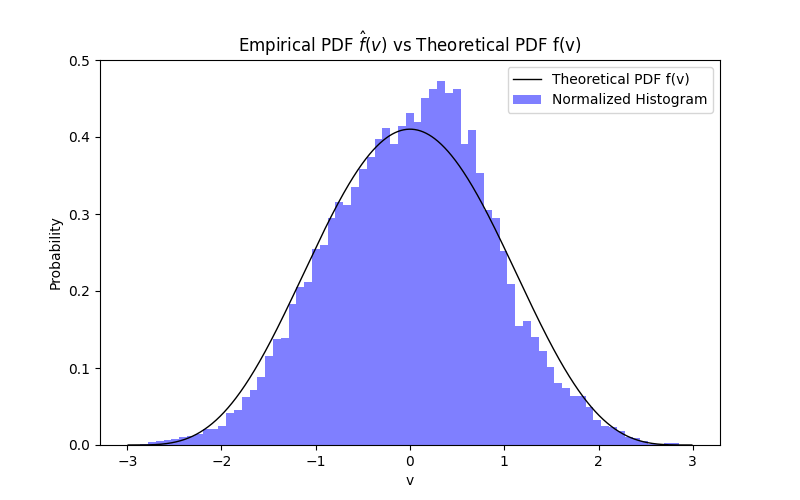}
  \label{fig:empirical_pdf}
}
\subfloat[Empirical vs. Theoretical (calibration model) cumulative distributions.]{
  \includegraphics[width=0.45\textwidth]{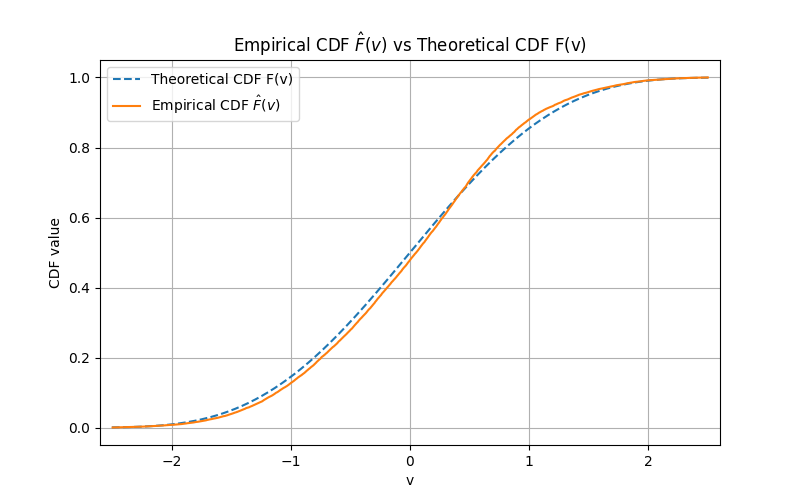}
  \label{fig:empirical_cdf}
}

\medskip

\caption{Comparison of Empirical and Theoretical Probability Distributions.}
\label{fig: noise distribution}
\end{figure}

Given these key components, the remaining experiments remain almost the same as discussed in the simulation part. Here we use the calibration model, setting $\mathring{\mu}(\bx_t)$ with parameter $\bbeta_k, k=1, \dots, 20$, and the noise distribution $f(v)$ given above and sample $\bx_t$ from those four features. Since in reality we don't know the true smoothness order of noise distribution, in each episode, we set the exploration phase with length $|\calT_{k,exp}| =\lfloor(dn_k)^{\frac{d+4}{d+8}}\rfloor$. We set $B=6, n_0=200$ and conduct Algorithm \ref{algo:non-par-pricing}. Recall from Theorem \ref{thm-1} that the regret $\Reg(\pi, T)\lesssim [(\log T)^2  + d \log T ]T^{\frac{d + 4}{d + 8}}.$ Hence, we plot $\tilde{\textrm{reg}}(T)$ against $\log(T)-\log(1500)$ in Figure \ref{fig:real logregret}, where $\tilde{\textrm{reg}}(T):=\log(\Reg(T)) - (2\log\log T+\log4\log T) - (\log(\Reg(T=1500)) - 2\log\log 1500 - \log4\log 1500)$. 

\begin{figure}[H]
\centering
\subfloat[Cumulative Regret against T.]{
  \includegraphics[width=0.45\textwidth]{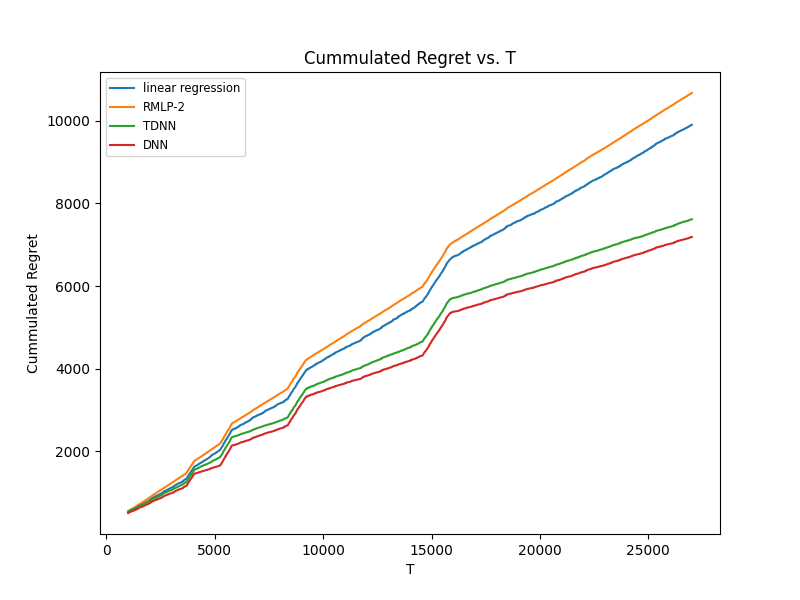}
  \label{fig:real cumregret}
}
\subfloat[Regret log-log plot.]{
  \includegraphics[width=0.45\textwidth]{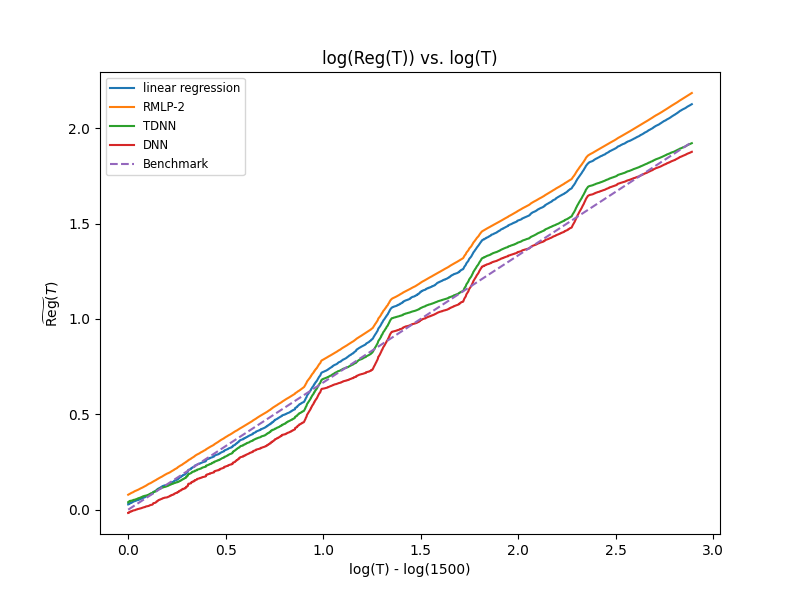}
  \label{fig:real logregret}
}

\medskip

\caption{Comparison between our methods and linear model based policy and ‘RMLP-2’ based on real data application. (a): The x-axis is T while the y-axis is $\Reg(\pi, T)$. (b): The x-axis is 
\( \log(T) - \log(1500) \), while the y-axis 
is \( \Tilde{\text{reg}}(T)\) defined above. The dashed lines has slope equal to \( \frac{d+4}{d+8} \) with \( d=4 \). Here we only run algorithm one time.}
\end{figure}

\begin{figure}[H]
\centering
\subfloat[Averaged Regret against T.]{
  \includegraphics[width=0.45\textwidth]{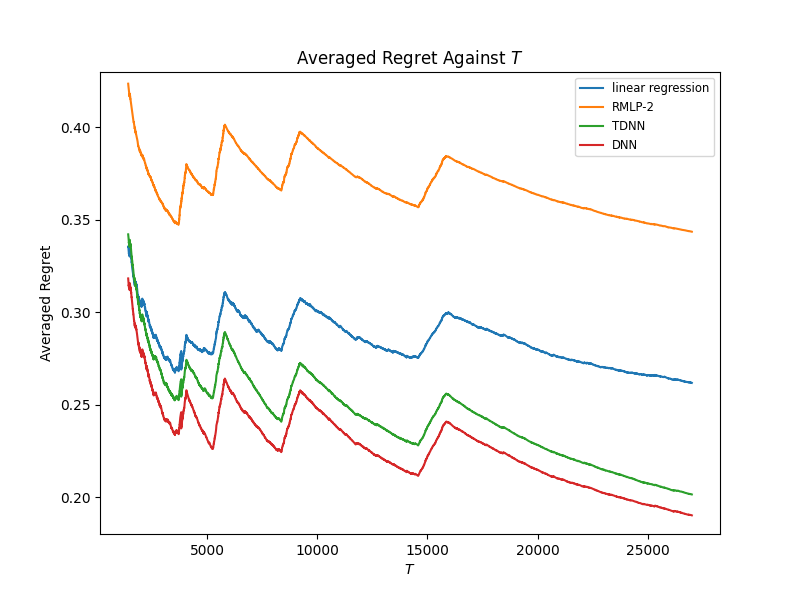}
  \label{fig:real averaged regret}
}
\subfloat[$\text{AveReg}(T_{com})$ against $T_{com}$.]{
  \includegraphics[width=0.45\textwidth]{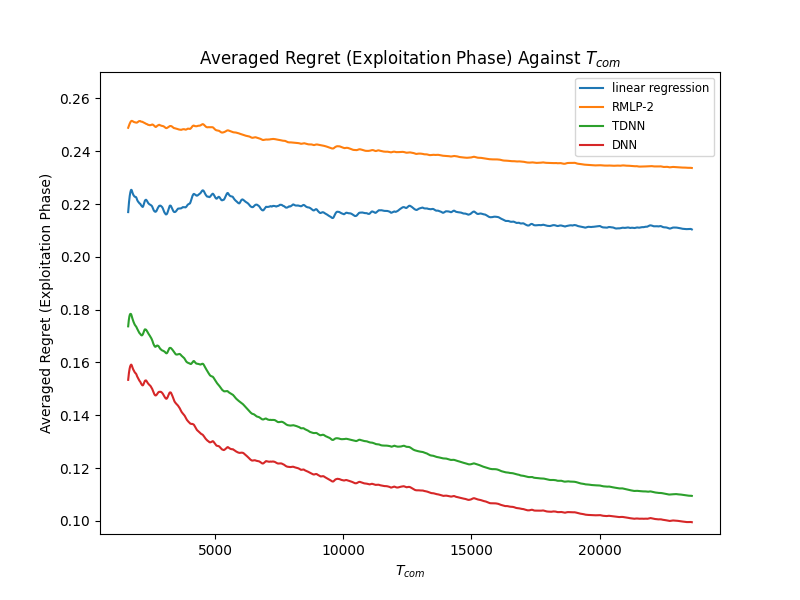}
  \label{fig:real relative regret}
}

\medskip

\caption{Comparison between our methods and linear model based policy and ‘RMLP-2’ based on real data application. For (a) and (b), the solid lines represent the $\text{AveReg}(T)$ and $\text{AveReg}(\calT_{com})$, respectively.}
\end{figure}

To summarize, our methods outperform the 'RMLP-2' \citep{javanmard2019dynamic}, and the linear model based policy \citep{fan2022policy} in terms of both the regret performance and the ability to adapt to diﬀerent mean utility models and noise distributions.

\section{Conclusion} \label{sec:conclusion}

\noindent This paper advances the field of dynamic pricing by introducing and exploring doubly nonparametric random utility models. By forgoing parametric assumptions for both the mean utility function and the noise distribution, our approach reduces the risk of model misspecification and increases the reliability of pricing strategies. Employing advanced nonparametric estimation techniques like Distributional Nearest Neighbors and Nadaraya-Watson kernel estimation, we enhance the accuracy of dynamic pricing decisions in complex market environments. Our theoretical contributions, which include new analytical results and insights into the interaction between model components, offer valuable guidance for both academic research and practical applications. This work not only deepens the understanding of nonparametric estimation in pricing models but also provides a robust framework for businesses to optimize pricing dynamically and effectively in data-driven markets.

%
%

\clearpage
\setstretch{1.68}
\bibliographystyle{\mybibsty}
\bibliography{\mybib}

%
%
\clearpage
\setstretch{1.68}
\setcounter{page}{1}
\begin{appendices}
	\begin{center}
		{\Large Supplementary Material of \\ ``\TTL''}
		
		\vspace{2em}
		{
			Elynn Chen$^\flat$ \hspace{2ex}
			Xi Chen$^\natural$ \hspace{2ex}
			Lan Gao$^\sharp$ \hspace{2ex}
			Jiayu Li$^\dag$ \\ \normalsize
			\smallskip
			$^{\flat,\natural,\dag}$Leonard N. Stern School of Business, New York University. \\ \normalsize
			\smallskip
			$^{\sharp}$ Haslam College of Business, University of Tennessee Knoxville.
		}
	\end{center}

	\section{Proof of Regret Bounds (Theorems \ref{thm-1} and \ref{thm-TDNN})}
	\label{append:proof-regret}
	The regret upper bound relies on the statistical convergence rate in estimating the mean utility function $\mathring{\mu}(\bx)$.
	To present our proof in a more general framework, Lemmas \ref{lemma-F}, \ref{lemma-F(1)}, \ref{lemma-sup-h} and most proofs of Theorem \ref{le-DNN-TDNN} are build upon the following general assumption on the accuracy for an estimator $\hat{\mu}_k (\bx)$ of the $\mathring{\mu}(\bx)$. 
	Specifically, when dealing with the DNN and TDNN estimators outlined in Section \ref{sec:mean-utility}, we can deduce that the convergence rate $r_{n_{k, exp}} \leq C  n_{k, \exp}^{- \frac{2} {d + 4}} \sqrt{ d \log n_{k, \exp}} $ by applying Theorem \ref{le-DNN-TDNN} with $\delta = 0.5 n_{k, exp}^{-1}$ and $s = O(n_{k, \exp}^{\frac{d}{d +4}})$.
	
	\begin{assumption} \label{assump:f_hat}
		Suppose an estimator $\hat{\mu}_k(\bx)$ is applied using i.i.d. observations $\{(\bx_i, By_i)\}_{i\in\calT_{k, exp}}$, where $\calT_{k, exp}$ is the exploration phase in the $k$-th episode. 
		Let $n_{k, exp} = \abs{\calT_{k, exp}}$ be the number of observations in $\calT_{k, exp}$.
		Assume  there exists a vanishing  sequence $r_{n_{k, exp}}$ such that
		\begin{equation}
			\PP\paren{ \sup_{\bx  \in \mathcal{X} } | \hat{\mu}_k (\bx) - \mathring{\mu}(\bx) |  \geq r_{n_{k, exp}} }  \leq n_{k, exp}^{-1}.
		\end{equation}
	\end{assumption}
	
	\subsection{Proof of Theorem \ref{thm-1}}
	The proof consists of two major steps. The first step is to establish the approximation accuracy of the kernel estimator $\hat{F}_k (z; \hat{\mu}_k)$ for estimating the noise distribution $F$, provided with the accuracy of the estimator $\hat{\mu}_k$ for the non-parametric regression function $\mathring{\mu}$. 
	Therefore, we can obtain the approximation accuracy of the data-driven optimal price $p_t$ towards the optimal price $p_t^*$. The second step is built on the accuracy derived for $p_t$ in the first step and establishes the overall regret bound. 
	For simplicity, we denote $\hat{F}_k (z; \hat{\mu}_k) := \hat{F}_k (z; \hat{\mu}_k, \{\bx_t, y_t, p_t\}_{t \in \cT_{k, exp}})$ by ignoring its dependence on the data points $\{\bx_t, y_t, p_t\}_{t \in \cT_{k, exp}}$ collected in the exploration phase.
	We define 
	\begin{equation} \label{def_N_k}
		\calN_k = \{\mu:\|\mu-\mathring{\mu}\|_{\infty} := \sup_{\bx \in \mathcal{X}} |\mu(\bx) - \mathring{\mu}(\bx)| \le r_{n_{k, exp}}\},
	\end{equation}
	and choose the bandwidth $b_k = n_{k, exp}^{\frac{1}{2m + 1}}$ in the kernel estimators defined in \eqref{eqn:aux-def-a-xi}.
	
	\noindent
	{\it Step 1. Establish the following three lemmas on the approximation accuracy of kernel estimator $\hat{F}_k$ and its first-order derivative. In addition, we will also show the approximation accuracy of the estimator $\hat{h}_k(\cdot)$ (defined in \eqref{eqn:func-h-hat}) which plays a crucial role in determining the data-driven optimal price $p_t$.}  
	Their proofs are presented in Sections \ref{sec:lemma9-pf}, \ref{sec:lemma10-pf} and \ref{sec:lemma11-pf}, respectively.
	
	\begin{lemma}[Generalization of Theorem \ref{lemma-F-DNN}] \label{lemma-F}
		Under Assumptions \ref{assump:F}--\ref{assump:regu-phi} and \ref{assump:f_hat}, there exist constants $C_1> 0$ and $C_2 > 0$ such that for each $k$ satisfying $n_{k, exp}^{\frac{2m}{2m + 1}} \geq \max(C_1 d ( \frac{\log n_{k, exp}}{2m + 1} + 1 ), 3 \log n_{k, exp}) $ and each $\delta \in \big(4 \exp \{- C_2 n_{k, exp}^{\frac{m}{2m + 1}} / \log n_{k, exp} \}, \, \frac{1} {2} \big]$, it holds with probability at least $ 1 - 2\delta$ that
		\begin{equation} \label{eqn:kernel-accuracy}
			\sup_{z \in \calS_{\eps}, {\mu}\in \mathcal{N}_{k}} \abs{ \hat{F}_k (z; {\mu}) - F(z) } 
			\leq  B_F n_{k, exp}^{- \frac{m}{2m + 1}} \sqrt{\log n_{k, exp} } ( \sqrt d + \sqrt{\log  \delta^{-1}} ) + B_F r_{n_{k, exp}},
		\end{equation}
		where $B_F > 0$ is a constant. 
	\end{lemma}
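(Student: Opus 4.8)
The plan is to pass through the Nadaraya--Watson ratio representation and reduce the problem to the uniform control of the two kernel sums in \eqref{eqn:aux-def-a-xi}. Write $\xi(z):=\xi(z;\mathring\mu)$, $a(z):=\xi(z;\mathring\mu)F(z;\mathring\mu)$ (the $a$ of Assumption \ref{assump:xi}) and $\bar a(z):=\xi(z)-a(z)=\xi(z)\bigl(1-F(z)\bigr)$, so that $F(z)=1-\bar a(z)/\xi(z)$ while $\hat F_k(z;\mu)=1-\hat a_k(z;\mu)/\hat\xi_k(z;\mu)$, and the population target of $\hat a_k(\cdot;\mu)$ is $\bar a(\cdot;\mu):=\xi(\cdot;\mu)-a(\cdot;\mu)=\xi(\cdot;\mu)\bigl(1-F(\cdot;\mu)\bigr)$. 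Then
\[
\hat F_k(z;\mu)-F(z)=\frac{\bar a(z)\bigl(\hat\xi_k(z;\mu)-\xi(z)\bigr)-\xi(z)\bigl(\hat a_k(z;\mu)-\bar a(z)\bigr)}{\xi(z)\,\hat\xi_k(z;\mu)}.
\]
Since $0\le\bar a(z)\le B_1$, $\xi(z)\le B_1$ and $\xi(z)\ge B_2$ on $\calS_\eps$ (Assumption \ref{assump:xi}), it suffices to bound $\sup_{z\in\calS_\eps,\mu\in\calN_k}|\hat\xi_k(z;\mu)-\xi(z)|$ and $\sup_{z\in\calS_\eps,\mu\in\calN_k}|\hat a_k(z;\mu)-\bar a(z)|$ by the right-hand side of \eqref{eqn:kernel-accuracy} up to constants: the former estimate forces $\hat\xi_k(z;\mu)\ge B_2/2$ on the good event (controlling the denominator), and then the numerator is controlled termwise.

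For $\hat\xi_k$ (and, mutatis mutandis, $\hat a_k$) I would decompose
\[
\hat\xi_k(z;\mu)-\xi(z)=\bigl(\hat\xi_k(z;\mu)-\EE\hat\xi_k(z;\mu)\bigr)+\bigl(\EE\hat\xi_k(z;\mu)-\xi(z;\mu)\bigr)+\bigl(\xi(z;\mu)-\xi(z)\bigr).
\]
The kernel-bias term is $O(b_k^m)$ uniformly in $(z,\mu)$: from $\EE\hat\xi_k(z;\mu)=\int K(u)\,\xi(z+b_ku;\mu)\,\D u$, use that $K$ is an order-$m$ kernel (Assumption \ref{assump:K}) while $\xi(\cdot;\mu)\in\mathbb{C}^{(m)}$ has $\ell_\xi$-Lipschitz $m$-th derivative on $\calS_\eps$ ($\calN_k$ being inside the neighbourhood of Assumption \ref{assump:xi} for all episodes of interest); the $\hat a_k$ version uses the stated smoothness of $a(\cdot;\mu)$. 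The mean-function perturbation term is $O(r_{n_{k,exp}})$ uniformly over $\calN_k$: since the exploration price is $\Unif(0,B)$ and independent of $\bx_t$, one has $\xi(z;\mu)=\tfrac1B\PP\{-z\le\mu(\bx_t)\le B-z\}$, so the assumed boundedness of $\xi^{(1)}(\cdot;\mu)$ bounds the density of $\mu(\bx_t)$ and a sup-norm shift of $\mu$ by $r_{n_{k,exp}}$ moves $\xi(z;\mu)$ by $O(r_{n_{k,exp}})$; the $\bar a$-version is the same, using Assumption \ref{assump:F} for the $F$ factor. Taking $b_k\asymp n_{k,exp}^{-1/(2m+1)}$ balances $b_k^m\asymp n_{k,exp}^{-m/(2m+1)}$ against the stochastic term, and these two deterministic pieces supply the $n_{k,exp}^{-m/(2m+1)}+r_{n_{k,exp}}$ part of \eqref{eqn:kernel-accuracy}.

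The remaining and \emph{hardest} part is uniform control of $\hat\xi_k(z;\mu)-\EE\hat\xi_k(z;\mu)$ over $z\in\calS_\eps$ \emph{and} $\mu\in\calN_k$. For fixed $(z,\mu)$ the i.i.d.\ summands $\tfrac1{b_k}K\bigl((p_t-\mu(\bx_t)-z)/b_k\bigr)$ are bounded by $\|K\|_\infty/b_k$ with variance $O(1/b_k)$ (the kernel support has length $O(b_k)$ and $\xi(\cdot;\mu)\le B_1$), so Bernstein's inequality gives a deviation of order $\sqrt{\log\delta^{-1}/(n_{k,exp}b_k)}+\log\delta^{-1}/(n_{k,exp}b_k)$; with the chosen bandwidth the first piece is exactly $\asymp n_{k,exp}^{-m/(2m+1)}\sqrt{\log\delta^{-1}}$ and the second is lower order precisely on the range $\delta>4\exp\{-C_2 n_{k,exp}^{m/(2m+1)}/\log n_{k,exp}\}$ of the statement --- using the variance rather than boundedness alone is essential, else one gets only the weaker $\sqrt{1/(n_{k,exp}b_k^2)}$. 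Uniformity in $z$ is cheap: $z\mapsto\hat\xi_k(z;\mu)$ is Lipschitz with constant $\|K^{(1)}\|_\infty/b_k^2$, so a polynomially fine grid on the compact $\calS_\eps$ costs only a $\sqrt{\log n_{k,exp}}$ factor in the union bound plus a negligible between-points fluctuation. Uniformity in $\mu$ is the delicate point, because $\calN_k$ is an infinite-dimensional $\|\cdot\|_\infty$-ball and the crude Lipschitz-in-$\mu$ bound $\|K^{(1)}\|_\infty r_{n_{k,exp}}/b_k^2$ is far too large; here I would use that $\hat\xi_k(z;\mu)$ depends on $\mu$ only through the finite vector $(\mu(\bx_t))_{t\in\calT_{k,exp}}$, peel off $\mathring\mu$, and bound the centered increment process $\hat\xi_k(z;\mu)-\hat\xi_k(z;\mathring\mu)$ by a second, variance-aware Bernstein step: each increment $K((p_t-\mu(\bx_t)-z)/b_k)-K((p_t-\mathring\mu(\bx_t)-z)/b_k)$ is bounded by $\ell_K r_{n_{k,exp}}/b_k$ but vanishes off an event of probability $O(b_k+r_{n_{k,exp}})$, hence has variance $O(r_{n_{k,exp}}^2/b_k)$ and the process is of order $r_{n_{k,exp}}$, absorbed into $B_F r_{n_{k,exp}}$.

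Intersecting with the event $\hat\mu_k\in\calN_k$ from Assumption \ref{assump:f_hat}, uniting over the $z$-grid, and collecting the $b_k^m$, $r_{n_{k,exp}}$ and Bernstein contributions yields \eqref{eqn:kernel-accuracy}; the extra $\sqrt d$ in the first term reflects that $r_{n_{k,exp}}$ for the DNN/TDNN estimators carries a $\sqrt{d\log n_{k,exp}}$ that is propagated through the bound. I expect the technical heart of the argument to be exactly this $\mu$-uniformity --- getting the increment process down to rate $r_{n_{k,exp}}$ rather than the naive $r_{n_{k,exp}}/b_k^2$, notwithstanding that $\hat\mu_k$ and the kernel estimator are computed from the same exploration sample --- since the bias, the deterministic perturbation, and the fixed-$(z,\mu)$ concentration are all routine.
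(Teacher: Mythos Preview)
Your decomposition is workable but the paper takes a shorter route: instead of the ratio identity, it inserts the perturbed target $F(z;\mu):=1-\EE[y_t\mid p_t-\mu(\bx_t)=z]$ and bounds $|\hat F_k(z;\mu)-F(z;\mu)|+|F(z;\mu)-F(z)|$. The first piece is a standard Nadaraya--Watson error at a single $\mu$, for which the paper simply cites Lemma~4.2 of \cite{fan2022policy}. The second is handled by a two-line sandwich: conditioning on $p_t-\mu(\bx_t)=z$, the event $v_t\ge p_t$ becomes $\eps_t\ge z-(\mathring\mu(\bx_t)-\mu(\bx_t))$, and since $\eps_t$ is independent of $(\bx_t,p_t)$ and $\|\mu-\mathring\mu\|_\infty\le r_{n_{k,exp}}$ one gets $F(z-r_{n_{k,exp}})\le F(z;\mu)\le F(z+r_{n_{k,exp}})$, hence $|F(z;\mu)-F(z)|\le\ell_F r_{n_{k,exp}}$. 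This bypasses your density-level bounds on $\xi(z;\mu)-\xi(z)$ and $\bar a(z;\mu)-\bar a(z)$; in fact, since $p_t\sim\Unif(0,B)$ and $\mu(\bx_t)\in[B_\eps,B-B_\eps]$ for $\mu\in\calN_k$ (Lemma~\ref{thm:g*(x)-bound}), one has $\xi(z;\mu)\equiv 1/B$ on $\calS_\eps$ for every such $\mu$, so your $\xi$-perturbation is identically zero and your $\bar a$-perturbation reduces exactly to $\tfrac1B|F(z;\mu)-F(z)|$ --- back to the paper's sandwich.

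Your treatment of the stochastic term is more explicit than the paper's citation and rightly flags $\mu$-uniformity as the crux; but your ``second Bernstein'' on the increment $\hat\xi_k(z;\mu)-\hat\xi_k(z;\mathring\mu)$ only handles a fixed $\mu$, and reducing to the vector $(\mu(\bx_t))_t$ still leaves an $n_{k,exp}$-dimensional covering problem, so one more ingredient is needed to close the loop. Also, your explanation of the $\sqrt d$ in the first term of \eqref{eqn:kernel-accuracy} is off: it is not inherited from $r_{n_{k,exp}}$ (which appears only in the additive second term), but from the uniform NW bound itself --- in \cite{fan2022policy} it arises from covering a $d$-dimensional parameter ball.
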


	\begin{lemma}[First derivative of noise c.d.f.] \label{lemma-F(1)}
		Under the same condition as Lemma \ref{lemma-F}, it holds with probability at least $ 1 - 4\delta$ that 
		\begin{equation} \label{eqn:kernel-accuracy-deriva}
			\sup_{z \in \calS_{\eps}, {\mu}\in \mathcal{N}_{k}} \abs{ \hat{F}_k^{(1)} (z; {\mu}) - F^{(1)}(z) } \leq  B_f n_{k, exp}^{- \frac{m-1}{2m + 1}} \sqrt{\log n_{k, exp} } ( \sqrt d + \sqrt{\log \delta^{-1}} ) + B_f r_{n_{k, exp}},
		\end{equation}
		where $B_f > 0$ is a constant. 
	\end{lemma}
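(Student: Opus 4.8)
\textbf{Proof proposal for Lemma \ref{lemma-F(1)}.}

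The plan is to differentiate the Nadaraya--Watson representation and track each piece through the same kernel-concentration machinery already used for Lemma \ref{lemma-F}. Recall from \eqref{eqn:F(1)-est} that
$\hat F_k^{(1)}(z;\mu) = -[\hat a_k^{(1)}(z;\mu)\,\hat\xi_k(z;\mu) - \hat a_k(z;\mu)\,\hat\xi_k^{(1)}(z;\mu)]/\hat\xi_k(z;\mu)^2$,
so the first step is to assemble a uniform (over $z\in\calS_\eps$ and $\mu\in\calN_k$) high-probability control on each of the four building blocks $\hat a_k,\hat\xi_k,\hat a_k^{(1)},\hat\xi_k^{(1)}$. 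For $\hat a_k$ and $\hat\xi_k$ this is exactly what Lemma \ref{lemma-F} (and its proof) already delivers: a kernel-approximation bias term of order $b_k^{\,m}=n_{k,exp}^{-m/(2m+1)}$, a stochastic term of order $(n_{k,exp}b_k)^{-1/2}\sqrt{\log n_{k,exp}}(\sqrt d+\sqrt{\log\delta^{-1}})$, and an $\mu$-perturbation term of order $r_{n_{k,exp}}$ coming from the Lipschitz continuity of $K$ in Assumption \ref{assump:K}. For the derivative quantities $\hat a_k^{(1)},\hat\xi_k^{(1)}$ one repeats the argument with $K$ replaced by $K^{(1)}$: here the extra factor $b_k^{-1}$ in the definition inflates both the stochastic rate (to $(n_{k,exp}b_k^{3})^{-1/2}\cdots$, equivalently $n_{k,exp}^{-(m-1)/(2m+1)}\sqrt{\log n_{k,exp}}\cdots$ after substituting $b_k$) and the $\mu$-perturbation rate (to $b_k^{-1}r_{n_{k,exp}}$, which under the neighborhood size of Assumption \ref{assump:xi} is still dominated by $r_{n_{k,exp}}$ up to constants, or can be absorbed by enlarging $B_f$), while the bias term becomes $b_k^{\,m}\cdot$(something) via the order-$m$ kernel property applied to $\int x^jK^{(1)}(x)\,dx$. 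Since $K^{(1)}$ is also $\ell_K$-Lipschitz with bounded support by Assumption \ref{assump:K}, the bracketing/covering argument for the supremum over $z$ goes through verbatim, and the union bound over the (at most four) events costs the stated $1-4\delta$.

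The second step is the algebraic combination. On the good event, the denominator $\hat\xi_k(z;\mu)^2$ is bounded below: since $\hat\xi_k(z;\mu)\to\xi(z;\mu)\ge B_2>0$ uniformly by Assumption \ref{assump:xi} and the Lemma \ref{lemma-F}-type bound, for $n_{k,exp}$ large enough (guaranteed by the stated lower bound $n_{k,exp}^{2m/(2m+1)}\ge\max(\cdots)$) we have $\hat\xi_k\ge B_2/2$, so $\hat\xi_k^{-2}\le 4B_2^{-2}$. Then I would write
$\hat F_k^{(1)}(z;\mu)-F^{(1)}(z)$ as a telescoping sum: replace $\hat a_k^{(1)}\to a^{(1)}$, $\hat\xi_k\to\xi$, $\hat a_k\to a$, $\hat\xi_k^{(1)}\to\xi^{(1)}$, $\hat\xi_k^{-2}\to\xi^{-2}$ one at a time, each difference being a product of one ``error'' factor (controlled in Step 1) with ``bounded'' factors (controlled by $B_1,B_2$ in Assumption \ref{assump:xi} and the boundedness of $F,y_t$), and finally match the resulting clean expression $-[a^{(1)}\xi-a\xi^{(1)}]/\xi^2$ with $F^{(1)}=(a/\xi)^{(1)}\cdot(-1)\cdot(-1)$... more precisely, since $F(z;\mu)=1-a(z;\mu)/\xi(z;\mu)$ we get $F^{(1)}(z;\mu)=-(a^{(1)}\xi-a\xi^{(1)})/\xi^2$ and then use Assumption \ref{assump:F} to pass from $F^{(1)}(z;\mathring\mu)$ to $F^{(1)}(z;\mu)$ at cost $\ell_F r_{n_{k,exp}}$ uniformly over $\mu\in\calN_k$. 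Collecting the dominant terms yields precisely the stated bound with $B_f$ a fixed multiple of $\ell_\xi,\ell_K,\ell_F,B_1,B_2^{-1}$ and the kernel moment constants.

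The main obstacle is bookkeeping the derivative-kernel rate correctly: the extra $b_k^{-1}$ means one must verify that with the chosen bandwidth $b_k=n_{k,exp}^{-1/(2m+1)}$ the stochastic term is $n_{k,exp}^{-(m-1)/(2m+1)}\sqrt{\log n_{k,exp}}$ rather than $n_{k,exp}^{-m/(2m+1)}$, and that the bias of the $m$-th order kernel applied to a derivative is still $O(b_k^{\,m})$ (this uses integration by parts together with $\int x^jK(x)\,dx=0$ for $j\in[m-1]$ and the $\ell_\xi$-Lipschitz $m$-th derivatives of $\xi$ and $a$ from Assumption \ref{assump:xi}). One must also confirm that the $\mu$-dependent perturbation term $b_k^{-1}r_{n_{k,exp}}$ does not dominate; under the neighborhood definition in Assumption \ref{assump:xi} this holds, so it can be folded into the $r_{n_{k,exp}}$ term. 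Everything else is a routine propagation of the Lemma \ref{lemma-F} estimates through the quotient rule.
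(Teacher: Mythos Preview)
Your proposal is correct and follows essentially the same route as the paper: a triangle-inequality split into $\sup|\hat F_k^{(1)}(z;\mu)-F^{(1)}(z;\mu)|$ plus $\sup|F^{(1)}(z;\mu)-F^{(1)}(z)|$, with the latter handled by Assumption~\ref{assump:F} at cost $\ell_F r_{n_{k,exp}}$ and the former by the quotient-rule decomposition of the Nadaraya--Watson derivative combined with kernel concentration for $\hat a_k,\hat\xi_k,\hat a_k^{(1)},\hat\xi_k^{(1)}$. The paper's write-up is terser---it offloads the four kernel bounds (and the resulting $n_{k,exp}^{-(m-1)/(2m+1)}$ rate) to Lemmas~A.3 and~A.4 of \cite{fan2022policy} rather than re-deriving them---but your explicit bookkeeping of the extra $b_k^{-1}$ factor and the integration-by-parts bias argument is exactly what those cited lemmas contain.
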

	
	\begin{lemma} \label{lemma-sup-h}
		Assume the same condition as Lemma \ref{lemma-F}, and $ n_{k, exp}^{\frac{m  -1} {2m + 1}} \gtrsim  \sqrt{\log n_{k, exp} } (\sqrt d + \sqrt {\log \delta^{-1}}) $ and $r_{n_{k, exp}} \lesssim 1$, it holds with probability at least $ 1 - 6\delta$ that
		\begin{equation}
			\underset{z \in[B_{\eps} - B, B_{\eps}]}{\sup}\abs{\hat{h}_k(z)-h(z)} 
			\leq B_h n_{k, exp}^{- \frac{m-1}{2m + 1}} \sqrt{\log n_{k, exp} } ( \sqrt d + \sqrt{\log \delta^{-1}} ) + B_h r_{n_{k, exp}},
		\end{equation}
		where $B_h > 0$ is a constant. 
	\end{lemma}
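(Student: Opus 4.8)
The goal is to bound $\sup_{z\in[B_\eps-B,\,B_\eps]}|\hat h_k(z)-h(z)|$, where $h(z)=z+\phi^{-1}(-z)$ and $\hat h_k(z)=z+\hat\phi_k^{-1}(-z)$, with $\phi(w)=w-\frac{1-F(w)}{F^{(1)}(w)}$ and $\hat\phi_k(w)=w-\frac{1-\hat F_k(w)}{\hat F_k^{(1)}(w)}$. The plan is to first control $\sup_w|\hat\phi_k(w)-\phi(w)|$ on the relevant domain using Lemmas~\ref{lemma-F} and \ref{lemma-F(1)}, and then transfer this uniform bound on $\phi$ to a uniform bound on the inverse $\phi^{-1}$ using the regularity condition $\phi^{(1)}(z)\ge L_\phi>0$ from Assumption~\ref{assump:regu-phi}, which makes $\phi$ bi-Lipschitz and hence its inverse Lipschitz with constant $1/L_\phi$.

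First I would write $\hat\phi_k(w)-\phi(w) = -\frac{1-\hat F_k(w)}{\hat F_k^{(1)}(w)}+\frac{1-F(w)}{F^{(1)}(w)}$ and expand this difference of ratios over a common denominator: it equals $\frac{(1-F(w))(\hat F_k^{(1)}(w)-F^{(1)}(w)) - F^{(1)}(w)(\hat F_k(w)-F(w))\cdot\frac{\hat F_k^{(1)}(w)}{F^{(1)}(w)}}{\hat F_k^{(1)}(w)F^{(1)}(w)}$ — more cleanly, bound $|\hat\phi_k-\phi|$ by a constant times $(|\hat F_k-F| + |\hat F_k^{(1)}-F^{(1)}|)$ once we know $\hat F_k^{(1)}$ is bounded away from $0$. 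The denominator is controlled because Assumption~\ref{assump:F} gives $F^{(1)}(z)\ge c_F$ on the relevant interval, and on the event of Lemma~\ref{lemma-F(1)} the perturbation $|\hat F_k^{(1)}-F^{(1)}|$ is $o(1)$ under the stated conditions ($n_{k,exp}^{(m-1)/(2m+1)}\gtrsim\sqrt{\log n_{k,exp}}(\sqrt d+\sqrt{\log\delta^{-1}})$ and $r_{n_{k,exp}}\lesssim 1$), so $\hat F_k^{(1)}\ge c_F/2$ for all $w$ in the domain with the same high probability. Since $1-F$ and $F^{(1)}$ are bounded (Assumptions~\ref{assump:F}, \ref{assump:xi}), this yields on an event of probability at least $1-6\delta$ (union of the events in Lemmas~\ref{lemma-F} and \ref{lemma-F(1)})
\begin{equation*}
\sup_{w}\,|\hat\phi_k(w)-\phi(w)| \le C\Big(n_{k,exp}^{-\frac{m-1}{2m+1}}\sqrt{\log n_{k,exp}}(\sqrt d+\sqrt{\log\delta^{-1}}) + r_{n_{k,exp}}\Big),
\end{equation*}
where the slower $n_{k,exp}^{-(m-1)/(2m+1)}$ rate from Lemma~\ref{lemma-F(1)} dominates the $n_{k,exp}^{-m/(2m+1)}$ rate from Lemma~\ref{lemma-F}.

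Next I would convert the $\phi$-bound into an $h$-bound. Write $h(z)-\hat h_k(z) = \phi^{-1}(-z)-\hat\phi_k^{-1}(-z)$. Fix $z$ and set $u=\phi^{-1}(-z)$, $\hat u=\hat\phi_k^{-1}(-z)$, so $\phi(u)=-z=\hat\phi_k(\hat u)$. Then $\phi(u)-\phi(\hat u) = \hat\phi_k(\hat u)-\phi(\hat u)$, so $|\phi(u)-\phi(\hat u)|\le\sup_w|\hat\phi_k(w)-\phi(w)|$; since $\phi^{(1)}\ge L_\phi$, the mean value theorem gives $|u-\hat u|\le L_\phi^{-1}|\phi(u)-\phi(\hat u)|\le L_\phi^{-1}\sup_w|\hat\phi_k-\phi|$. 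Taking the supremum over $z$ in the stated interval $[B_\eps-B,B_\eps]$ — which I would check corresponds under $-z$ to a subset of $\calS_\eps$ where the $\phi$-bound was established (this is why the domain was written as $[\phi^{-1}(B_\eps-B),\phi^{-1}(-B_\eps)]$ in Assumption~\ref{assump:F}) — gives the claimed bound with $B_h = C/L_\phi$.

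The main obstacle I anticipate is the bookkeeping around domains and the ``bootstrapping'' argument that $\hat F_k^{(1)}$ stays bounded away from zero: one must be careful that the perturbed map $\hat\phi_k$ is genuinely invertible on the needed range and that $\hat\phi_k^{-1}(-z)$ is well-defined for every $z\in[B_\eps-B,B_\eps]$ — this requires showing $\hat\phi_k$ is strictly increasing on a slightly enlarged interval (again from $\hat\phi_k^{(1)}\ge$ something positive, which needs a bound on $\hat\phi_k^{(1)}-\phi^{(1)}$, in turn needing control of second-order kernel quantities, or else a monotonicity argument avoiding derivatives). A clean route is to avoid differentiating $\hat\phi_k$: instead argue directly that $\hat\phi_k(w) = w - (1-\hat F_k(w))/\hat F_k^{(1)}(w)$ is within $o(1)$ of $\phi$ uniformly, and since $\phi$ maps $[\phi^{-1}(B_\eps-B),\phi^{-1}(-B_\eps)]$ onto $[B_\eps-B,-B_\eps]\supseteq$ a neighborhood, a uniformly-close continuous perturbation still hits every target value $-z$ with an argument within $L_\phi^{-1}o(1)$ of $\phi^{-1}(-z)$, by a standard intermediate-value plus Lipschitz argument. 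That is the step to write carefully; everything else is the ratio-difference estimate and the mean value theorem.
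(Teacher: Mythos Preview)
Your proposal is correct and follows essentially the same approach as the paper: bound $|\hat\phi_k-\phi|$ via the ratio decomposition using Lemmas~\ref{lemma-F} and \ref{lemma-F(1)} together with the lower bound $\hat F_k^{(1)}\ge c_F/4$, then transfer to the inverse via $|\phi(u)-\phi(\hat u)|\ge L_\phi|u-\hat u|$. You also correctly anticipate the only real subtlety---that $\hat\phi_k$ need not be monotone---and your suggested fix (intermediate value on a slightly enlarged interval) is exactly what the paper does, defining $\hat\phi_k^{-1}(v):=\inf\{u\in[\phi^{-1}(B_\eps-B)-\delta_F,\phi^{-1}(-B_\eps)+\delta_F]:\hat\phi_k(u)=v\}$ after checking the endpoints map to the correct sides of $[B_\eps-B,-B_\eps]$.
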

	
	\noindent
	{\it Step 2. Establish the regret bound.}
	We will apply Theorem \ref{le-DNN-TDNN} and Lemmas \ref{lemma-F}, \ref{lemma-F(1)}, and \ref{lemma-sup-h}.
	
	We are now ready to bound the regret of our policy. 
	For a policy $\pi$ that assigns a price $p_t(\pi)$ at $t$, its instant regret at time $t$ is defined as
	\begin{equation}
		\reg_t(\pi) \defeq p_t^* \bbone(v_t \geq p_t^*) - p_t(\pi) \bbone(v_t \geq p_t(\pi)),
	\end{equation}
	where $p_t^*$ is the oracle optimal price given in \eqref{eqn:func-h}.
	For the episode-based algorithm, we further define the accumulated regret in each episode $\mathcal{T}_k$:
	\begin{equation}
		\reg_{(k)}(\pi) = \sum_{t\in\calT_{k}}\reg_t(\pi).
	\end{equation}
	Since the length of the episodes grows exponentially, the number of episodes by period $T$ is logarithmic in $T$. 
	Specifically, $T$ belongs to episode $K = \floors{\log_2 (T/n_0)} + 1$.
	Without loss of generality, we deal with the setting $T = 2^{K-1}$ for some integer number $K$. 
	Hence, regret over the time horizon of $T$ can be expressed as
	\begin{equation}
		\Reg(T;\pi) = \EE\brackets{\sum_{k = 1}^K \reg_{(k)}(\pi)}.
	\end{equation}
	We bound the total regret over each episode $k$ by considering two separate cases where $k$ is relatively small or large.
	
	\noindent
	\underline{\textsc{Case I.}}
	When $k \leq 1 + (\log \sqrt T - \log n_0 ) / \log 2 $, we have  $n_{k} = 2^{k-1} n_0 \leq \sqrt T $. Those episodes are not long enough to accurately estimate $\mathring{\mu}$ and $F$. 
	We use the fact that $\EE\brackets{\reg_t(\pi)} \le p_t^* \le B$ to construct a naive upper bound. 
	That is, 
	\begin{equation}  \label{eq_small_k_bound}
		\begin{aligned}
			\sum_{k \leq 1 + (\log \sqrt T - \log n_0 ) / \log 2  } \reg_{(k)}(\pi) & =  \sum_{k \leq 1 + (\log \sqrt T - \log n_0 ) / \log 2  }   B  2^{k-1} n_0\\
			& \le   B 2^{1 + (\log \sqrt T - \log n_0 ) / \log 2  } n_0  \\
			& \leq 2 n_0 B e^{\log \sqrt T - \log n_0} = 2 B \sqrt T.
		\end{aligned}
	\end{equation}
	
	\noindent
	\underline{\textsc{Case II.}}
	When  $k > 1 + (\log \sqrt T - \log n_0 ) / \log 2 $, we consider 
	\begin{equation} \label{eqn:E[reg_t]}
		\EE\brackets{\reg_t(\pi)} = \EE\brackets{\EE\brackets{\reg_t(\pi)\cond \calH_{t-1}^+}}
	\end{equation}
	where $\calH_{t-1}^+\defeq \sigma\braces{\bx_1,\cdots,\bx_{t-1}, \bx_{t}; \varepsilon_1, \cdots, \varepsilon_{t-1}}$ is the filtration generated by $\braces{\bx_1,\cdots,\bx_{t-1}; \varepsilon_1, \cdots, \varepsilon_{t-1}}$ and the current feature $\bx_t$.
	We write
	\begin{equation} \label{eqn:E[reg_t|H_t-1]}
		\begin{aligned}
			\EE\brackets{\reg_t(\pi) \cond \calH_{t-1}^+}
			& = \EE\brackets{p_t^* \bbone(v_t\geq p_t^*) \cond \calH_{t-1}^+} 
			- \EE\brackets{p_t \bbone(v_t\geq p_t) \cond \calH_{t-1}^+}  \\
			& = \EE\brackets{p_t^* \bbone(\varepsilon_t + \mathring{\mu}(\bx_t) \geq p_t^*) \cond \calH_{t-1}^+} \\
			&- \EE\brackets{p_t \bbone(\varepsilon_t + \mathring{\mu}(\bx_t) \geq p_t) \cond \calH_{t-1}^+} \\
			& = p_t^* \paren{1 - F(p_t^*-\mathring{\mu}(\bx_t))} 
			- p_t \paren{1 - F(p_t-\mathring{\mu}(\bx_t))} \\
			& \defeq \rev_t(p_t^*) - \rev_t(p_t), 
		\end{aligned}
	\end{equation}
	where $\rev_t(p)\defeq p \paren{1 - F(p-\mathring{\mu}(\bx_t))}$ is defined as the expected revenue under price $p$. 
	Note that $p_t^*\in\argmax_p\;\rev_t(p)$ and thus $\rev_t^{(1)}(p_t^*) = 0$. It follows by Taylor expansion that
	\begin{equation} \label{eqn:rev_t_decomp}
		\rev_t(p_t) = \rev_t(p_t^*) + \frac{1}{2}\rev_t^{(2)}(p)(p_t-p_t^*)^2, 
	\end{equation}
	for some $p$ between $p_t$ and $p_t^*$. 
	Let $C = \max\{\max_z F^{(1)}(z),\;\max_u F^{(2)}(z)\}$,
	we claim that, for any $p$ between $p_t$ and $p_t^*$, 
	\begin{equation} \label{eqn:rev_t^{(2)}}
		\abs{\rev_t^{(2)}(p)} \le (2+B)C, 
	\end{equation}
	which is derived by the fact that 
	\begin{equation*}
		\abs{\rev_t^{(2)}(p)} = \abs{-2F^{(1)}(p-\mathring{\mu}(\bx_t))-pF^{(2)}(p-\mathring{\mu}(\bx_t))},
	\end{equation*}
	and $p_t, p_t^* \in [0,B]$.
	Combining \eqref{eqn:E[reg_t|H_t-1]}, \eqref{eqn:rev_t_decomp} and \eqref{eqn:rev_t^{(2)}}, we obtain
	\begin{align*}
		\EE\brackets{\reg_t(\pi) \cond \calH_{t-1}^+}
		& \le (2+B)C/2\cdot \EE\brackets{(p_t(\pi)-p_t^*)^2}
	\end{align*}
	Let $\hat{p}_t := p_t(\hat{\pi})$ be the price given by Algorithm \ref{algo:non-par-pricing}. Then it holds that
	\begin{align*}
		\EE\brackets{\reg_t(\hat\pi) \cond \calH_{t-1}^+}
		& \le (2+B)C/2\cdot \EE\brackets{(\hat{p}_t-p_t^*)^2 }. 
	\end{align*}
	We can obtain by \eqref{eqn:E[reg_t]} as well as the definitions of $p_t^*$ and $\hat{p_t} $ in \eqref{eqn:opt-p} and \eqref{eqn:opt-price-est} that
	\begin{align}  
		\EE\brackets{\reg_t(\hat{\pi})} & \leq 
		{(2+B)C}/{2} \cdot \EE\brackets{\paren{ \hat{h}(\hat{\mu}_k(\bx_t)) - h(\mathring{\mu}(\bx_t)) }^2} \nonumber \\
		& \le (2 + B) C \Big\{ \EE\brackets{\paren{\hat{h}(\hat{\mu}_k(\bx_t))-h(\hat{\mu}_k(\bx_t))}^2} \\
		&+ \EE\brackets{\paren{h(\hat{\mu}_k(\bx_t))-h(\mathring{\mu}(\bx_t))}^2} \Big\}. \label{eqn:E[reg_t]-decomp}
	\end{align}
	We first analyze the first term on the right hand side of \eqref{eqn:E[reg_t]-decomp}. It can be seen from Lemma \ref{thm:g*(x)-bound} that for $\hat{\mu}_k \in \mathcal{N}_k$, 
	\begin{align*}
		\paren{\hat{h}(\hat{\mu}_k(\bx_t))-h(\hat{\mu}_k(\bx_t))}^2
		& \le \paren{ \underset{z\in[B_{\varepsilon}, B-B_{\varepsilon}]}{\sup}\paren{\hat{h}_k(z)-h(z)}}^2. 
	\end{align*}
	Furthermore, applying Lemma \ref{lemma-sup-h} (let $\delta = n_{k, exp}^{-1}$), we have 
	\begin{equation}
		\begin{aligned}
			& \EE\brackets{\paren{\hat{h}(\hat{\mu}_k(\bx_t))-h(\hat{\mu}_k(\bx_t))}^2 }  \\
			& \le  \paren{ B_h n_{k, exp}^{- \frac{m-1}{2m + 1}} \sqrt{\log n_{k, exp} } ( \sqrt d + \sqrt{\log n_{k, exp}} ) + B_h r_{n_{k, exp}} }^2 + B^2  n_{k, exp}^{-1}.
		\end{aligned}
	\end{equation}
	Next, for the second term in \eqref{eqn:E[reg_t]-decomp}, 
	in view of the fact that $\frac{d} {d z}\phi^{-1} (z) = \frac{1} {\phi^{(1)} (\phi^{-1}(z)) }$, we can obtain by Assumption \ref{assump:regu-phi} that
	\begin{equation}
		\begin{aligned}
			2 \EE\brackets{\big(h(\hat{\mu}_k(\bx_t))-h(\mathring{\mu}(\bx_t)) \big) ^2  }
			& = 2 \EE\brackets{ \big( {\phi}^{-1} (\hat{\mu}_k(\bx_t) ) -  {\phi}^{-1} (\mathring{\mu}(\bx_t)) \big)^2 } \\
			& \le 2 L_{\phi}^{-2} \EE \big[ \paren{\hat{\mu}_k(\bx_t) - \mathring{\mu}(\bx_t)}^2 \big].
		\end{aligned}
	\end{equation}
	In addition, under Assumption \ref{assump:f_hat}, it holds that
	\begin{equation} \label{eqn-second-part}
		\begin{aligned}
			\EE \big[ \paren{\hat{\mu}_k(\bx_t) - \mathring{\mu}(\bx_t)}^2 \big] & \leq \EE \big[ \paren{\hat{\mu}_k(\bx_t) - \mathring{\mu}(\bx_t)}^2 \mathbbm{1} (  \sup_{\bx  \in \mathcal{X} } \| \hat{\mu}_k (\bx) - \mathring{\mu}(\bx) \|   < r_{n_{k, exp}} )\big] \\
			& \quad + B^2  \EE \big[ \mathbbm{1} (  \sup_{\bx  \in \mathcal{X} } \| \hat{\mu}_k (\bx) - \mathring{\mu}(\bx) \|   \geq  r_{n_{k, exp}} )\big]  \\
			& \le  r_{ n_{k, exp}}^2 +  B^2 n_{k, exp}^{-1}   .
		\end{aligned}
	\end{equation} 
	Combining the inequalities \eqref{eqn:E[reg_t]-decomp} -- \eqref{eqn-second-part} yields the following upper bound for the expected regret at any time $t$ during the commitment phase $\mathcal{T}_{k, com}$ in episode $k$:
	\begin{align*}
		\EE\brackets{\reg_t(\pi)} 
		& \le (2 + B) C \big( B_h n_{k, exp}^{- \frac{m-1}{2m + 1}} \sqrt{\log n_{k, exp} } ( \sqrt d + \sqrt{\log n_{k, exp}} ) \\
		&+ (B_h +1) r_{n_{k, exp}} \big)^2 +  2C( 2 + B) B^2 n_{k, exp}^{-1}.
	\end{align*} 
	Consequently, for $k > 1 + (\log \sqrt T - \log n_0)/\log 2$, the total regret during the $k$-th episode is 
	\begin{equation} \label{eq_k_large_bound}
		\begin{aligned}
			\Reg_{(k)}(\pi)
			& = \sum_{t\in \calT_{k,exp}}\EE\brackets{\reg_t(\pi)} + \sum_{t\in \calT_{k,com}}\EE\brackets{\reg_t(\pi)}  \\
			& \le B n_{k, exp} + n_{k,com} \EE\brackets{\reg_t(\pi)} \\
			&   \le B n_{k, exp} + n_{k, com}  (2 + B) C \Big(B_h n_{k, exp}^{- \frac{m-1}{2m + 1}} \sqrt{\log n_{k, exp} } ( \sqrt d + \sqrt{\log n_{k, exp}} ) \\
			&+ (B_h +1 ) r_{n_{k, exp}}\Big)^2 + 2 C (2 + B) B^2 n_{k, com} n_{k, exp}^ {-1  }.
		\end{aligned}
	\end{equation}
	
	Now we need to consider specific estimators with specific convergence rate $r_{n_{k, exp}}$ in order to further choose an optimized exploration phase size and commitment phase size to minimize the regret $\Reg_{(k)}(\pi)$. We consider DNN estimator in the sequel. 
	
	When DNN estimator is applied to estimate the mean utility function, we choose the involved subsampling scale $s = n_{k, exp}^{\frac{d}{d + 4}}$, then it follows from Theorem \ref{le-DNN-TDNN} that Assumption \ref{assump:f_hat} is satisfied for the DNN estimator with convergence rate $r_{n_{k, exp} } = C n_{k, exp}^{\frac{-2}{d+4}} \sqrt{d\log n_{k, exp}}$. 
	If $d \geq \frac{6}{m - 1}  $, $ n_{k, exp}^{\frac{-2}{d+4}} \geq n_{k, exp}^{ - \frac{m - 1}{2m+1}} $, otherwise  if $1 \leq d < \frac{6}{m-1}$, $n_{k, exp}^{\frac{-2}{d+4}} < n_{k, exp}^{- \frac{m-1}{2m + 1}} $. When $1 \leq d < \frac{6}{m-1}$, by setting $n_{k, exp} = n_k^{\frac{2m +1}{4m -1 }} $ and $n_k = 2^{k-1} n_0$, we have $n_{k, exp} \geq T^{\frac{m} {4m - 1}} $ for $k > 1 + (\log \sqrt T - \log n_0)/\log 2$, and hence the condition $ n_{k, exp}^{\frac{2m}{2m + 1}} \geq \max(C_1 d ( \frac{\log n_{k, exp}}{2m + 1} + 1 ), 3 \log n_{k, exp})  $ is satisfied for Lemmas \ref{lemma-F} -- \ref{lemma-sup-h}. 
	Noting that $n_{k, exp}^{-1/2} = o( r_{n_{k, exp}})$,  it follows from \eqref{eq_k_large_bound} that for $k > 1 + (\log \sqrt T - \log n_0)/\log 2$, 
	\begin{equation}  \label{eq_bound_k_large_new}
		\begin{aligned}
			\Reg_{(k)}(\pi)
			& \le B n_{k, exp} +   C n_{k, com}  \paren{ n_{k, exp}^{- \frac{m-1}{2m + 1}} \sqrt{\log n_{k, exp} } ( \sqrt d + \sqrt{\log n_{k, exp}} ) }^2 \\
			& \leq B n_k^{\frac{2m +1}{4m -1 }} +  C   n_{k}^{\frac{2m +1}{4m -1 }} [(\log n_k)^2  + d \log n_k] \\
			& \leq C n_{k}^{\frac{2m +1}{4m -1 }} [(\log n_k)^2  + d \log n_k],
		\end{aligned}
	\end{equation}
	where $C$ is a constant which can take different values from line to line. In view of \eqref{eq_small_k_bound} and \eqref{eq_bound_k_large_new}, we obtain that the total regret  over time horizon $T$ is bounded by
	\begin{equation} \label{small_d_bound1}
		\begin{split}
			\Reg(\pi, T) & = \sum_{k=1}^{\log_2 (T/n_0) + 1 } \Reg_k(\pi) \\
			& \le 2 B \sqrt T  + \sum_{k = 1 + (\log \sqrt T - \log n_0)/\log 2}^{\log_2 (T/n_0) + 1} C n_{k}^{\frac{2m +1}{4m -1 }} [(\log n_k)^2  + d \log n_k] \\
			& \leq  2 B \sqrt T  + C n_0^{\frac{2m +1}{4m -1 }} [(\log T)^2  + d \log T] \sum_{k = 1}^{\log_2 (T/n_0) + 1} 2^{(k-1) \frac{2m +1}{4m -1 }} \\
			& \leq  2 B \sqrt T  +  C  n_0^{\frac{2m +1}{4m -1 }} [(\log T)^2  + d \log T] (2T/n_0)^{\frac{2m+1}{4m-1}} \\
			& \leq  C T^{\frac{2m+1}{4m-1}}   [(\log T)^2  + d \log T] ,
		\end{split}
	\end{equation}
	where we have used the definition that $n_k= 2^{k-1} n_0$ and the fact that $\sqrt T = o(T^{\frac{2m + 1}{4 m - 1}})$. 
	
	When $d \geq \frac{6}{m-1}$, by letting $n_{k, exp} = n_{k}^{\frac{d + 4}{d + 8}}$, we have for $k > 1 + (\log \sqrt T -
	log n_0)/\log_2$,
	\begin{align*}
		\Reg_{(k)}(\pi)
		& \le B n_{k, exp} +     C n_{k, com}  \paren{   n_{k, exp}^{\frac{-2}{d+4}} \sqrt{\log n_{k, exp} } ( \sqrt d + \sqrt{\log n_{k, exp}} ) }^2 \\
		& \le C n_{k}^{\frac{d + 4}{d + 8}} ( d\log n_{k} + (\log n_{k})^2 )
	\end{align*}
	Therefore, combining \eqref{eq_small_k_bound} and \eqref{eq_bound_k_large_new}, the total regret over time horizon $T$ is bounded by
	\begin{equation} \label{large_d_bound2}
		\begin{split}
			\Reg(\pi, T) & = \sum_{k=1}^{\log_2 (T/n_0) + 1} \Reg_k(\pi) \\
			& \le 2 B \sqrt T  + \sum_{k=1}^{\log_2 (T/n_0) + 1} C n_{k}^{\frac{d + 4}{d + 8}} [(\log n_k)^2  + d \log n_k] \\
			& \leq  2 B \sqrt T  +  \sum_{k=1}^{\log_2 (T/n_0) + 1} C (2^{k-1} n_{0})^{\frac{d + 4}{d + 8}} [(\log T)^2  + d \log T]  \\
			& \leq 2 B \sqrt T  +   C  n_{0}^{\frac{d + 4}{d + 8}} (2T/n_0)^{\frac{d + 4}{d + 8}} [(\log T)^2  + d \log T]  \\
			& \leq C T^{\frac{d + 4}{d + 8}}   [(\log T)^2  + d \log T].
		\end{split}
	\end{equation}
	Combining the bound in \eqref{small_d_bound1} as $d < \frac{6}{m-1}$ and the bound in \eqref{large_d_bound2} as $d \geq \frac{6}{m-1}$ yields the desired result in \eqref{result-thm-DNN-regret}. The proof of Theorem \ref{thm-1} is completed.

	\subsection{Proof of Theorem \ref{thm-TDNN}} \label{pf-thm-TDNN}
	When TDNN estimator is applied in the mean utility function estimation with $s_1 = n_{k, exp}^{\frac{d}{d + 8} \lor \frac{1}{7}}$, it follows from Theorem \ref{le-DNN-TDNN} that the uniform convergence rate ${n_{k, exp}^{- (\frac{4} {d + 8} \land \frac{3}{7}) } \sqrt {d \log n_{k, exp}}}$ holds in the Assumption \ref{assump:f_hat}. 
	If $m  \geq 1 +  \min(\frac{12} {d}, 9)   $, it holds that $n_{k, exp}^{- \frac{4}{d+8} \land \frac{3}{7}} \geq n_{k, exp}^{- \frac{m-1}{2m + 1}} $, otherwise  if $ m  < 1 +  \min(\frac{12} {d}, 9) $, we have $n_{k, exp}^{- \frac{4}{d+8} \land \frac{3}{7}} < n_{k, exp}^{- \frac{m-1}{2m + 1}} $. When $m  < 1 +  \min(\frac{12} {d}, 9) $, by setting $n_{k, exp} = n_k^{\frac{2m +1}{4m -1 }} $, we have for $k > 1 + (\log \sqrt T - \log n_0)/\log 2$,
	\begin{align*} 
		\Reg_{(k)}(\pi)
		& \le B n_{k, exp} +   C n_{k}  \paren{  n_{k, exp}^{- \frac{m-1}{2m + 1}} \sqrt{\log n_{k, exp} } ( \sqrt d + \sqrt{\log n_{k, exp}} ) }^2 \\
		& \leq B n_k^{\frac{2m +1}{4m -1 }} + C n_{k}^{\frac{2m +1}{4m -1 }} [(\log n_k)^2  + d \log n_k] \\
		& \leq C n_{k}^{\frac{2m +1}{4m -1 }} [(\log n_k)^2  + d \log n_k].
	\end{align*}
	In view of \eqref{eq_small_k_bound} and \eqref{eq_bound_k_large_new}, we obtain that the total regret  over time horizon $T$ is bounded by 
	\begin{equation} \label{small_m_bound1}
		\begin{split}
			\Reg(\pi, T) & = \sum_{k=1}^{\log_2 (T/n_0) + 1} \Reg_k(\pi) \\
			& \le 2 B \sqrt T  + \sum_{k = 1 + (\log \sqrt T - \log n_0 ) / \log 2 }^{\log_2 (T/n_0) + 1}  C n_{k}^{\frac{2m +1}{4m -1 }} [(\log n_k)^2  + d \log n_k] \\
			& \leq  2 B \sqrt T  + C  [(\log T)^2  + d \log T] \sum_{k = 1}^{\log_2 (T/n_0) + 1} (2^{(k-1)} n_0 )^ {\frac{2m +1}{4m -1 }} \\
			& \leq  2 B \sqrt T  +  C  [(\log T)^2  + d \log T] T^{\frac{2m+1}{4m-1}} \\
			& \leq {C} T^{\frac{2m+1}{4m-1}}  [(\log T)^2  + d \log T] ,
		\end{split}
	\end{equation}
	where $C $ is a constant which can take different values from line to line. 
	
	When $m  \geq 1 +  \min(\frac{12} {d}, 9) $, by letting $n_{k, exp} = n_{k}^{\frac{d + 8}{d + 16} \lor \frac{7}{13}}$, we have for $k > 1 + (\log \sqrt T - \log n_0)/\log 2$,
	\begin{align*}
		\Reg_{(k)}(\pi)
		& \le B n_{k, exp} +   C n_{k}  \paren{   n_{k, exp}^{- \frac{4}{d+8} \land \frac{3}{7}} \sqrt{\log n_{k, exp} } ( \sqrt d + \sqrt{\log n_{k, exp}} ) }^2 \\
		& \le C n_{k}^{\frac{d + 8}{d + 16} \lor \frac{7}{13}} ( d\log n_{k} + (\log n_{k})^2 )
	\end{align*}
	Therefore, in view of \eqref{eq_small_k_bound} and \eqref{eq_bound_k_large_new},  the total regret  over time horizon $T$ is bounded by 
	\begin{equation} \label{large_m_bound2}
		\begin{split}
			\Reg(\pi, T) & = \sum_{k=1}^{\log_2 (T/n_0) + 1} \Reg_k(\pi) \\
			& \le 2 B \sqrt T  + C \sum_{k = 1 + (\log \sqrt T 
				- \log n_0)/\log 2}^{\log_2 (T/n_0) + 1}    n_{k}^{\frac{d + 8}{d + 16} \lor \frac{7}{13}} [(\log n_k)^2  + d \log n_k] \\
			& \leq  2 B \sqrt T  + C   [(\log T)^2  + d \log T] \sum_{k = 1}^{\log_2 (T/n_0) + 1} (2^{(k-1)} n_0)^{\frac{d + 8}{d + 16} \lor \frac{7}{13}} \\
			& \leq  2 B \sqrt T  +  C T^{\frac{d + 8}{d + 16} \lor \frac{7}{13}}   [(\log T)^2  + d \log T] \\
			& \leq C   T^{\frac{d + 8}{d + 16} \lor \frac{7}{13}}   [(\log T)^2  + d \log T] ,
		\end{split}
	\end{equation}
	where $C $ is a constant which can take different values from line to line. Combining the bound in \eqref{small_m_bound1} as $m  <  1 +  \min(\frac{12} {d}, 9) $ and the bound in \eqref{large_m_bound2} as $m  \geq 1 +  \min(\frac{12} {d}, 9) $ yields the desired result in \eqref{result-thm-TDNN-regret}. 
	This completes the proof of Theorem \ref{thm-TDNN}.

	\section{Proof of Lemmas}
	
	\subsection{Proof of Lemma $\ref{lemma-F}$} \label{sec:lemma9-pf}
	Recall that $F(u;\mu) \defeq 1 - \EE [y_t\cond p_t - \mu(\bx_t) = u]$ and 
	$F (u) \defeq F(u;\mathring{\mu}) = 1 - \EE [y_t\cond p_t - \mathring{\mu}(\bx_t) = u ]$. 
	The proof builds on the intuition that the kernel estimator $\hat{F}_k (z; \mu)$ converges to the population conditional expectation $F(z; \mu)$ and further $F(z; \mu)$ approaches $F (z)$ as $\mu$ approaches $\mathring{\mu}$.
	By the triangular inequality, we have
	\begin{equation} \label{le3.4-n1}
		\begin{aligned}
			\sup_{z \in \calS_{\eps}, \mu \in \mathcal{N}_k} \abs{ \hat{F}_k (z; \mu) - F(z) }
			&\le 
			\sup_{z \in \calS_{\eps}, \mu \in \mathcal{N}_k} \abs{ \hat{F}_k (z; \mu) - F(z; \mu) } \\
			&+ 
			\sup_{z \in \calS_{\eps}, \mu \in \mathcal{N}_k} \abs{ F(z; \mu) - F (z) }.
		\end{aligned}
	\end{equation}
	In the sequel, we will bound each of the two terms on the right hand side of \eqref{le3.4-n1}.  
	
	First, we concern ourselves with the accuracy of kernel estimator $\hat{F}_k(u; \mu)$ for a fixed  $\mu \in \mathcal{N}_k $. 
	Applying the same technique of proving Lemma 4.2 in \cite{fan2022policy}, we can obtain that with probability at least $1 - 2\delta$,
	\begin{equation} \label{le3.4-1}
		\sup_{z \in \calS_{\eps},  {\mu} \in \mathcal{N}_k} \abs{ \hat{F}_k (z; {\mu}) - F(z;  {\mu} ) } \leq  C  n_{k, exp}^{-\frac{m}{2m + 1}} \sqrt{\log n_{k, exp}} \big(\sqrt d + \sqrt{\log  \delta^{-1} }\Big). 
	\end{equation}
	Next, regarding the second term on the right hand side of \eqref{le3.4-n1}, recalling the definition of the neighborhood $\mathcal{N}_k$ given in \eqref{def_N_k}, we have for any $\mu  \in \mathcal{N}_k $, 
	\begin{equation}
		\begin{aligned}
			F(z;{\mu}) & = 1-  \PP(\mathring{\mu}(\bx_t) + \eps_t \geq p_t \cond  p_t -  {\mu}({\bx}_t) = z) \\
			& = 1-  \PP( \mathring{\mu} (\bx_t) -  {\mu} (\bx_t) + \eps_t \geq p_t-\mu(\bx_t) \cond p_t - {\mu}({\bx}_t) = z) \\
			& \geq 1 - \PP(r_{n_{k, exp}} + \eps_t \geq p_t-{\mu}({\bx}_t) \cond p_t - {\mu}({\bx}_t) = z) \\
			& = F (z - r_{n_{k, exp}}), 
		\end{aligned}    
	\end{equation}
	and similarly, 
	\begin{equation}
		\begin{aligned}
			F(z;\mu) & = 1-\PP(\mathring{\mu}(\bx_t) + \eps_t \geq p_t \cond  p_t - {\mu}({\bx}_t)=z) \\
			& = 1-\PP(\mathring{\mu}(\bx_t)-\mu(\bx_t) + \eps_t \geq p_t-\mu(\bx_t) \cond p_t-\mu({\bx}_t)=z) \\
			& \leq 1-\PP(-r_{n_{k, exp}} + \eps_t\geq p_t-\mu({\bx}_t) \cond p_t-\mu({\bx}_t)=z) \\
			& = F (z + r_{n_{k, exp}}).
		\end{aligned}    
	\end{equation}
	Therefore, by Assumption \ref{assump:F} that $F(u)$ is $\ell_F$-Lipschitz, we have 
	\begin{equation} \label{le3.4-2}
		\sup_{z \in \mathcal{S}_{\varepsilon},  {\mu} \in \mathcal{N}_k} \abs{ F(u;  {\mu}) - F (u) } \leq \ell_{F} r_{n_{k, exp}}. 
	\end{equation}
	A combination of \eqref{le3.4-n1}, \eqref{le3.4-1} and \eqref{le3.4-2} yields  Lemma \ref{lemma-F}. This concludes the proof of Lemma \ref{lemma-F}.

	\subsection{Proof of Lemma $\ref{lemma-F(1)}$} \label{sec:lemma10-pf}
	By the triangular inequality, we have
	\begin{equation} \label{eqn:lemma-F(1)-1}
		\begin{aligned}
			\sup_{z \in \calS_{\eps}, \mu \in \mathcal{N}_k} \abs{ \hat{F}^{(1)}_k (z; \mu) - F^{(1)} (z) }
			&\le 
			\sup_{z \in \calS_{\eps}, \mu \in \mathcal{N}_k} \abs{ \hat{F}_k^{(1)}(z; \mu) - F^{(1)}(z; \mu)} \\
			&+ 
			\sup_{z \in \calS_{\eps}, \mu \in \mathcal{N}_k} \abs{ F^{(1)}(z; \mu) - F^{(1)} (z) }.
		\end{aligned}
	\end{equation}
	Recall that $\xi(z ;{\mu}) $ is the density function of $p_t - \mu(\bx_t)$ evaluated at $z$, $F(z; \mu) \defeq 1 - \EE[y_t \cond p_t - \mu(\bx_t)=z]$, and  $a(z; \mu)\defeq \xi (z; \mu) [1 - F(z; \mu)]$. 
	We also define the following quantities:
	\begin{equation*}
		\begin{aligned}
			\hat a_k(z; \mu) &\defeq \abs{n_{k, exp} b_k}^{-1}\sum_{t \in \calT_{k, exp}} K \paren{\frac{p_t - \mu(\bx_t) - z}{b_k}} y_t, \\
			\hat \xi_k(z; \mu) & \defeq \abs{n_{k, exp} b_k}^{-1}\sum_{t \in \calT_{k, exp}} K\paren{\frac{p_t - \mu(\bx_t) - z }{b_k}}. 
		\end{aligned}
	\end{equation*}
	Then it holds that
	\begin{equation*}
		\hat{F}_k(z; \mu) = 1 - \frac{\hat a_k(z; \mu)}{\hat \xi_k(z; \mu)},\quad
		F(z; \mu) = 1 - \frac{a(z; \mu)}{\xi(z; \mu)}, \quad
		F(z) = 1 - \frac{a(z; \mathring{\mu})}{\xi(z; \mathring{\mu})}, 
	\end{equation*}
	\begin{align*}
		\hat{F}^{(1)}_k(z; \mu) & = - \frac{\hat a_k^{(1)}(z;\mu) \hat \xi_k(z; \mu) - \hat a_k(z; \mu) \hat \xi_k^{(1)}(z; \mu)}{\hat \xi_k(z; \mu)^2}, \\
		F^{(1)}(z; \mu) & = \frac{a^{(1)}(z; \mu)\xi(z; \mu) - a(z; \mu)\xi^{(1)}(z; \mu)}{\xi(z; \mu)^2},
	\end{align*}
	and $F^{(1)} (z) = F^{(1)}(z; \mathring{\mu})$, where
	\begin{align*}
		&\hat a_k^{(1)}(z; \mu) \defeq \frac{-1}{n_{k, exp} b_k^2} \sum_{t \in \calT_{k, exp}} K^{(1)}\paren{\frac{p_t - \mu(\bx_t) - z}{b_k}} y_t, \\
		&\hat \xi_k^{(1)}(z; \mu) \defeq \frac{-1}{n_{k, exp} b_k^2}\sum_{t \in \calT_{k, exp}} K^{(1)}\paren{\frac{p_t - \mu(\bx_t) - z }{b_k}}. 
	\end{align*}
	First, according to Assumption \ref{assump:F}, the second term on the right hand side of \eqref{eqn:lemma-F(1)-1} can be bounded by 
	\begin{equation} \label{result-p1}
		\sup_{z \in \calS_{\eps}, \mu \in \mathcal{N}_k} \abs{ F^{(1)}(z; \mu) - F^{(1)} (z)} \leq \ell_F r_{n_{k, exp}}.
	\end{equation}
	Next we proceed to consider the first term in \eqref{eqn:lemma-F(1)-1}. Observe that
	\begin{align*}
		& \sup_{z \in \calS_{\eps}, \mu \in \mathcal{N}_k} \abs{ \hat{F}^{(1)}(z; \mu) - F^{(1)}(z; \mu)} \\
		& \le \sup_{u \in \calS_{\eps}, \mu \in \mathcal{N}_k} \frac{1}{\hat \xi_k^2}  \Big|(\hat a_k^{(1)}-a^{(1)}) \hat \xi
		+ a^{(1)}(\hat \xi_k-\xi) - (\hat a_k - a)\hat \xi_k^{(1)} -  a (\hat \xi_k^{(1)} - \xi^{(1)}) \Big| \\
		&\quad   + \sup_{u \in \calS_{\eps}, \mu \in \mathcal{N}_k} \abs{ \frac{\xi^2-\hat \xi_k^2}{\xi^2\hat \xi_k^2}}\abs{a^{(1)}\xi-a\xi^{(1)}},
	\end{align*}
	where we have omitted the dependence of the functions on $z$ and $\mu$ for notational simplicity. By Assumption \ref{assump:xi} and Lemmas A.3 and A.4 in \cite{fan2022policy}, we obtain that with probability at least $1 - \delta$, 
	\begin{align} \label{result-p2}
		\sup_{z \in \calS_{\eps}, \mu \in \mathcal{N}_k} \abs{ \hat{F}^{(1)}(z; \mu) - F^{(1)}(z; \mu)} 
		\le C n_{k, exp}^{- \frac{m-1}{2m + 1}} \sqrt{\log n_{k, exp} } ( \sqrt d + \sqrt{\log (1/\delta)} ).
	\end{align}
	Consequently, the desired result follows from a combination of \eqref{eqn:lemma-F(1)-1}, \eqref{result-p1} and \eqref{result-p2}. This completes the proof of Lemma \ref{lemma-F(1)}.

	\subsection{Proof of Lemma \ref{lemma-sup-h}} \label{sec:lemma11-pf}
	Observe that the data-driven optimal price $ P_t = \hat{h}_{k}\circ \hat{\mu}_k(\bx_t)$.
	We begin with presenting Lemma \ref{thm:g*(x)-bound} that provides the range of $ \mu ( {\bx}_t) $ for any $\mu \in\calN_k$ and $ {\bx}_t\in\calX$, which will be useful in our proof. 
	
	\begin{lemma} \label{thm:g*(x)-bound}
		Under the conditions of Lemma \ref{lemma-sup-h}, for any $ {\bx}_t\in\calX$ and $ \mu \in\calN_k$, $\mu( {\bx}_t)\in [B_{\eps}, B-B_{\eps}]$. 
	\end{lemma}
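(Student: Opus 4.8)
\textbf{Proof plan for Lemma \ref{thm:g*(x)-bound}.}
The idea is to first pin down the exact range of the true mean utility $\mathring\mu$ over $\calX$ by combining the boundedness of the market value with the support of the noise, and then transfer this range to every member of the neighbourhood $\calN_k$ at the cost of an additive $r_{n_{k, exp}}$.

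\emph{Step 1 (range of $\mathring\mu$).} By the random utility model \eqref{eqn:nonpar-utility}, $v_t = \mathring\mu(\bx_t) + \eps_t$, where $v_t$ takes values in $[B_v, B - B_v]$ almost surely and, by Assumption \ref{assump:F}, $\eps_t$ is an independent noise supported on $\calS_\eps = [-B_\eps, B_\eps]$. Conditioning on $\bx_t = \bx$, the conditional support of $v_t$ is the translate $\mathring\mu(\bx) + \calS_\eps = [\mathring\mu(\bx) - B_\eps,\ \mathring\mu(\bx) + B_\eps]$, which must be contained in $[B_v, B - B_v]$ for every $\bx \in \calX$; hence
\[
B_v + B_\eps \ \le\ \mathring\mu(\bx) \ \le\ B - B_v - B_\eps \qquad \text{for all } \bx \in \calX .
\]
(Equivalently, one lets $\eps_t$ approach $\pm B_\eps$ and invokes the continuity of $\mathring\mu$ on the compact set $\calX$, which is available under the hypotheses inherited from Theorem \ref{le-DNN-TDNN}.)

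\emph{Step 2 (transfer to $\calN_k$).} Fix $\mu \in \calN_k$. By the definition \eqref{def_N_k}, $\sup_{\bx \in \calX}|\mu(\bx) - \mathring\mu(\bx)| \le r_{n_{k, exp}}$, so combining with Step 1,
\[
B_v + B_\eps - r_{n_{k, exp}} \ \le\ \mu(\bx_t) \ \le\ B - B_v - B_\eps + r_{n_{k, exp}} \qquad \text{for all } \bx_t \in \calX .
\]
Since $r_{n_{k, exp}} \to 0$ as the episode length grows, the episodes considered in Lemma \ref{lemma-sup-h} are long enough that $r_{n_{k, exp}} \le B_v$; plugging this in gives $B_\eps \le \mu(\bx_t) \le B - B_\eps$, i.e.\ $\mu(\bx_t) \in [B_\eps, B - B_\eps]$, which is the claim.

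\emph{Main obstacle.} The only delicate point is Step 1: one must verify that the almost-sure bound $v_t \in [B_v, B - B_v]$ constrains $\mathring\mu$ at \emph{every} point of $\calX$, not merely almost everywhere. This is handled by the continuity of $\mathring\mu$ on the compact support $\calX$ together with the fact that the noise exhausts the interval $\calS_\eps$; everything after that is a one-line triangle-inequality computation using $r_{n_{k, exp}} \le B_v$.
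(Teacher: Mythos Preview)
Your proposal is correct and follows essentially the same two-step argument as the paper: first bound $\mathring\mu(\bx_t)\in[B_v+B_\eps,\,B-B_v-B_\eps]$ from $v_t=\mathring\mu(\bx_t)+\eps_t$ with $v_t\in[B_v,B-B_v]$ and $\eps_t\in[-B_\eps,B_\eps]$, then transfer to $\mu\in\calN_k$ via $|\mu-\mathring\mu|\le r_{n_{k,exp}}\le B_v$. Your additional remark about continuity of $\mathring\mu$ to upgrade the bound from almost everywhere to everywhere on $\calX$ is a nice touch that the paper leaves implicit.
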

	
	The proof of Lemma \ref{thm:g*(x)-bound} is postponed to Section \ref{lemma12-pf} in the Appendix. 
	
	\begin{proof}[Proof of Lemma \ref{lemma-sup-h}]
		Recall the definitions of estimator $\hat{h}_k(v)$ in \eqref{eqn:func-h-hat} and the population quantity $h(v)$ in \eqref{eqn:func-h} that
		\begin{equation} \label{eq-h_k}
			\hat{h}_k(v) = v + \hat{\phi}_k^{-1}(-v), 
			\quad \mbox{where}\quad \hat{\phi}_k(z) = z - \frac{1 - \hat{F}_k(z; \hat{\mu}_k)}{\hat{F}_k^{(1)}(z; \hat{\mu}_k)},
		\end{equation}
		and 
		\begin{equation} \label{eq-h}
			h(u) = v + \phi^{-1} (-v),
			\quad \mbox{where}\quad \phi(z) = z - \frac{ 1 - F(z) }{F^{(1)} (z)}. 
		\end{equation}
		Observe that 
		\begin{equation}
			\sup_{v \in [B_{\eps}, B - B_{\eps}]} | \hat{h}_k (v) - h(v) | = \sup_{v \in [B_{\eps} - B, - B_{\eps}]} | \hat{\phi}^{-1}_k( v) - \phi^{-1} (v)|.
		\end{equation}
		Thus it suffices to study the upper bound for $\sup_{v \in [B_{\eps} - B, - B_{\eps}]} | \hat{\phi}^{-1}_k( v) - \phi^{-1} (v)| $.
		
		\noindent
		{\sc Step. I.}
		First, we derive a uniform upper bound for $\abs{\hat{\phi}_k(z) - \phi(z)}$, which can be decomposed by
		\begin{equation*}
			\begin{aligned}
				\underset{z\in\calS_{\eps}}{\sup}\; \abs{\hat{\phi}_k(z) - \phi(z)}
				& \le \underset{u\in\calS_{\eps}}{\sup}\abs{\frac{\paren{1 - \hat{F}_k(z; \hat{\mu}_k)}\paren{\hat{F}_k^{(1)}(z; \hat{\mu}_k)-F^{(1)} (z)}}{\hat{F}_k^{(1)}(z; \hat{\mu}_k) F^{(1)}  (z)}} \\
				&+ \underset{z\in \calS_{\eps} }{\sup}\abs{\frac{\hat{F}_k(z; \hat{\mu}_k)-F(z)}{F^{(1)} (z)}}.
			\end{aligned}
		\end{equation*}
		By Assumption \ref{assump:F}, we have (noting that $\phi$ is an increasing function by Assumption \ref{assump:regu-phi})
		\begin{equation*}
			\min_{z \in [\phi^{-1} (B_\eps - B),  \phi^{-1} (- B_{\eps}) ]} F^{(1)} (z) \geq c_F
		\end{equation*}
		and applying the continuity of $F^{(1)} (u)$ yields that for some constant $\delta_F > 0$,
		\begin{equation*}
			\min_{z \in [\phi^{-1} (B_\eps - B) - \delta_F,  \phi^{-1} (- B_{\eps}) + \delta_F ]} F^{(1)} (z) \geq c_F / 2 
		\end{equation*}

		Therefore, it follows from Lemma \ref{lemma-F(1)} that for $\hat{\mu}_k \in \mathcal{N}_k$, with probability at least $1 - 4 \delta$, 
		\begin{equation*}
			\min_{z \in [\phi^{-1} (B_\eps - B) - \delta_{F},  \phi^{-1} (- B_{\eps}) + \delta_F]} \hat{F}^{(1)}_k (z; \hat\mu_k) \geq c_F / 4,
		\end{equation*}
		as long as $ n_{k, exp}^{\frac{m  -1} {2m + 1}} \geq (8c_F)^{-1} B_f \sqrt{\log n_{k, exp} } (\sqrt d + \sqrt {\log \delta^{-1}}) $ and $r_{n_{k, exp}} \leq c_F B_f^{-1} / 8 $.

		Combining the above results with Lemmas \ref{lemma-F} and \ref{lemma-F(1)} that bound $\abs{\hat{F}_k(z; \hat{\mu}_k)-F(z)}$ and $\abs{\hat{F}_k^{(1)}(z; \hat{\mu}_k)-F^{(1)} (z)}$, 
		we have with probability $1 - 6 \delta$,
		
		\begin{align}
			\label{phi-bound}
			\underset{z\in [\phi^{-1} (B_\eps - B) - \delta_{F},  \phi^{-1} (- B_{\eps}) + \delta_F]}{\sup}\; \abs{\hat{\phi}_k(z) - \phi(z)}
			& \le   \frac{16 B_f}{c_F^2} \cdot \paren{n_{k, exp}^{- \frac{m-1}{2m + 1}} \sqrt{\log n_{k, exp} } ( \sqrt{d} + \sqrt{\log (1/\delta)} ) + r_{n_{k, exp}}} \nonumber \\
			& + \frac{4 B_F}{c_F} \paren{n_{k, exp}^{- \frac{m}{2m + 1}} \sqrt{\log n_{k, exp} } ( \sqrt{d} + \sqrt{\log (1/\delta)} ) + r_{n_{k, exp}}}. 
		\end{align}
		
		\noindent
		{ \sc Step. II.} 
		We bound $\sup_{v \in [B_{\eps} - B, - B_{\eps}]} | \hat{\phi}^{-1}_k( v) - \phi^{-1} (v)| $ through $\underset{z\in [\phi^{-1} (B_\eps - B) - \delta_{F},  \phi^{-1} (- B_{\eps}) + \delta_F]}{\sup}\; \abs{\hat{\phi}_k(z) - \phi(z)}$. 
		The computation here involves obtaining the inverse of $\hat{\phi}_k(z)$, which is not necessarily monotone. 
		To deal with this challenge, we will show in the sequel that $\hat{\phi}_k(z)$ is very `close' to $\phi$ in some main interval of interest, which contains $[\phi^{-1} (B_\eps - B) - \delta_{F},  \phi^{-1} (- B_{\eps}) + \delta_F]$ and depends only on $F$. Since $\phi^{(1)} (z) \geq L_{\phi} > 0$, it holds that $\phi (\phi^{-1} (B_\eps - B) - \delta_{F}) \leq B_\eps - B - L_\phi \delta_F$ and similarly, $\phi (\phi^{-1} (- B_{\eps}) + \delta_F) \geq - B_{\eps} + L_\phi \delta_F$. Therefore, we have $[B_{\eps} - B, - B_{\eps}] \subset [\phi (\phi^{-1} (B_\eps - B) - \delta_{F}), \phi (\phi^{-1} (- B_{\eps}) + \delta_F)]$. Moreover, by \eqref{phi-bound}, we can deduce that as $ n_{k, exp}^{\frac{m -1} {2m+1}} \gtrsim \sqrt {\log n_{k, exp}} (\sqrt d + \sqrt {\log \delta^{-1}})$ and $r_{n_{k, exp}} \lesssim 1$, 
		\begin{equation*}
			\hat\phi_k \big( \phi^{-1} (B_\eps - B) - \delta_{F} \big) \leq B_\eps - B  
		\end{equation*}
		and 
		\begin{equation*}
			\hat\phi_k \big(\phi^{-1} (- B_{\eps}) + \delta_F \big) \geq - B_{\eps}.
		\end{equation*}
		By continuity of $\phi$ and $\hat{\phi}_k$, it follows that 
		$$
		[B_\eps - B, B_{\eps}] \subset  \phi\big([\phi^{-1} (B_\eps - B) - \delta_{F}, \phi^{-1} (- B_{\eps}) + \delta_F]\big) \cap \hat\phi_k\big([\phi^{-1} (B_\eps - B) - \delta_{F}, \phi^{-1} (- B_{\eps}) + \delta_F]\big).
		$$
		For $ v \in [B_{\eps} - B, - B_{\eps}]$, we define the inverse of $\hat{\phi}_k$ as  
		\begin{equation} \label{eqn:inv-phi}
			\hat\phi^{-1}_k(v) \defeq \inf\braces{u\in [\phi^{-1} (B_\eps - B) - \delta_{F},  \phi^{-1} (- B_{\eps}) + \delta_F]: \hat{\phi}_k(u) = v}.
		\end{equation}
		
		Now we are ready to upper bound $\underset{v\in[B_{\eps} - B, - B_{\eps}]}{\sup}\; \abs{\hat{\phi}^{-1}_k(v) - \phi^{-1}(v)}$. 
		From Assumption \ref{assump:regu-phi} that $\phi^{(1)} (u) \geq L_{\phi}$,  it holds that $\abs{\phi(u_1) - \phi(u_2)} \geq L_{\phi} \abs{u_1 -  u_2}  $ and hence with probability at least $1-6\alpha$,
		\begin{equation}  \label{eq-phi-inverse}
			\begin{aligned}
				\sup_{v \in [B_{\eps} - B, - B_{\eps}]} \abs{ \hat{\phi}_k^{-1} (v) - \phi^{-1} (v)} & \le L_{\phi}^{-1} \sup_{v \in [B_{\eps} - B, - B_{\eps}]} \abs{\phi( \hat{\phi}_k^{-1} (v) ) - \phi(\phi^{-1} (v))} \\
				& = L_{\phi}^{-1} \sup_{v \in [B_{\eps} - B, - B_{\eps}]} \abs{  \phi( \hat{\phi}_k^{-1} (v) )  - \hat{\phi}_k( \hat{\phi}_k^{-1} (v) ) } \\
				& \le L_{\phi}^{-1} \sup_{u\in  [\phi^{-1} (B_\eps - B) - \delta_{F},  \phi^{-1} (- B_{\eps}) + \delta_F] }\abs{\hat{\phi}_k(u) - \phi(u)} \\
				& \le  L_{\phi}^{-1}\frac{16 B_f}{c_F^2} \cdot \paren{n_{k, exp}^{- \frac{m-1}{2m + 1}} \sqrt{\log n_{k, exp} } ( \sqrt{d} + \sqrt{\log (1/\alpha)} ) + r_{n_{k, exp}}} \\
				& + L_{\phi}^{-1} \frac{4B_F}{c_F} \paren{n_{k, exp}^{- \frac{m}{2m + 1}} \sqrt{\log n_{k, exp} } ( \sqrt{d} + \sqrt{\log (1/\alpha)} ) + r_{n_{k, exp}}}.
			\end{aligned}
		\end{equation}
		This completes the proof of Lemma \ref{lemma-sup-h}. 
	\end{proof}

	\subsection{Proof of Lemma $\ref{thm:g*(x)-bound}$} \label{lemma12-pf}
	
	\begin{proof}[Proof of Lemma $\ref{thm:g*(x)-bound}$]
		First, we show that for any $\bx_t \in \mathcal{X}$,
		\begin{equation}\label{eqn:g-1}
			\mathring{\mu}( {\bx}_t) \in [B_v+B_{\eps}, B-B_v-B_{\eps}]. 
		\end{equation}
		This can be validated as follows.
		Recall the definition that $v_t = \mathring{\mu}( {\bx}_t) + \eps_t$, where by assumption $\eps_t\in[-B_{\eps}, B_{\eps}]$ and $v_t \in [B_v, B-B_v]$. 
		Since ${\bx}_t$ is independent of $\eps_t$, to ensure $v_t \in [0, B ]$, it must follow that  $\mathring{\mu}( {\bx}_t) \in [ B_v + B_{\eps}, B - B_v -B_{\eps}]$. 
		
		At the same time, we have 
		\begin{equation}\label{eqn:g-2}
			\underset{ {\bx}_t\in\calX, \mu \in\calN_k}{\sup}\abs{\mu( {\bx}_t)-\mathring{\mu}( {\bx}_t)}
			\le r_{n_{k, exp}} \le B_v, 
		\end{equation}
		where the last inequality holds when $n_{k, exp}$ is sufficiently large. 
		Combining \eqref{eqn:g-1} and \eqref{eqn:g-2}, we obtain the desired result. 
	\end{proof}

	\section{Proofs of Uniform Finite-Sample Bounds of DNN or TDNN} \label{append:proof-dnn-tdnn}
	Recall that the DNN and TDNN estimators are constructed using the sample $\bz_t = \{(\bx_t, g_t) \}_{t \in \cT_{k, exp}}$ of size $n_{k,exp}$, where $g_t = B y_t$ with $y_t$ being an Bernoulli random variable.
	To simplify the notation, we denote $n = n_{k, exp}$ in this section when there is no ambiguity. 
	
	\subsection{Proof of Theorem \ref{le-DNN-TDNN}}
	
	We first prove the result \eqref{re-DNN} for the DNN estimator. The main idea of proof is to apply a bias-variance decomposition for the DNN estimator. The bias term can be bounded using existing result in \cite{demirkaya2022optimal} and the variance term can be addressed by applying the bound differences inequality in empirical theory and concentration inequality for U-statistic.
	Observe that the estimation error can be decomposed as 
	\begin{equation} \label{decomp}
		\begin{aligned}
			\hat\mu_k^{\DNN}(\bx; s) - \mathring{\mu}(\bx) = \paren{  \hat\mu_k^{\DNN}(\bx; s) - \mathbb{E} g_{(1)} (\bz_1, \cdots, \bz_s) } 
			+ \paren{ \mathbb{E} g_{(1)} (\bz_1, \cdots, \bz_s) - \mathring{\mu}(\bx) }. 
		\end{aligned}
	\end{equation}
	For the second term related to the bias, it follows from applying a similar technique of proving Theorem 1 in \cite{demirkaya2022optimal} that 
	\begin{equation} \label{eq_bias_DNN}
		\sup_{\bx \in \mathcal{X} } \big| \mathbb{E} g_{(1)} (\bz_1, \cdots, \bz_s) - \mathring{\mu}(\bx) \big| \leq C s^{-2/d},
	\end{equation}
	where $C$ is a constant depending on the underlying density function $f(\cdot)$ of $\bx_t$,  the mean utility function $\mathring{\mu} (\cdot)$ and the dimensionality $d$. 
	
	For the first term related to the variance in \eqref{decomp}, observe that $\hat{\mu}_k^{\DNN}(\bx; s)$ is a U-statistic of order $s$. The main idea is to apply the concentration inequality for U-statistics and the fact that there are only finite number of distinct $K$-NN sets over $\bx \in \mathcal{X}$ for any $K$, given $n$ i.i.d samples. 
	First, by Hoeffding inequality of U-statistic (\cite{ai2022hoeffding}), noting that $0 \leq g_t := B y_t \leq B $, we have for any $t > 0$.
	\begin{equation} \label{eq_U_concen}
		\mathbb{P} \big( \big|  \hat{\mu}_k^{\DNN}(\bx; s) - \mathbb{E} g_{(1)} (\bz_1, \cdots, \bz_s) \big| \geq t \big) \leq 2 \exp \Big( - \frac{n t^2} {2 B^2 s } \Big).
	\end{equation}
	\ignore{
		Therefore, we can obtain that 
		\begin{equation} 
			\mathbb{E} \exp \Big\{ \big| \hat{\mu}_k^{\DNN}(\bx; s) - \mathbb{E} g_{(1)} (\bz_1, \cdots, \bz_s) \big|^2 \cdot \frac{ n } {8 B^2 s } \Big\}  \leq 2.
		\end{equation}
	}
	In addition, note that Lemma 3 in \cite{jiang2019non} shows that there are at most $d n^d$ distinct $K$-NN sets over $\bx 
	\in \mathcal{X} \subset \mathbb{R}^d$ for any $K$, given $n$ i.i.d. samples. Therefore, for fixed $\bz_1, \cdots, \bz_n$ and $s$, the DNN estimator $\hat{\mu}_k^{\DNN}(\bx; s)$ can take at most $d n^d$ distinct possible values in view of its L-statistic representation in \eqref{eqn:dnn-L-stat}. Consequently, applying union bound and \eqref{eq_U_concen} yields for any $t > 0$, 
	\begin{equation}
		\begin{aligned}
			\mathbb{P} \Big( \sup_{\bx \in \mathcal{X}} \big|  \hat{\mu}_k^{\DNN}(\bx; s) - \mathbb{E} g_{(1)} (\bz_1, \cdots, \bz_s) \big| \geq t \Big) \leq 2 d n^{d} \exp \Big( - \frac{n t^2} {2 B^2 s } \Big),
		\end{aligned}
	\end{equation}
	Therefore, we have 
	\begin{equation} \label{eq_DNN_variance}
		\mathbb{P} \bigg( \sup_{\bx \in \mathcal{X}} \big|  \hat{\mu}_k^{\DNN}(\bx; s) - \mathbb{E} g_{(1)} (\bz_1, \cdots, \bz_s) \big| \geq B \sqrt{ \frac{2 s [\log \delta^{-1}  + \log d + d \log n ] } {n}}  \bigg) \leq  2\delta ,
	\end{equation}
	which combining with \eqref{eq_bias_DNN} derives the desired uniform convergence rate for DNN estimator given in \eqref{re-DNN}. 
	
	\ignore{
		For the first term related to the variance in \eqref{decomp}, we may use the concentration inequality for empirical process of U-statistics. Let $h(\bz_1, \cdots, \bz_n) :=  \sup_{x \in \mathcal{X}} | \hat\mu_k^{\DNN}(\bx; s)  - \mathbb{E} g_{(1)} (\bz_1, \cdots, \bz_s) | $. We claim that
		\begin{equation} \label{DNN-re-eq1}
			\mathbb{P} \bigg( h(\bz_1, \cdots, \bz_n)  \geq  \frac{ B s} {\sqrt n} \Big( \sqrt{ 2 \log \delta^{-1}  } +  2 \sqrt{ 2 (\log (2d) + d \log n )}  \Big)  \bigg) \leq 2\delta.
		\end{equation}
		Hence \eqref{re-DNN} follows.

		\begin{proof}[Proof of \eqref{DNN-re-eq1}]
			Observe that if one of $i_1, \cdots, i_s$ is fixed as $i$, then there are ${n-1 \choose s -1}$ different combinations of $\{i_1, \cdots, i_s\}$ in total. First we aim to analyze the difference between $h(\bz_1,  \cdots, \bz_{i-1}, \bz_i', \bz_{i+1}, \cdots, \bz_n)$ and $h(\bz_1,  \cdots, \bz_{i-1}, \bz_i, \bz_{i+1}, \cdots, \bz_n)$, that is, the variatoin of function $h(\cdot)$ when only one data point $\bz_i$ is replaced with $\bz_i'$.  
			Note that 
			
			\begin{equation} \label{eq-con-1}
				\begin{split}
					& h(\bz_1,  \cdots, \bz_{i-1}, \bz_i', \bz_{i+1}, \cdots, \bz_n) \\
					& = \sup_{x \in \mathcal{X}} \Bigg|\,{n \choose s}^{-1} \sum_{1 \leq i_1 < \cdots < i_s \leq n; i_1, \cdots, i_s \neq i} g_{(1)} (\bz_{i_1}, \cdots, \bz_{i_s}) -  \mathbb{E} g_{(1)} (\bz_1, \cdots, \bz_s)  \\
					& \qquad +  {n \choose s}^{-1} \sum_{j = 1}^s \mathbbm{1} (i_j = i) \sum_{1 \leq i_1 < i_2 < \cdots < i_s \leq n} g_{(1)} (\bz_{i_1}, \ldots, \bz_{i_{j-1}}, \bz_{i_j}', \bz_{i_{j+1}}, \ldots, \bz_{i_s}) \Bigg|\\
					& = \sup_{x \in \mathcal{X}} \Bigg|\,{n \choose s}^{-1} \sum_{1 \leq i_1 < \cdots < i_s \leq n} g_{(1)} (\bz_{i_1}, \cdots, \bz_{i_s}) -  \mathbb{E} g_{(1)} (\bz_1, \cdots, \bz_s)  \\
					& \qquad -  {n \choose s}^{-1} \sum_{j = 1}^s \mathbbm{1} (i_j = i) \sum_{1 \leq i_1 < i_2 < \cdots < i_s \leq n} g_{(1)} (\bz_{i_1}, \ldots, \bz_{i_{j-1}}, \bz_{i_j}, \bz_{i_{j+1}}, \ldots, \bz_{i_s})  \\
					& \qquad +  {n \choose s}^{-1} \sum_{j = 1}^s \mathbbm{1} (i_j = i) \sum_{1 \leq i_1 < i_2 < \cdots < i_s \leq n} g_{(1)} (\bz_{i_1}, \ldots, \bz_{i_{j-1}}, \bz_{i_j}', \bz_{i_{j+1}}, \ldots, \bz_{i_s}) \Bigg|\\
					& \leq  h(\bz_1, \cdots, \bz_{i-1}, \bz_i, \bz_{i+1}, \cdots, \bz_n) + 2 \sup_{x \in \mathcal{X}}  \frac{ {n-1 \choose s-1}} {{n \choose s}} \sup_{\bz_1, \cdots, \bz_s}   | g_{(1)}(\bz_{1}, \cdots, \bz_{s}) |  \\
					& \leq  h(\bz_1, \cdots, \bz_{i-1}, \bz_i, \bz_{i+1}, \cdots, \bz_n)  + \frac{2Bs} {n}. 
				\end{split}
			\end{equation} 
			where we used the fact that $ g_t = B y_t  \in [0, B]$ in the last inequality.

			Similarly, we can obtain 
			\begin{equation} \label{eq-con-2}
				h(\bz_1, \cdots, \bz_{i-1}, \bz_i, \bz_{i+1}, \cdots, \bz_n) \leq h(\bz_1,  \cdots, \bz_{i-1}, \bz_i', \bz_{i+1}, \cdots, \bz_n)  +  \frac{2Bs} {n},
			\end{equation}
			which combined with \eqref{eq-con-1} yields 
			\begin{equation*}
				|  h(\bz_1, \cdots, \bz_{i-1}, \bz_i, \bz_{i+1}, \cdots, \bz_n)  - h(\bz_1,  \cdots, \bz_{i-1}, \bz_i', \bz_{i+1}, \cdots, \bz_n)   | \leq \frac{2Bs} {n}.
			\end{equation*}
			Consequently, applying the MacDiarmid inequality (Theorem 3.24 in \cite{sen2018gentle}) leads to 
			\begin{equation*}
				\mathbb{P} \big( |h(\bz_1, \cdots, \bz_n) - \mathbb{E} h(\bz_1, \cdots, \bz_n)| \geq t  \big) \leq 2 \exp\paren{ -   \frac{ nt^2} {2B^2 s^2}} 
			\end{equation*}
			and hence for every $0< \delta < 1$, 
			\begin{equation} \label{DNN-re-part1}
				\mathbb{P} \bigg( \big|h (\bz_1, \cdots, \bz_n) - \mathbb{E} h(\bz_1, \cdots, \bz_n) \big| \geq  B s \sqrt{\frac{2 \log \delta^{-1}} {n} } \bigg) \leq 2 \delta .
			\end{equation}
		\end{proof}
	}

	Next we deal with the TDNN estimator $\hat{\mu}_k^{\TDNN}(\bx; s_1, s_2)$. Recall that the TDNN estimator is a linear combination of two DNN estimators with distinct subsampling scales $s_1$ and $s_2$, that is,
	\begin{equation*}
		\hat{\mu}_k^{\TDNN}(\bx; s_1, s_2) = \alpha_1 \hat{\mu}_k^{\DNN}(\bx; s_1)  + \alpha_2  \hat{\mu}_k^{\DNN}(\bx; s_2),  
	\end{equation*}
	where $\alpha_1= 1/(1 - (s_1/s_2)^{-2/d})$ and $\alpha_2  = - (s_1 / s_2)^{-2/d} / (1 - (s_1/s_2)^{-2/d})$. Similarly, we have the bias-variance decomposition
	\begin{equation} \label{decomp-TDNN}
		\begin{aligned}
			\hat\mu_k^{\TDNN}(\bx; s_1, s_2) - \mathring{\mu}(\bx) & = \paren{  \hat\mu_k^{\TDNN}(\bx; s_1, s_2) - \alpha_1 \mathbb{E} g_{(1)} (\bz_1, \cdots, \bz_{s_1})  - \alpha_2 \mathbb{E} g_{(1)} (\bz_1, \cdots, \bz_{s_2}) }\\
			&+ \paren{ \alpha_1 \mathbb{E} g_{(1)} (\bz_1, \cdots, \bz_{s_1}) + \alpha_2 \mathbb{E} g_{(1)} (\bz_1, \cdots, \bz_{s_2}) - \mathring{\mu}(\bx) }. 
		\end{aligned}
	\end{equation}
	By Corollary 1 in \cite{demirkaya2022optimal}, it holds that 
	\begin{equation} \label{TDNN-bias}
		\sup_{\bx \in \mathcal{X}}  \big|  \alpha_1 \mathbb{E} g_{(1)} (\bz_1, \cdots, \bz_{s_1}) + \alpha_2 \mathbb{E} g_{(1)} (\bz_1, \cdots, \bz_{s_2}) - \mathring{\mu}(\bx)  \big| \leq C s^{- \min\{3,\, 4/d\} },
	\end{equation}
	where $C$ is a constant depending on the underlying density function $f(\cdot)$ of $\bx_t$,  the mean utility function $\mathring{\mu} (\cdot)$ and the dimensionality $d$. 
	
	Moreover, we have for the variance term that 
	\begin{equation*}
		\begin{aligned}
			&  \sup_{\bx \in \mathcal{X}}  \big| \hat\mu_k^{\TDNN}(\bx; s_1, s_2) - \alpha_1 \mathbb{E} g_{(1)} (\bz_1, \cdots, \bz_{s_1})  - \alpha_2 \mathbb{E} g_{(1)} (\bz_1, \cdots, \bz_{s_2}) \big| \\
			& \leq |\alpha_1|  \sup_{\bx \in \mathcal{X}}  \big| \hat\mu_k^{\DNN}(\bx; s_1) -   \mathbb{E} g_{(1)} (\bz_1, \cdots, \bz_{s_1}) \big|   + |\alpha_2 |  \sup_{\bx \in \mathcal{X}}  \big| \hat\mu_k^{\DNN}(\bx; s_2) -   \mathbb{E} g_{(1)} (\bz_1, \cdots, \bz_{s_2}) \big|.
		\end{aligned}
	\end{equation*}
	Therefore, it follows from \eqref{eq_DNN_variance} that with probability $1 - 4 \delta$,
	\begin{equation} \label{TDNN-var}
		\begin{aligned}
			& \sup_{\bx \in \mathcal{X}}  \big| \hat\mu_k^{\TDNN}(\bx; s_1, s_2) - \alpha_1 \mathbb{E} g_{(1)} (\bz_1, \cdots, \bz_{s_1})  - \alpha_2 \mathbb{E} g_{(1)} (\bz_1, \cdots, \bz_{s_2}) \big| \\
			& \leq  B \big(|\alpha_1| \sqrt {s_1} + |\alpha_2| \sqrt {s_2} \big) \sqrt{ \frac{2 [\log \delta^{-1}  + \log d + d \log n ] } {n}} \\
			& \leq C_d B \sqrt{ \frac{2 s_1 [\log \delta^{-1}  + \log d + d \log n ] } {n}} 
		\end{aligned}
	\end{equation}
	where $C_d$ is a constant depending on the dimensionality $d$ and the ratio of $s_1/s_2$. Substituting \eqref{TDNN-bias} and \eqref{TDNN-var} into \eqref{decomp-TDNN} yields the desired result \eqref{re-TDNN} for TDNN estimator. The proof of Theorem \ref{le-DNN-TDNN} is completed.
\end{appendices}

\end{document}